\newcommand{\neigh}{{\rm neigh}}
\newcommand{\CC}{{\mathbb C}}
\DeclareMathOperator{\ran}{ran}
\DeclareMathOperator{\Wr}{Wr}
\title[WKB structure in a scalar model of flat bands]{WKB structure in a scalar model of flat bands}
\author{Semyon Dyatlov}
\email{dyatlov@math.mit.edu}
\address{Department of Mathematics, Massachusetts Institute of Technology}
\author{Henry Zeng} 
\email{henryzeng@berkeley.edu}
\address{Department of Mathematics, University of California, Berkeley, CA 94720}
\author{Maciej Zworski} 
\email{zworski@berkeley.edu}
\address{Department of Mathematics, University of California, Berkeley, CA 94720}
\begin{document}

\begin{abstract}
  We consider a family of periodic scalar operators for which one can define {\em flat bands} in the sense of Floquet--Bloch
theory. One puzzling question originating in recent physics literature is a quantisation rule for the values of parameters at which these flat bands occur. We present a general theorem about the structure of solutions to the corresponding equation
and a heuristic argument explaining their WKB structure in a specific case. That structure also explains the quantisation condition -- both the WKB structure and that rule are confirmed by  numerical experiments. Finally, we consider a simplified model in which separation of variables allows the use of complex WKB methods. 
\end{abstract}

\maketitle

\section{Introduction}
\label{s:int}

Topologically nontrivial flat bands (in the sense of Bloch--Floquet theory) are of interest in condensed matter physics as
their presence has remarkable experimental consequences -- see \cite{BM11}, \cite{magic} for theoretical references and 
\cite{CFF18} for experimental results in the context of twisted bilayer graphene.

This paper is concerned with flat bands and the structure of {solutions to 
$ P u = 0 $ for a class of periodic operators, $ P $,} on $ \mathbb R^2 $.
They are of the form 
\begin{equation}
\label{eq:scalar0}   P ( \alpha , k ) =   ( 2 D_{\bar z } + k )^2 - \alpha^2 V ( z  ) ,   \ \  2 D_{\bar z } = \tfrac{1}i ( \partial_{x_1}  + i \partial_{x_2}) ,  \end{equation}
where $  x = ( x_1 , x_2 )$, $  \alpha, k \in \mathbb C $, and $ V $ is a real analytic, $ \Lambda$-periodic  potential 
satisfying additional symmetry assumptions -- see \eqref{eq:defV}. {The study of solutions to $ P ( \alpha, k ) u = 0 $ is equivalent to the study 
of eigenfunctions} of a system 
\begin{equation}
\label{eq:syst0} 
\begin{gathered}   P ( \alpha, k ) u_0 = 0 \ \Longleftrightarrow \  ( D_{\rm{S}} ( \alpha ) + k ) u = 0 , 
\ \ \ \alpha \neq 0 , \\  D_{\rm{S}}  (\alpha ) := 
\begin{pmatrix} 2 D_{\bar z } & \alpha V ( z   ) \\
  \alpha & 2 D_{\bar z } \end{pmatrix} , \ \ \ u = \begin{pmatrix}  \alpha^{-1} ( 2 D_{\bar z } + k )
  u_0 \\ \ \ -   u_0 \end{pmatrix} . 
\end{gathered} \end{equation} 
Flat bands (see \eqref{eq:defEjk} below) occur at $ \alpha $ if 
\begin{equation}
\label{eq:defflat}   \Spec_{ L^2 ( \mathbb C/\Lambda ) } D_{\rm{S}} ( \alpha ) = \mathbb C . \end{equation}
{When $ V $ satisfies certain symmetries, see \eqref{eq:defV} below, 
we have nontrivial solutions to} 
\begin{equation}
\label{eq:defpro}  D_{\rm{S}} ( \alpha ) v ( \alpha ) = 0 , \end{equation}
for all $ \alpha $. They are called {\em (symmetry) protected states} as their existence is due to symmetries of the operator.
 We are interested in the distribution of $ \alpha $'s for which \eqref{eq:defflat} holds and in the 
closely related WKB structure (as $ \alpha \to \infty $) of the protected states. 

\begin{figure}
\includegraphics[width=10cm]{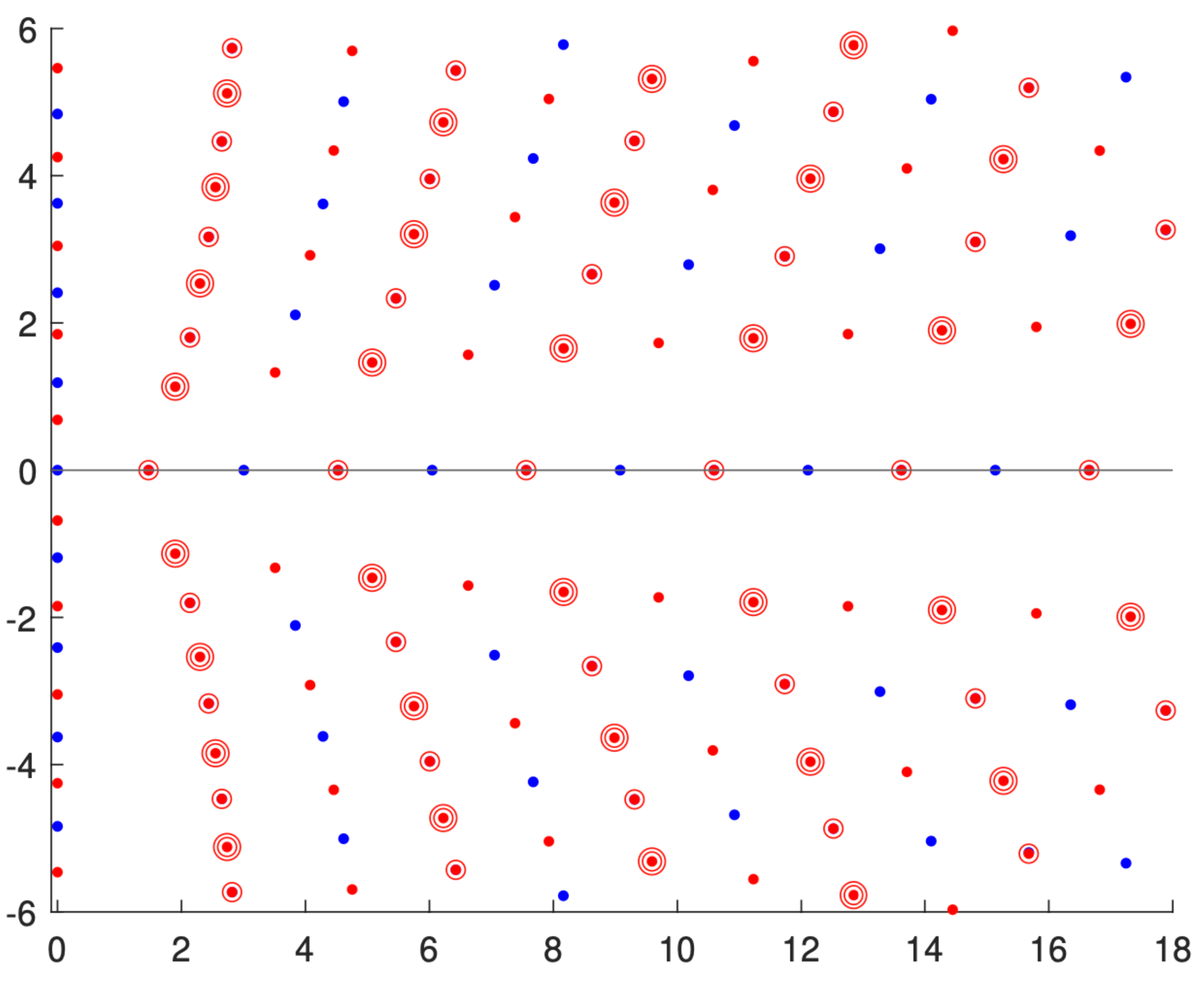}
\caption{\label{f:rot} The sets  $ \mathcal A $ ({\color{red} red}) and $  {\mathcal B } $ ({\color{blue} blue}) of Theorem \ref{t:1}  for the scalar model based on the 
Bistritzer--MacDonald potential \eqref{eq:defU2}. The multiple circles indicate multiplicity (in the sense of Theorem \ref{t:1}, $  m=1,2,3 $ all occur).
A movie showing the evolution of the bands $ k \mapsto E_j ( \alpha, k ) $, $ j =1, 2 $ (over the real axis in $ k $) 
as $ \alpha $ varies can be see at \url{https://math.berkeley.edu/~zworski/bands_scalar.mp4}. It illustrates the behaviour
of bands at real elements of $ \mathcal A \cup \mathcal B $.
}
\end{figure}

Because the properties of $ D_S ( \alpha ) $ in \eqref{eq:syst0} are described using a scalar operator 
\eqref{eq:scalar0}, we refer to this model as {\em a scalar model} of magic angles/flat bands. It was introduced in \cite{gaz4} with the hope of having a simpler setting for seeing the distribution of magic angles.
That model is derived from the the chiral model of twisted bilayer graphene (TBG) \cite{magic}, \cite{beta}, 
\cite{bhz2}, see \S \ref{s:chiral}.

The self-adjoint Hamiltonian associated to $ D_{\rm{S}} ( \alpha ) $ is given by 
\[   H ( \alpha ) := \begin{pmatrix} 0 & D_{\rm{S}} ( \alpha ) \\
D_{\rm{S}} ( \alpha )^* & 0 \end{pmatrix} , \]
and the bands are the eigenvalues of $ e^{ - i \langle z , k \rangle } H ( \alpha ) e^{ i \langle z, k \rangle } :
H^1 ( \mathbb C /\Lambda ; \mathbb C^4  ) \to L^2 ( \mathbb C/\Lambda, \mathbb C^4  ) $, 
$ z , k \in \mathbb C $, considered as functions of $ k $
(they are $ \pm $ the singular values of $D_{\rm{S}}(\alpha)+k$): 
\begin{equation}
\label{eq:defEjk}
  \cdots \leq - E_j ( \alpha, k ) \leq \cdots \leq - E_{1 } ( \alpha, k ) \leq E_{ 1} ( \alpha, k ) \leq \cdots \leq E_j ( \alpha, k ) \leq \cdots  
 \end{equation} 
(For the general theory of Bloch--Floquet band spectrum see \cite[\S 5.3]{notes} and for an introduction in this context 
\cite[\S 3.1,3.2]{survey}.) A flat band occurs at $\alpha$ such that $ E_1 ( \alpha, k ) \equiv 0 $ and that is equivalent to \eqref{eq:defflat}.

Local solutions to $ D_{\rm{S}} ( \alpha ) u = 0 $ (or equivalently solutions to $ P ( \alpha, 0 ) u_0 = 0 $) are given by local 
holomorphic sections of a rank-2 holomorphic (topologically trivial) vector bundle over $ \mathbb C/\Lambda $. {(This is valid in greater generality but we concentrate on the case in which we know flat bands occur.)}
We denote it by $ \mathscr E ( \alpha ) $ and it is defined in \S \ref{s:rank2}. The first theorem describes the connection between the structure of this vector bundle and
the behaviour of bands:
\begin{theo}
\label{t:1} 
Suppose that $ E_j ( \alpha , k ) $ are defined in \eqref{eq:defEjk} for $ D_{\rm{S}} ( \alpha ) $ with $ V $ satisfying
\eqref{eq:defV}. 
There exist discrete sets $ \mathcal A , \mathcal B \subset \mathbb C $ such that we have the following trichotomy:
\[ \text{$ \alpha \notin \mathcal A \cup \mathcal B   \Longleftrightarrow  \mathscr E  ( \alpha ) \text{ is indecomposable }   \Longleftrightarrow 
E_{\pm 1 } ( \alpha  , k ) \simeq \pm c |k|^2  $;}\]
\[ \text{$ \alpha \in  \mathcal B   \Longleftrightarrow \mathscr E ( \alpha )  = L_0 \oplus L_1,  L_j \text{ trivial} 
  \Longleftrightarrow   E_{\pm j} ( \alpha  , k ) \simeq \pm c_j |k|, \ j =1,2  $,  $ E_3 ( \alpha, k ) > 0 $;}  \]
\[ \text{$ \alpha \in  \mathcal A  \Longleftrightarrow \mathscr E ( \alpha ) = L \oplus L^* ,  $ {\rm{deg}}\! L = m $
  \Longleftrightarrow   E_j ( \alpha  , k ) \equiv 0 , \ |j| \leq m  $, $ E_{m+1} ( \alpha, k ) >  0 $,} \]
 where $ L_j $ and $ L$ are line bundles. 
 \end{theo}
We refer to $ m $ in the third statement of the theorem as the {\em multiplicity} of a magic $ \alpha \in \mathcal A $. 
Although the model we consider does not seem to be physically relevant, Li--Yang \cite{LY} proved that the same trichotomy 
appears in the case of twisted multiple layers of graphene -- see \cite{yang} for a mathematical presentation and references to the physics literature. The fact that double Dirac points {(that is, points $k $ at which the bands 
$ k \mapsto E_j ( \alpha, k ) $ in \eqref{eq:defEjk} meet conically, as in 
for $ k= 0$
the case of $ \alpha \in \mathcal B $ in Theorem \ref{t:1})}
meet 
can occur at a discrete set of angles was not present in the physics literature. The computations in \cite{LY} suggest that Dirac points should occur when the angle of twisting between two bilayer graphene wafers is close to 0.44${}^\circ$.

\subsection{Quantisation condition for real magic $ \alpha$'s}
One of the most striking observations made in \cite{magic} was a quantisation rule for 
real magic alphas in the chiral model (the definition is the same as \eqref{eq:defflat} but with 
the operator \eqref{eq:defD})  with the Bistritzer--MacDonald potential
\eqref{eq:defU}: if $ \alpha_1 < \alpha_2 < \cdots \alpha_j < \cdots $ is the 
 sequence of positive $ \alpha$'s for which \eqref{eq:flat} holds, then 
 \begin{equation}
 \label{eq:quant}  \alpha_{j+1} - \alpha_j  = \gamma + o ( 1 ) , \ \ j \to + \infty , \ \ \gamma \approx \tfrac32.
 \end{equation}
The more accurate computations made in \cite{beta} suggest that 
$  \gamma \approx 1.515 $. 
In the scalar model \eqref{eq:scalar} with $ V( z ) = U_{\rm{BM}} ( z ) U_{\rm{BM}}  ( -z ) $ where 
$ U_{\rm{BM}}  $ is given by \eqref{eq:defU} we numerically observe the following rule for real elements of $ \mathcal  A$:
real magic $ \alpha$'s are double (in the sense of Theorem \ref{t:1}), $ \alpha_{2j-1} = \alpha_{2j} $, $j = 1, 2 , \cdots $, 
\begin{equation}
 \label{eq:quants}  \alpha_{j+1} - \alpha_{j-1}  = 2 \gamma + o ( 1 ) , \ \ j \to + \infty ,  \end{equation}
where $ \gamma $ is the same as in \eqref{eq:quant}.  This "doubling" phenomenon in the passage from 
chiral to scalar models occurs for other potentials as well.

In this paper we discuss the connection between the WKB structure of solutions to \eqref{eq:defpro} and the distribution 
of $ \alpha$'s in the scalar model. In \S \ref{s:mechzero} we present the mechanism for zero creation in protected states 
$ v ( \alpha ) $ for the Bistritzer--MacDonald potential. That is related to the distribution of magic alphas -- see \S \ref{s:zeros}.
In \S \ref{s:Henry} we discuss the potential 
\begin{equation}
\label{eq:defHpot}  U ( z ) := i \overline{ U_{\rm{BM}}  ( z ) }^2 , \ \ \   V ( z ) := U ( z ) U ( -z )  ,
\end{equation}
for which the structure of the equation \eqref{eq:scalar0}  simplifies significantly and for which we have \eqref{eq:quant}
with $ \gamma = \frac14 $.
That is presented in the next theorem though we stress that we do {\em not} provide a full proof in this paper. At this point we are only able to provide a heuristic WKB argument supported by numerical evidence. Hence the next theorem is only a theorem in the ``physical" sense.
\begin{theo}
\label{t:2} 
The potential $ V $ is given in \eqref{eq:defHpot}  satisfies \eqref{eq:defV} and hence defines a set $ \mathcal A $ of
Theorem \ref{t:1}. 
If $ \mathcal A \cap \mathbb R_+  = \{ \alpha_j \}_{ j=1}^\infty $, $ \alpha_{j+1} \geq
\alpha_j $ then each $ \alpha_{2j-1}  = \alpha_{2j } $, $ j = 1, 2, \cdots $ (that is, each real $ \alpha $ has multiplicity $ 2 $) and 
\begin{equation}
\label{eq:deltal}    \alpha_{k} - \alpha_{k-2}  = \tfrac14 + o ( 1) , \  \ \ k \to \infty . \end{equation}
\end{theo} 
Numerically computed $ \alpha_j $'s for the scalar model based on \eqref{eq:defHpot} are 
\begin{center}
\begin{tabular}{rclcc}
\multicolumn{1}{c}{$k$} & &
\multicolumn{1}{c}{$\alpha_k$} &
& $\alpha_{k}-\alpha_{k-2}$ \\[2pt] \hline
1  && 0.1395 &&               \\
3  && 0.3803 && 0.2407              \\
5  && 0.6281 && 0.2478              \\
7  && 0.8772 && 0.2490              \\
9  && 1.1267 && 0.2494              \\
11 && 1.3764 && 0.2496              \\
13 && 1.6262 && 0.2497              \\
15 && 1.8760 && 0.2498              \\
17 && 2.1259 && 0.2498              \\
19 && 2.3758 && 0.2498              \\
\end{tabular}
\end{center} 
This shows that the heuristic argument for \eqref{eq:deltal} is confirmed numerically. The magic alphas for 
the scalar model and the potential $ V $ in \eqref{eq:defHpot} are shown in Figure \ref{f:magicH}. They have the 
same structure (on a different scale) as the magic alphas for the Bistritzer--MacDonald potential shows (in red) in 
Figure \ref{f:rot}. We also remark that for the chiral model with the potential $ U $ in \eqref{eq:defD} we have an 
analogue of \eqref{eq:quant} and \eqref{eq:quants} with $ \gamma = \frac18$: the spacing is halved and the multiplicities are simple. 

\begin{figure}
\includegraphics[width=12cm]{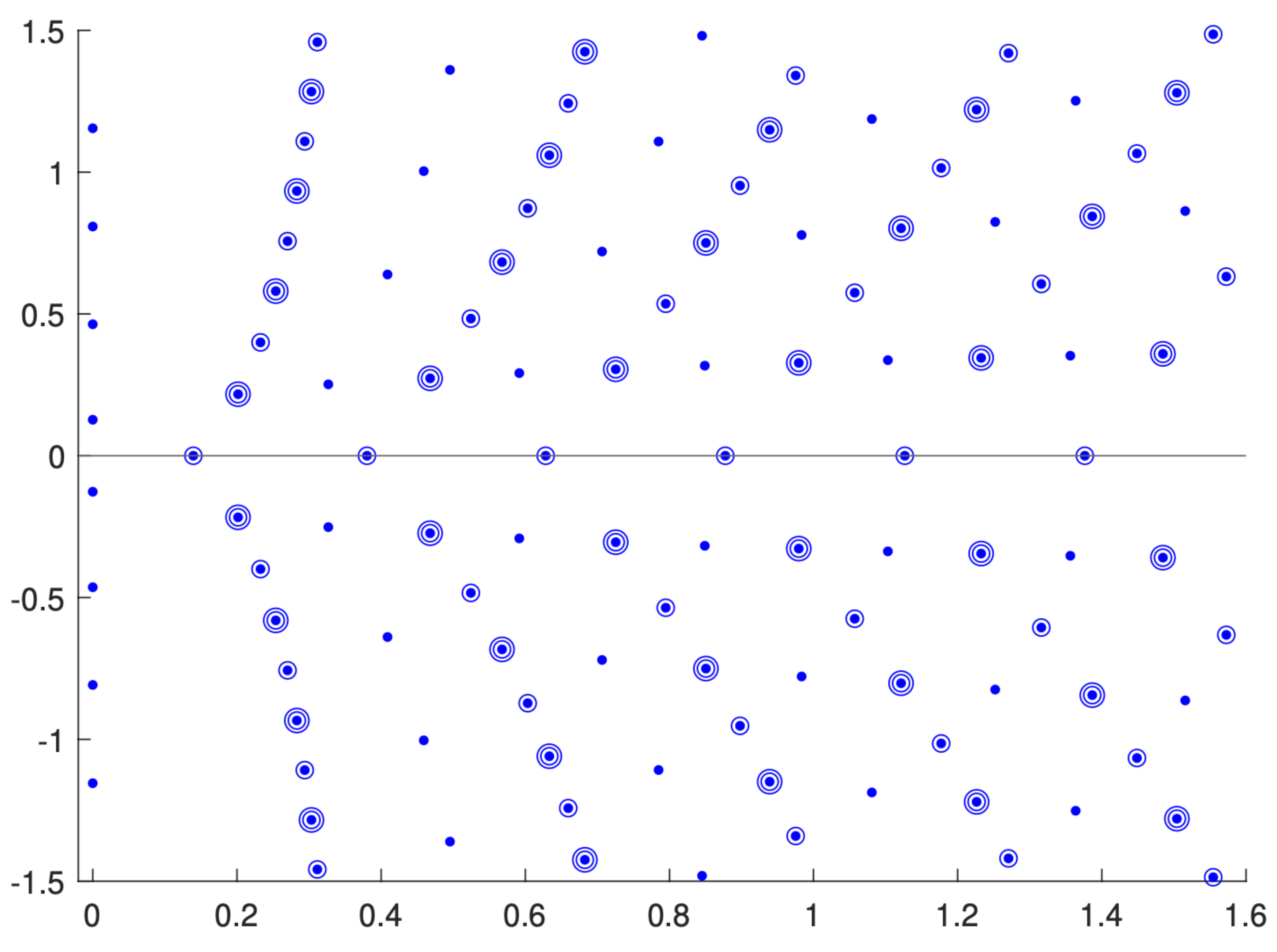}
\caption{\label{f:magicH} The sets $  {\mathcal A } $  for the scalar model with the potential $ V $ given by 
\eqref{eq:defHpot}. The multiple circles indicate multiplicity (in the sense of Theorem \ref{t:1}, $  m=1,2,3 $ all occur).
Except for different scaling the structure of the set is the same as for $ \mathcal A $ shown in red in Figure \ref{f:rot}.}
\end{figure}
\subsection{Nonvanishing potentials} 
We can study properties of solutions to $ P ( \alpha, k ) u = 0 $ for more general potentials. In that case, we do not expect
to have $ \alpha $'s for which \eqref{eq:defflat} holds but we could ask for which $ \alpha$'s and $ k $'s we have 
a non-trivial kernel of $ P ( \alpha, k ) $. One case to consider is that of 
\begin{equation}
\label{eq:V2W2}  V=W^2 \neq 0 ,
\end{equation}  as in that case the characteristic 
variety 
(that is the set of $ ( x, \xi ) \in T^* \mathbb C /\Lambda $ such that 
$ q ( x, \xi ) := ( 2 \bar \zeta )^2 - V (z ) = 0 $, $ \zeta =\frac12 ( \xi_1 - i \xi_2 ) $, $ z = x_1 + i x_2 $)
is a union of two disjoint tori. Those tori are not Lagrangian unless $ \Im \partial_z W \equiv 0 $.

Assuming \eqref{eq:V2W2} and that 
\begin{equation}
\label{eq:defW0} 
 W_0 := |\mathbb C/\Lambda |^{-1} \int_{\mathbb C/\Lambda} W ( z )  dm ( z ) \neq 0 , 
 \end{equation}
 we have the following {\em na\"{\i}ve} WKB argument for the existence of non-trivial kernel of $ P ( \alpha, k ) $:
 we construct  two global phase functions by solving
\[ 2 \partial_{\bar z } \varphi + k  = \pm \alpha W ( z ),  \  \    \varphi (z ) =  \pm ( \langle \alpha W_0, z \rangle + 
\alpha \psi  ) - \langle k , z \rangle , \ \  \psi ( z + \gamma ) = \psi ( z )  . \]
(Here $ \langle k , z \rangle = \Re (k \bar z ) $.) 
A  {\em na\"{\i}ve} quantisation condition is the given by 
\begin{equation}
\label{eq:naive}   \forall\, \gamma \in \Lambda,  \  \varphi ( z + \gamma ) \equiv \varphi( z )  \!\!\!\! \mod \! 2 \pi \mathbb Z
\  {\Longleftrightarrow} \  \alpha \in W_0^{-1} (  \Lambda^* \pm k )  . \end{equation}
It is not clear under what assumptions this gives an approximate values of $ k $ and $ \alpha $ for which 
we have a nontrivial solution to $ P ( \alpha, k ) u = 0 $. It certainly does not hold in general -- see Figure \ref{f:stokes}. We remark that for non-self-adjoint operators for which 
the characteristic variety is a {\em single} torus, close in some sense to a Lagrangian torus, Melin and Sj\"ostrand 
\cite{mess} obtained Bohr--Sommerfeld quantisation rules for the spectrum. Their methods do not seem to be applicable to our case. 

\begin{figure}
\centering
\includegraphics[width=13cm]{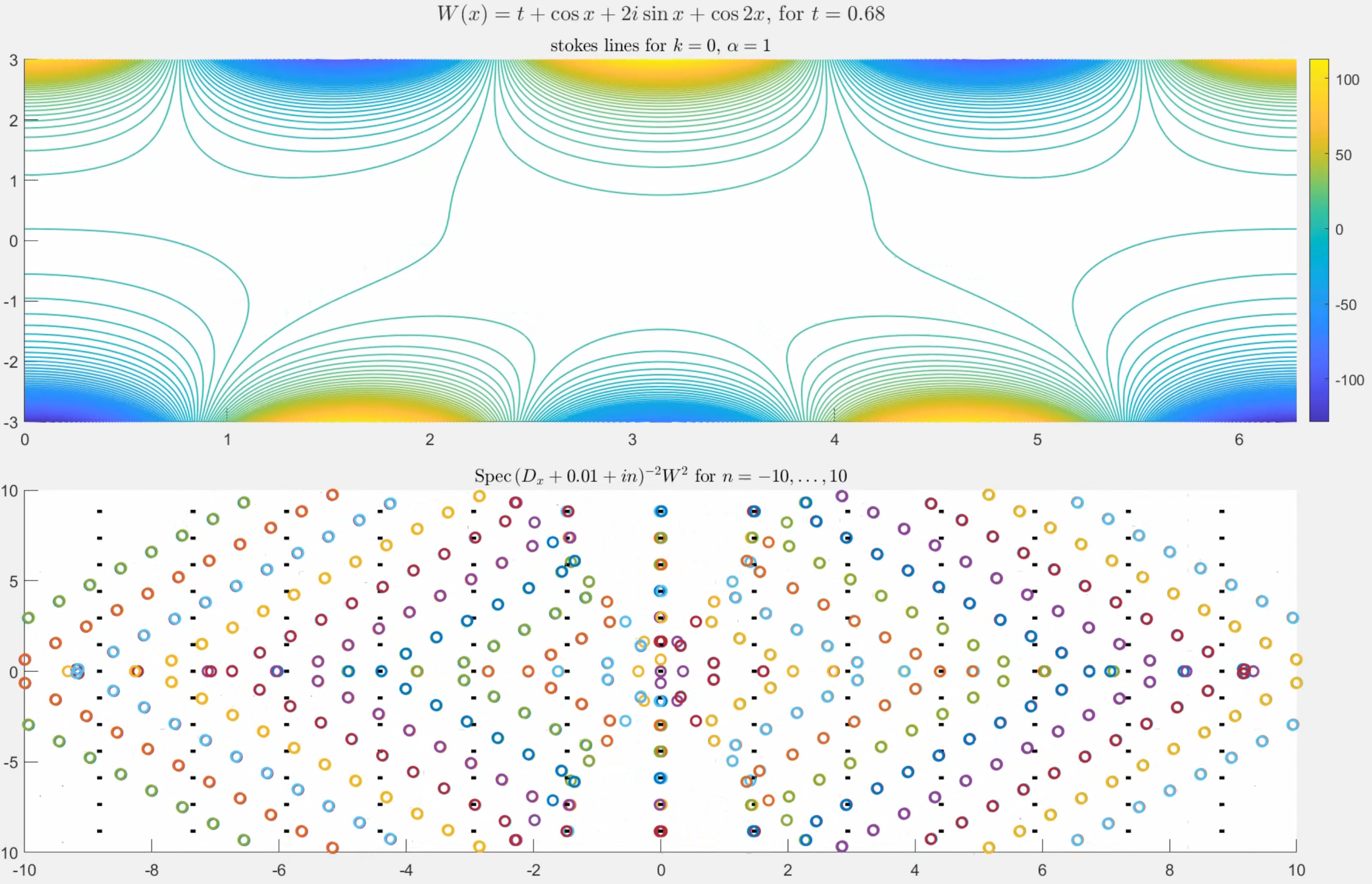} \includegraphics[width=13cm]{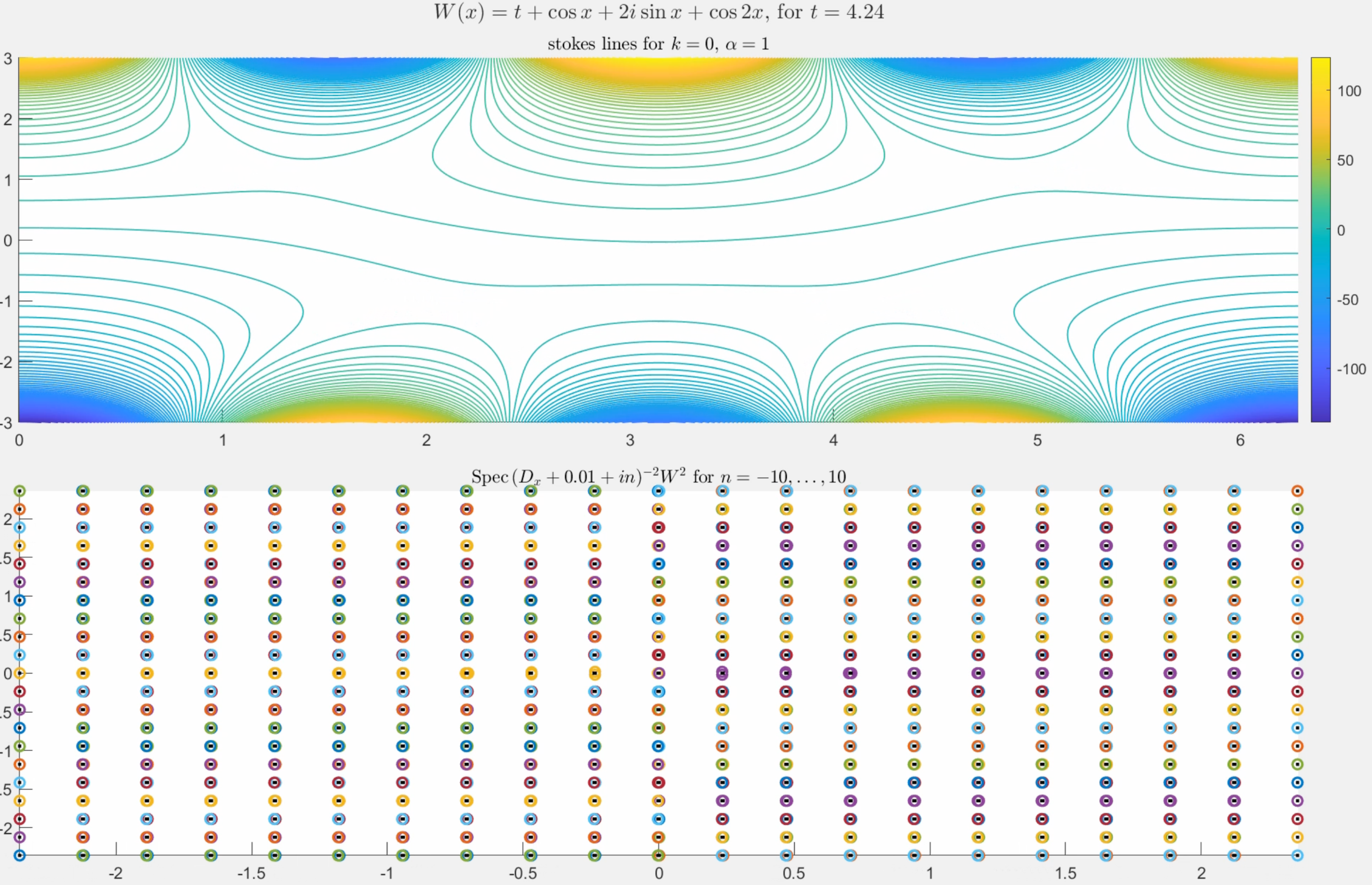}
\caption{\label{f:stokes} The bottom panels show sets of $ \alpha $'s (circles; coloured depending on the momentum variable of $ y $) for which $ ( D_x + i D_y )^2 - \alpha^2 W ( x ) $, $ ( x, y ) \in 
\mathbb R^2/2 \pi  \mathbb Z ^2 $ has a nontrival kernel. The top panels show curves on which 
$ \Im( W ( \gamma ( t ) ) \gamma' ( t ) )= 0 $ (Stokes lines). The first $ W $ satisfies \eqref{eq:W0} but does not have 
Stokes loops \eqref{eq:stol}; on the second $W $ has Stokes loops. The quantisation rule 
\eqref{eq:naive} (shown by black dots) does not  apply on the top and is very accurate on the bottom. For an animated version see \url{https://math.berkeley.edu/~zworski/stokes.mp4}.}
\end{figure}

In view of these complications we consider a yet simpler model in which one can use separation of variables and complex WKB. For that we take $ \Lambda : =  \mathbb Z^2 $ and $ W ( x, y ) = W ( x ) $, $ ( x, y ) \in \mathbb R^2 $. Taking the Fourier modes in $
 y $ we obtain a family of operators depending on $ \zeta = 2 \pi m i  + k $, where $ m \in \mathbb Z $ is a Fourier mode in $ y $ 
 and $ k $ is in the fundamental domain of the dual lattice {$ 2 \pi (\mathbb Z + i \mathbb Z)   $}:
 \begin{equation}
 \label{eq:Pzet}  P_\zeta  ( \alpha ) :=  ( D_x + \zeta )^2 - \alpha^2 W ( x )^2 . \end{equation}
 We assume that $ W ( x ) $ is an entire function satisfying $ W ( x + 1 ) = W (x) $.  In that case methods of complex WKB can be applied 
 but they normally require precise properties of a given $ W ( x ) $, typically a trigonometric polynomial -- see for instance 
 the work of Galtsev--Shafarevich \cite{gash} and references given there. For a brief account
 of complex WKB method and additional references we suggest \cite[Chapter 7]{SjBook}. Here we present a general result which gives 
  a condition under which the quantisation rule \eqref{eq:naive} holds:

\begin{theo}
\label{t:3} Suppose that  $ W ( x ) $ in an entire function satisfying $ W ( x + 1 ) = W ( x ) $. 
 Put $W_0:=\int_0^1W(x)\,dx$ and assume that $W_0\neq 0$.
If 
there exists a closed smooth non-self-intersecting curve $ {\gamma} : [ 0 , 1 ] \to \mathbb C/ \mathbb Z $, $ \gamma' ( t ) \neq 0 $, 
such that 
\begin{equation}
\label{eq:stol}  \Im  {\gamma}' ( t )  W_0^{-1} W ( {\gamma} ( t) ) =0 , \ \  W ( \gamma ( t ) ) \neq 0 , \ \ t \in [ 0, 1 ] ,
 \end{equation}
then for $ | \alpha | \gg 1$ and $ | \Re \zeta | \leq \pi $, 
\[  \ker ( (D_x+\zeta )^2 - \alpha^2 W ( x ) ) \neq \{ 0 \}  \  \Longleftrightarrow \ 
\exists \, n \in \mathbb Z \  \alpha = W_0^{-1} ( 2 \pi n \pm \zeta ) + 
 \mathcal 
O ( |\alpha|^{-1} ) . \]
\end{theo} 

The restriction on $ |\Re \zeta  | $ is not essential as for $ \ell \in \mathbb Z $, $ P_{\zeta + 2 \pi \ell }  =
e^{ - 2 \pi i \ell x }  P_{\zeta} e^{ 2 \pi i \ell x } $. 
The proof shares many features with standard complex WKB methods but we are asking different questions and allow for a very large class of potentials.  A more precise version is given in Theorem \ref{t:4}. An illustration of the result is given in Figure \ref{f:stokes}. One reason for presenting this case, much simpler than that considered in 
the heuristic Theorem \ref{t:2}, is to indicate the potential complications for operators involving the potentials 
$ U_{\rm{BM}} $ and $ i \overline{U_{\rm{BM}} }^2  $ for which flat bands are present.

We call such $ \gamma$'s satisfying \eqref{eq:stol} {\em Stokes loops}. A condition on $ W $ which guarantees their existence is given in Proposition \ref{p:stokes-perturb}:  $ \int_0^1 W ( x ) dx = W_0 \neq 0 $,  and 
\begin{equation}
\label{eq:conW}     
W ( z ) \neq 0 ,  \ \   | \Im ( W_0^{-1} W( z ) ) | \leq \Re ( W_0^{-1} W ( z ) ) \ \ \ \text{ for $ | \Im z |\leq 1 $.} \end{equation}
For instance, if $ f ( z ) $ is entire and $ f ( z + 1 ) = f ( z ) $ and $ \int_0^1 f ( z) dz  = 0  $, then $ W ( z ) = M + f ( z ) $ will satisfy \eqref{eq:conW} for 
$ M \geq \sqrt 2 \max_{ |\Im z | \leq 1 } | f ( z) | $. 

{The simplest case in which the relevance of Stokes loops becomes apparent is
that of $ \zeta = 0 $ and $ \alpha \in \mathbb R $. A heuristic argument for this goes as follows. We first consider the case of 
$ W (x)> 0 $, $ x \in \mathbb R $. In that case, standard WKB
analysis gives
two solutions, $ u_\pm ( x, \alpha  ) =  \Psi( x, \alpha )^{-\frac12} e^{\pm  i \alpha \int_0^x \Psi ( y , \alpha ) dy } $, $ \Psi ( x, h ) \simeq 
W ( x ) + \sum_{1}^\infty \alpha^{-j} \Psi_j ( x ) $. The  
asymptotic condition for having a periodic solution is then 
$ \alpha \int_0^1 \Psi ( y , \alpha )dy = 2 \pi n $, $ n \in \mathbb Z $, which to leading order is the quantisation condition in Theorem \ref{t:3}. Now suppose that for a holomorphic $1$-periodic $ W$
we have a curve satisfying \eqref{eq:stol} and, to simplify the notation assume that $ W_0 > 0 $. Then on $ \gamma$ the
holomorphic equation $ (  D_z^2 - \alpha^2 W ( z ) ) u ( z ) = 0 $ becomes 
\[ \left(  D_t ^2 
- \alpha^2 ( \gamma' ( t ) W ( \gamma ( t ) ))^2  -
\tfrac12  \gamma''' ( t ) + \tfrac14 \gamma''(t )^2 \right) [ e^{-\gamma' ( t )/2 } u ( \gamma ( t ) ) ] = 0 . \]
(We note that $ e^{-\gamma' ( t )/2 } $ is periodic.) The only difference with the real case is the presence of lower order terms
($ \mathcal O (1) $ corresponding to $ \mathcal O(h^2) $ if
$ h := 1/\alpha $ is the semiclassical parameter) and this suggests that the same condition for periodicity under $ t \mapsto t + 1 $, that is, the quantization condition 
in Theorem \ref{t:3}, holds. Since holomorphy then gives
$ u ( z + 1 ) = u ( z )  $, we obtain a periodic solution for
$ x \in \mathbb R $. To make this argument rigorous and to handle
the case of general $ \zeta $ we proceed by a careful semiclassical analysis of a monodromy operator of solutions to a system equivalent to \eqref{eq:Pzet}.}

\medskip\noindent {\sc Notation.} We write $ f = \mathcal O_ \rho (g)_H $ for $  \| f \|_H \leq  C_\rho g$, 
that is the constant depends on $ \rho $. When the context is clear either $ \rho $ or $ H$ 
may be dropped. We denote by $ \mathscr O ( U )$ holomorphic functions on an open set 
$ U \subset C^d $. If $ z_0 \in U $ where $ U $ is a topological space we write
$ \neigh_U (z_0 ) $ for an open neighbourhood of $ z_0 $ in $ U $.

\section{Review of the scalar model}

In this section we review the properties of the scalar model, starting with the chiral model from which it is derived.

\subsection{Chiral model}
\label{s:chiral}

The scalar model is derived from the the chiral model of twisted bilayer graphene (TBG) \cite{magic}, \cite{beta}, 
\cite{bhz2}:

\begin{equation}
\label{eq:defD}   
\begin{gathered} Q ( \alpha, k ) := D ( \alpha ) + k , \ \  
D ( \alpha ) := \begin{pmatrix} 2 D_{ \bar z } & \alpha U ( z ) \\
\alpha U ( - z ) & 2 D_{\bar z } \end{pmatrix} ,
\ \ \ \Omega  = \mathbb C , 
 \\  2D_{\bar z } = \tfrac 1 i ( \partial_{{x_1}} + i \partial_{x_2} ) , \ \
z = {x_1} + i x_2 \in \mathbb C , 
\end{gathered}
\end{equation}
where $ U $ satisfies 
\begin{equation}
\label{eq:defU}
\begin{gathered}   U ( z + \gamma ) = e^{ i \langle \gamma , K  \rangle } U ( z ) , \ \ U ( \omega z ) = \omega U(z) , \ \ \overline {U ( \bar z ) } = - U ( - z ) , \ \ \ 
\omega = e^{ 2 \pi i/3},  \ \\ \gamma \in \Lambda := \omega \mathbb Z \oplus \mathbb Z , \ \ K = \tfrac43 \pi, \ \ \langle z , w \rangle := \Re ( z \bar w ) . 
\end{gathered} 
\end{equation}
The constant $ K$ is determined up to sign by 
\[ \omega K \equiv K \not \equiv 0 \!\!\! \mod \Lambda^* , \ \ \
\Lambda^* :=  \frac {4 \pi i}  {\sqrt 3}  \Lambda , \] 
that it is a high symmetry point with respect to the dual lattice. Although the chiral model is not considered accurate for large $ \alpha $ (a dimensionless constant proportional the reciprocal of the angle of twisting), it remains popular in physics and its mathematical properties remain puzzling -- see \cite{survey}.

The principal example of $ U $ is given by the Bistritzer--MacDonald potential
\begin{equation}
\label{eq:defU2}
U_{\rm{BM}}  ( z ) =  - \tfrac{4} 3 \pi i \sum_{ \ell = 0 }^2 \omega^\ell e^{ i \langle z , \omega^\ell K \rangle }.
\end{equation}
It is, up to a real multiplicative constant, the unique potential satisfying \eqref{eq:defU} and having the lowest Fourier modes.

The scalar model is defined as follows: 
\begin{equation}
\label{eq:scalar}
P ( \alpha , k ): = ( 2 D_{\bar z } + k )^2 - \alpha^2 V ( z ) , \ \ \  V ( z ) := U ( z ) U (-z ) . 
\end{equation} 
It is derived from the chiral model by neglecting lower order terms, as $ \alpha \to \infty $, in the principally scalar reduction:
\[ \begin{gathered}  ( D ( - \alpha ) + k) ( D ( \alpha ) + k ) = P ( \alpha , k ) I_{\mathbb C^2 } + \alpha R_k ( \alpha ) , \\
R_k ( \alpha ) := \begin{pmatrix}   k \alpha^{-1} 4 D_{\bar z } + k^2 \alpha^{-1} &
2 D_{\bar z } U (z ) \\
- ( 2 D_{\bar z } U ) ( -z ) &
 k \alpha^{-1} 4 D_{\bar z } + k^2 \alpha^{-1}\end{pmatrix} .
\end{gathered} \]
We refer to \cite[\S 2.2]{hizw} for the discussion of this reduction and a semiclassical explanation why 
$  \alpha R_k ( \alpha ) $ is a lower order term.

The potential $ V $ in \eqref{eq:scalar} inherits symmetries from \eqref{eq:defU} but we can formulate them independently as 
\begin{equation}
\label{eq:defV} 
V ( z ) = V ( -z ) , \ \  V ( z + \gamma ) = V ( z ) , \ \ V ( \omega z ) = \omega^2 V ( z ) , \ \  \overline{ V ( \bar z ) } = V ( z ) . 
\end{equation}

 Just as for the chiral model of 
TBG, a flat band at zero for a given $ \alpha$ means that 
\begin{equation}
\label{eq:flat}  
\alpha \in \mathcal A \ \Longleftrightarrow  \ 
\  \forall \, k \in \mathbb C \  \ker_{ H^2 ( 
\mathbb C / \Lambda; \mathbb C  )} P ( \alpha, k  ) \neq \{ 0 \} .
 \end{equation}
 We denote the set of $ \alpha $'s for which \eqref{eq:flat} holds by $ \mathcal A  $, the same notation 
 used in the chiral model \cite{beta}. It follows from \cite{gaz4} that
 $ \mathcal A $ is a discrete subset of $ \mathbb C $, and that 
 \[    \mathcal A  = -  \mathcal A = \overline{  \mathcal A  } . \]
It is shown, together with the magic $ \alpha$'s for the chiral model \eqref{eq:defD}, both with $ U $ given by \eqref{eq:defU2}, in Figure \ref{f:A}.
\begin{figure}
\centering
\includegraphics[width=12cm]{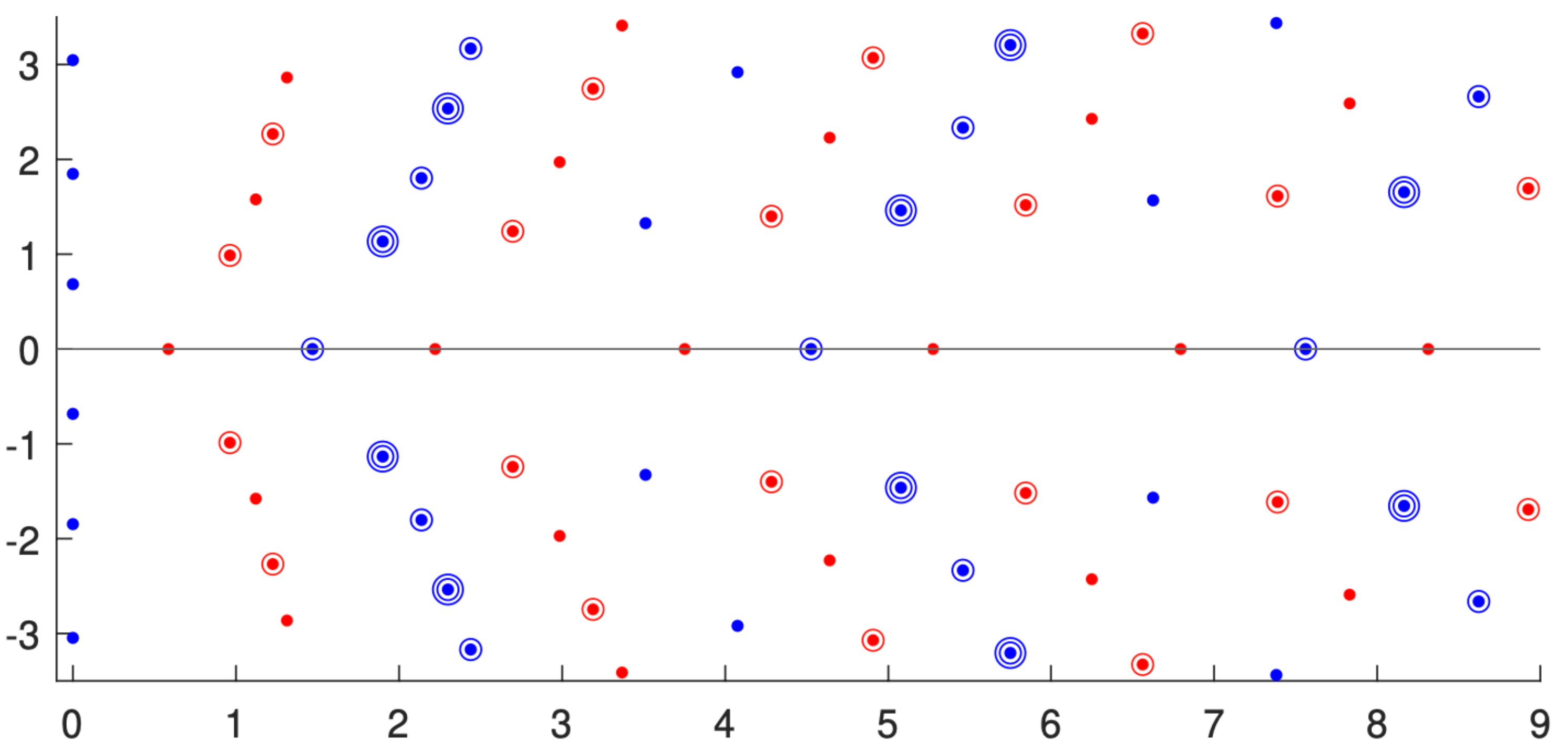}
\caption{\label{f:A} The sets $  {\mathcal A } $, of magic $ \alpha$'s  for the chiral model \eqref{eq:defD} 
(in {red}) 
and the scalar model \eqref{eq:scalar} (in {red}) with $U $ given by \eqref{eq:defU2}. Higher multiplicities are indicated by additional circles: in the scalar model the real $ \alpha$'s have multiplicity $2$. When we interpolate between the chiral model and the 
scalar model, the multiplicity two real $ \alpha$'s  split and travel in opposite directions to become
magic $ \alpha $'s for the chiral model: see \url{https://math.berkeley.edu/~zworski/Spec.mp4}, where the magic $\alpha$'s are plotted as a function of $t\in[0,1]$ which linearly interpolates between the chiral and scalar models.}
\end{figure}

We define multiplicity as follows:  if for $ \alpha \in \Omega $,  there is $k_0\in \mathbb{C}$ such that $\ker P(\alpha, k_0)=\{0\}$, then we define 
 \begin{equation}
\label{eq:mult}
m (\alpha, k ) := \frac{1}{ 2 \pi i } \tr \oint_{\partial D }   P( \alpha, \zeta ) ^{-1} 
\partial_\zeta P ( \alpha , \zeta ) d \zeta , 
\end{equation} 
where the integral is over the positively oriented boundary of a disc 
$ D $ which contains $ k$ as the
only possible pole of $ \zeta \mapsto P ( \alpha, \zeta ) $. 
Otherwise, we  put $ m ( \alpha, k ) =  
\infty $ for all $k\in \mathbb{C}$.

\subsection{Symmetries}
\label{s:sym}

To simplify notation, we write $ P ( k ) := P ( \alpha, k ) $, where $ P ( \alpha, k ) $ was defined in \eqref{eq:scalar0} with
$ V $ satisfying \eqref{eq:defV}.  

We define the following symmetries:
\[ \begin{gathered} \Omega u ( z ) := u ( \omega z ) , \ \ \ \mathscr L_\gamma u ( z ) = u ( z + \gamma ) , \ \ \ 
 \mathscr E_1 u ( z ) = u ( -z ) , \\\ 
 \mathscr A_1 u ( z ) = \overline{ u ( \bar z ) } , \ \ \ \  \mathscr R_1 u ( z ) = u ( \bar z ) ,
 \end{gathered}
 \]
with the following commutation properties with $ P ( k )  $: (here $ \gamma \in \Lambda $ defined in \eqref{eq:defU})
\begin{equation}
\label{eq:symk} 
\begin{gathered} \mathscr E_1 P (k ) = P (-k) \mathscr E_1, \ \ \
\mathscr L_\gamma P ( k )  = P ( k) \mathscr L_\gamma, \ \ \ P ( k ) \Omega = \omega \Omega P ( \omega k ) , \\
\mathscr A_1 P ( k )  = P( - \bar k )  \mathscr A_1,  \ \ \ 
\mathscr R_1P ( \alpha, k ) = P ( \bar \alpha, \bar k )^* \mathscr R_1 . 
\end{gathered} 
\end{equation}
It is useful to combine $ \mathscr E_1 $ and $ \Omega $:
\begin{equation}
\label{eq:defC}  \mathscr C_1 := \mathscr E_1 \Omega = \Omega \mathscr E_1 , \ \ \ 
P ( k ) \mathscr C_1  = - \tau \mathscr C_1 P ( \tau k ) , \ \ \ \mathscr C_1  u ( z ) = u ( \tau z ) , \ \ \ 
\tau := e^{ - \pi i / 3 } . 
\end{equation}
(Nothing is lost here since $ \mathscr C^3_1 = \mathscr E_1 $ and $ \mathscr E_1 \mathscr C_1  = \Omega $.)

We also note that for $ \mathbf a \in \Lambda^* $
\begin{equation}
    \label{eq:Pconj}
P ( k + \mathbf a ) = e^{- i \langle z , \mathbf a \rangle } P ( k ) e^{ i \langle z , \mathbf a \rangle }, \ \ 
u ( z )  \mapsto  e^{ i \langle z , \mathbf a \rangle } u ( z ) \text{ is unitary on $ L^2 ( \mathbb C/ \Lambda ) $.}\end{equation}
Using these symmetries, definition of the multiplicity \eqref{eq:mult} and \cite[Theorem 1]{gaz4} we obtain
\begin{equation}
\label{eq:mult1}  m ( \alpha, k ) = \left\{ \begin{array}{ll}   \ \ \infty & \alpha \in \mathcal  A , \\
2 \indic_{ \Lambda^* } ( k ) & \alpha \notin \mathcal  A . \end{array} \right. \end{equation}
In particular,
\[  \forall \alpha \in \mathbb C , \  \ker P ( \alpha, 0 ) \neq \{ 0 \} .\]

We now relate $ P ( k ) $ to the operator \eqref{eq:syst0}:
\begin{equation}
\label{eq:defDS}
D_{\rm{S}} = D_{\rm{S}} ( \alpha ) = \begin{pmatrix} 2 D_{\bar z} & \alpha V ( z ) \\
\alpha & 2 D_{\bar z } \end{pmatrix} 
: H^1 ( \mathbb C/\Lambda; \mathbb C^2 ) \to L^2 (\mathbb C/\Lambda ; \mathbb C^2 ) .
 \end{equation}
We have 
\begin{equation}
\label{eq:D2P}    ( D_{\rm{S}} + k )  \begin{pmatrix} 
u_1 \\ u_2  \end{pmatrix} = \begin{pmatrix} f \\ g \end{pmatrix} 
\ \Longleftrightarrow \ \left\{ \begin{array}{l} 
 \alpha u_1 = ( 2 D_{\bar z } + k ) u_2 - g,  \\ P ( k ) u_2 = \alpha f + ( 2 D_{\bar z } + k ) g .
\end{array} \right. 
\end{equation}
Hence, with $ \pi_1 ( ( u, v )^t ) = u $, $ \pi_2 ( ( u, v )^t ) = v $, $ \iota f := ( f , 0 )^t $, 
\begin{equation}
\label{eq:P2D2} 
\begin{gathered}  P ( k )^{-1} f = \alpha^{-1} \pi_2  ( D_{\rm{S}} + k )^{-1} \iota f, \ \ \ 
\partial_k P ( k ) P ( k )^{-1} f = 2 \pi_1 ( D_{\rm{S}} +  k )^{-1} \iota f , \\
( D_S + k)^{-1} = \begin{pmatrix}  \tfrac12 \partial_k  P ( k ) P (k)^{-1} & \tfrac12  \alpha^{-1} \partial_k  P ( k ) P(k)^{-1} \partial_k  P ( k ) - 
\alpha^{-1} \\
\alpha P(k)^{-1} & \tfrac12 P (k)^{-1}   \partial_k  P ( k )  \end{pmatrix} 
.
\end{gathered}
\end{equation}

For $ \alpha \notin \mathcal  A $,  \eqref{eq:P2D2} gives (with integrals is over a positively oriented circle around 
$ 0 $)
\begin{equation}
\label{eq:D2R} 
\begin{gathered}
\Pi ( \alpha ) := \frac{1}{ 2\pi i } \oint_0 ( D_S ( \alpha ) + \zeta )^{-1} d \zeta % ,
= \tfrac12  \begin{pmatrix}  a ( \alpha ) & b ( \alpha ) \\
c ( \alpha ) & d ( \alpha ) \end{pmatrix}, \\ a ( \alpha ) := \frac{1}{ 2 \pi i } \oint \partial_k P ( k ) 
P( k )^{-1} dk , \ \  d ( \alpha ) := \frac{1}{ 2 \pi i } \oint  P( k )^{-1} \partial_k P ( k ) dk . \end{gathered} 
 \end{equation}
Hence 
\begin{equation}
\label{eq:Pi2AD}  \tr \Pi ( \alpha ) = \tfrac12 ( \tr a( \alpha ) + \tr d ( \alpha ) ) = \tr a ( \alpha ) = m ( \alpha, 0 ) . 
\end{equation}
(The equality of the traces of $ a ( \alpha )  $ and $ d( \alpha ) $ follows from the Gohberg--Sigal theory -- see for instance \cite[Theorem C.11]{res}.) {We note that we have analogous formula for 
$ m ( \alpha, k ) $ by taking an integral around $ k$ in the definition of 
$ \Pi ( \alpha ) $.}

From \eqref{eq:mult} we conclude that 
\begin{equation}
\label{eq:D2R2}
 ( D_S ( \alpha ) + k )^{-1} = \frac{ \Pi( \alpha)  }k - \frac{D_S \Pi( \alpha ) } {k^2} + A ( k ) , 
 \end{equation}
and $ k \mapsto A ( k ) $ is holomorphic near $ k = 0 $.

We record the following symmetries of $ D_S $: the translation symmetry $ \mathscr L_\gamma D_{\rm{S}} = D_{\rm{S}} \mathscr L_\gamma $, and 
\begin{equation} 
\label{eq:defCS}
\mathscr C_2  \begin{pmatrix} u_1 \\ u_2 \end{pmatrix} := 
\begin{pmatrix} \bar \tau u_1 ( \tau \bullet ) \\
u_2 ( \tau \bullet ) \end{pmatrix}, \ \ \ 
\mathscr C_2 ( D_{\rm{S}} ( \alpha )  + k ) = \tau ( D_S ( \alpha ) + \bar \tau k ) \mathscr C_2 ,  \ \ \ 
\mathscr C_2  \Pi  = \Pi \mathscr C_2  , \end{equation}
\begin{equation*}
\mathscr R_2  
\begin{pmatrix} u_1  \\ u_2 \end{pmatrix} := 
\begin{pmatrix}   u_2 ( \bar \bullet )\\
u_1 ( \bar \bullet ) \end{pmatrix}, \ \ \  \mathscr R_2 (  D_{\rm{S}} (\alpha)  + k )  =  ( D_{\rm{S}} (\bar \alpha) + \bar k )^*  \mathscr R_2 , 
\end{equation*}
\begin{equation*}
\mathscr A_2  
\begin{pmatrix} u_1  \\ u_2    \end{pmatrix} := 
\begin{pmatrix}  -  \bar u_2 \\
\ \ \bar u_1  \end{pmatrix}, \ \ \  \mathscr A_2 (  D_{\rm{S}} (\alpha)  + k )  = 
- ( D_{\rm{S}} (\alpha) -  k )^*  \mathscr A_2 \end{equation*}
We note that 
\begin{equation}
\label{eq:C2R}  \mathscr R_2 \mathscr C_2  = \bar  \tau   \mathscr C_2^{-1}  \mathscr R_2 , 
\ \ \  \ \mathscr A_2 \mathscr C_2  = \tau  \mathscr C_2 \mathscr A_2 .
\end{equation}

A self-adjoint operator associated to $ D_{\rm{S} }  $ is given by 
\[  H_{\rm S } ( \alpha, k )  := \begin{pmatrix} 0 & D_{\rm{S}} (\alpha)  + k  \\
( D_{\rm{S} } ( \alpha ) + k )^* & 0 \end{pmatrix} : 
H^1 ( \mathbb C/\Lambda; \mathbb C^4 ) \to L^2  ( \mathbb C/\Lambda; \mathbb C^4 ) \]
It enjoys the standard chiral symmetry
\begin{equation}
\label{eq:defW}   \mathscr W H (\alpha,  k ) \mathscr W = - H ( \alpha, k ) , \ \ \  \mathscr W \begin{pmatrix}   \mathbf u \\ \mathbf v \end{pmatrix}
:= \begin{pmatrix}  -  \mathbf u \\ \ \  \mathbf v \end{pmatrix} , 
\ \ \  \mathbf u, \mathbf v \in L^2 ( \mathbb C/\Lambda ; \mathbb C^2 ). 
\end{equation}
The symmetries of $ P ( k ) $ give us also 
\begin{equation}
\label{eq:defC2}  \mathscr C_4 H ( \alpha,  k ) = H ( \alpha, \bar \tau k ) \mathscr C_4 , \ \ \  \mathscr C_4  \begin{pmatrix}  \mathbf
u \\ \mathbf v  \end{pmatrix}
:= \begin{pmatrix}  \ \mathscr C_2  \mathbf u \\  \tau \mathscr C_2 \mathbf v \end{pmatrix} , \ \ \ 
\mathscr C_4 \mathscr W = \mathscr W  \mathscr C_4. \end{equation}
Finally, 
\begin{equation}
\label{eq:defR2}  \mathscr R_4 H (\alpha ,  k ) = H ( \bar \alpha ,   \bar k ) \mathscr R_4, \ \ \
\mathscr R_4 \begin{pmatrix}  \mathbf  u \\ \mathbf  v  \end{pmatrix}
:= \begin{pmatrix}   \mathscr R_2 \mathbf  v  \\ \mathscr R_2 \mathbf  u  \end{pmatrix} ,  \ \ \ 
\mathscr C_4 \mathscr R_4  =  \mathscr R_4 \mathscr C_4^{-1} ,
 \end{equation}
 \begin{equation}
\label{eq:defA2} \mathscr A_4 H (\alpha ,  k ) = - H ( \alpha ,   - k ) \mathscr A_4, \ \ \
\mathscr A_4 \begin{pmatrix}  \mathbf  u \\ \mathbf  v  \end{pmatrix}
:= \begin{pmatrix}   \mathscr A_2 \mathbf  v  \\ \mathscr A_2 \mathbf  u  \end{pmatrix} ,  \ \ \ 
\mathscr C_4 \mathscr A_4  =  \mathscr A_4 \mathscr C_2 ,
 \end{equation} 
 We also define  for $\ell \in \mathbb Z / 6\mathbb Z$ (dropping $\mathbb C $ when $ n =1 $)
\[  L^2_\ell ( \mathbb C/\Lambda, \mathbb C^n ) := \{ 
u \in L^2 ( \mathbb C/\Lambda, \mathbb C^n ) : \mathscr C_n u = \tau^\ell u \} , \ \ \ n = 1, 2, 4, \]
noting that
\begin{equation}
\label{eq:L2ell} 
\begin{gathered}  L^2_\ell (  \mathbb C/\Lambda;  \mathbb C^2 ) = 
L^2_{\ell + 1}  (  \mathbb C/\Lambda ) \oplus L^2_{\ell}  (  \mathbb C/\Lambda ), \\ 
L^2_\ell (  \mathbb C/\Lambda; \mathbb C^4 ) = 
L^2_{\ell}  (  \mathbb C/\Lambda ; \mathbb C^2) \oplus L^2_{\ell-1}  (  \mathbb C/\Lambda ; \mathbb C^2 ), \\
  \mathscr R_1 : L^2_{\ell } ( \mathbb C/\Lambda ) \to L^2_{- \ell } ( \mathbb C /\Lambda ) , \ \ 
\mathscr R_2 :  L^2_{\ell } ( \mathbb C/\Lambda ; \mathbb C^2 ) \to L^2_{-\ell-1}  ( \mathbb C/\Lambda ; \mathbb C^2 ), \\
\mathscr R_4 , \mathscr A_4  : L^2_{\ell } ( \mathbb C/\Lambda ; \mathbb C^4 ) \to L^2_{-\ell}  ( \mathbb C/\Lambda ; \mathbb C^2 ).
\end{gathered}
\end{equation}
We then have 
\begin{equation}
\label{eq:L2map}  
\begin{gathered}  D_{\rm{S}} ( \alpha) : ( H^1 \cap L^2_\ell ) ( \mathbb C/\Lambda ; \mathbb C^2 ) \to 
L^2_{\ell + 1}  ( \mathbb C/\Lambda ; \mathbb C^2 ) , \\
H_\ell  ( \alpha ) := H ( \alpha ) := H ( \alpha, 0 ) : ( H^1 \cap L^2_\ell ) ( \mathbb C/\Lambda ; \mathbb C^4 ) \to 
L^2_{\ell}  ( \mathbb C/\Lambda ; \mathbb C^4 ) . 
\end{gathered}
\end{equation}

\section{Protected states and magic alphas in the scalar model}
\label{s:scal0}

This section is devoted to the proof of Theorem \ref{t:1}. We first investigate the family of protected states,
that is nontrivial elements of the kernel of $ D_{\rm{S}} ( \alpha ) $ which exist for all $ \alpha $. In the process we will define
the discrete sets $ \mathcal A $ and $ \mathcal B $ which appear in Theorem \ref{t:1}. We then discuss the presence of Dirac points for $ \alpha \in \mathcal B $ and the relation between

\subsection{Family of protected states}
\label{s:scal}

{For completeness we start with a direct proof of \eqref{eq:mult1}.}
\begin{lemm}
\label{l:210}
{There exists a discrete set $ \mathcal A \subset \mathbb C $ such that for 
$ \alpha \notin \mathcal A $, $  k \mapsto ( D_{\rm{S}} ( \alpha ) + k )^{-1} $
is meromorphic on $ \mathbb C $ and, in the notation of \eqref{eq:mult},
\[ m ( \alpha, k ) = 2 \indic_{ \Lambda^* } ( k )  . \]
For $ \alpha \in \mathcal A $, $ \Spec D_{\rm{S}} ( \alpha) = \mathbb C $.}
\end{lemm}
\begin{proof} 
%Informally, the proof uses ``$2 < 3$''. Let us first review analytic Fredholm theory \cite[\S C.4]{res}, specifically the fact that any $ (\alpha_1,k_1)  \in \mathbb C^2$ has an open neighborhood $U \times V \subset \mathbb C \times \mathbb C$     and $ f \in \mathscr O ( U \times V ) $ such that for all $\alpha \in U$,\begin{equation}  \label{eq:m2mf} \begin{gathered} \forall \, k \in V \ m(\alpha,k) = \infty \ \Longleftrightarrow \ \forall \, k \in V  \ f(\alpha,k) = 0, \\\exists \, k' \in V  \ f(\alpha,k') \ne 0  \ \Longrightarrow \ \forall \, k \in V \ m(\alpha,k) = m_{f (\alpha, \bullet ) } ( k ).\end{gathered} \end{equation} (Here for a holomorphic function $ k \mapsto h ( k ) $, $ m_h ( k ) $ is the multiplicity of the zero of $ h $ at $k$.) To see this,we use the fact that $D_S(\alpha_1)+k_1$ is Fredholm with index $0$ to show there exist a finite dimensional vector space $X$ and operators $R_+: H^1(\mathbb C/\Lambda; \mathbb C^2) \rightarrow X$, $R_-:X \rightarrow L^2(\mathbb C/\Lambda; \mathbb C^2)$, and $ U $ and $ V $ as above, such that for $ ( \alpha, k ) \in U \times V $,    \begin{equation}       \begin{pmatrix}            D_S(\alpha) + k & R_- \\ R_+ & 0        \end{pmatrix}^{-1} = \begin{pmatrix}            F(\alpha,k) & F_+(\alpha,k)\\ F_-(\alpha,k) & F_{-+}(\alpha,k)        \end{pmatrix}.    \end{equation}  In particular, $ D_S(\alpha)+k$ is invertible if and onlyif $ f ( \alpha, k ) :=  \det F_{-+} ( \alpha, k ) \neq 0 $. We then see that\eqref{eq:m2mf} follows from \cite[Proposition 4.1]{ela}.
{In the proof we use \eqref{eq:Pi2AD} (see the comment below that formula) and, when $ k \mapsto ( D_{\rm{S}} ( \alpha ) + k ) ^{-1} $
is meromorphic, write
\begin{equation}
\label{eq:newdefm} m ( \alpha, k ) = \frac{1}{ 2\pi i } \oint_k ( D_S ( \alpha ) + \zeta )^{-1} d \zeta. \end{equation}
We then observe that \eqref{eq:Pconj} and \eqref{eq:defCS} show 
\begin{equation}
\label{eq:malpha} m ( \alpha, k + a ) , \ \ a \in \Lambda^*, \ \ \ 
m ( \alpha, \omega k ) = m ( \alpha, k ) .
\end{equation} 
%{(In fact we have  $ m ( \alpha, \bar \tau k ) = m ( \alpha, k ) $, $ \bar \tau^2 = \omega $.)}
For any fixed $k_0 \notin \Lambda^*$, 
$ m ( 0 , k_0 ) = 0 $, and hence (using analytic Fredholm theory 
for $ \alpha \mapsto D_{\rm{S}} ( \alpha ) + k_0 $, see \cite[\S C.3]{res}),  
$m(\alpha,k_0) =  0$ except on a discrete set of $ \alpha$'s which we denote by $\mathcal A_{k_0} $.
    We claim that
    \begin{equation}
        \label{eq:claima}
        \forall \, \alpha \in \mathbb C\setminus \mathcal A_{k_0} , \ k \in 
        \mathbb C \ \
    m(\alpha,k) = 2\indic_{\Lambda^*}(k) . 
    \end{equation}}
{To start we note that $\mathbb C\setminus \mathcal A_{k_0} $ is connected and for all $\alpha \in \mathbb C \setminus \mathcal A_{k_0} $, $m(\alpha,k) = 0$ except for a discrete set of $ k$'s (again using analytic Fredholm theory). 
    Let $Z \subset \mathbb C \setminus \mathcal A_{k_0} $
    be the set of $\alpha$ with $m(\alpha,k) = 2 \indic_{\Lambda^*}(k)$ for all $k$.
    The set $Z$ is closed in $\mathbb C \setminus \mathcal A_{k_0} $ (this follows from \eqref{eq:newdefm}) and nonempty since $0 \in Z$.}

{    Suppose $\alpha_1 \in Z$ and $k_1 \in \mathbb C$.
    If $k_1\not\in \Lambda^*$ then $m(\alpha_1,k_1) = 0$ and for $(\alpha,k) \in U \times V $,
    $ U = \neigh_{\mathbb C } ( \alpha_1 ) $, 
    $ V = \neigh_{\mathbb C } (k_1 ) $, $ m(\alpha, k) = 0$
    (invertibility of $ D_{\rm S } ( \alpha ) + k :
    H^1 ( \mathbb C/\Lambda; \mathbb C^2 ) \to  L^2( \mathbb C/\Lambda; \mathbb C^2 ) $ is an open condition).  }   
    
{If $k_1 \in \Lambda^*$  then, since $ \alpha_1 \in Z $, 
$ m ( \alpha_1, k ) = 0 $ for $ k \in D( k_1, 4 \delta) \setminus 
D (k_1 , \delta/2 ) $, $ \delta \ll 1 $. Again, since invertibility is an open condition, there exist $ U = \neigh_{\mathbb C } ( \alpha_1 ) $, 
     such that for 
    $ \alpha \in U $ and $ k \in D ( k_1, 3 \delta ) \setminus D ( k_1, \delta ) $, $ m ( \alpha, k ) = 0 $. It follows
    from \eqref{eq:newdefm} and continuity of 
$ U \times \partial D ( k_1, 2 \delta  ) \ni ( \alpha, k )
    \mapsto ( D_{\rm{S}} ( \alpha ) + k ) ^{-1},  $
that for $ \alpha \in U $, 
\begin{equation}
\label{eq:summ}
     \begin{split} \sum_{ k \in D ( k_1,  \delta ) } m ( \alpha, k ) & = \frac{1}{2 \pi i}
\tr \int_{\partial D ( k_1, 2 \delta ) }  ( D_S ( \alpha ) + \zeta )^{-1} d \zeta \\
& =\frac{1}{2 \pi i} \tr \int_{\partial D ( k_1, 2 \delta ) }  ( D_S ( \alpha_1 ) + \zeta )^{-1} d \zeta = m ( \alpha_1, k_1 ) = 2.\end{split} \end{equation}
Now, if $ m ( \alpha, k ) \neq 0 $ for $ k \in D( k_1, \delta ) \setminus 
\{ k_1 \} $, then, using \eqref{eq:malpha}, and the fact that
$ \omega k_1 \equiv k_1 \mod \Lambda^* $, 
\[ m ( \alpha, k_1 + \omega ( k - k_1 ) = m ( \alpha , \omega k_1 + \omega( k -k_1 ) ) = m ( \alpha, \omega k ) = m ( \alpha, k ) \geq 1, \]
and consequently $ \sum_{ k \in D ( k_1, \delta ) } m ( \alpha, k ) \geq 3 $, contradicting \eqref{eq:summ}. This means that for $ \alpha \in U $ and
$ k \in D( k_1, \delta ) $, $ m ( \alpha, k ) = 2 \indic _{\Lambda^*} ( k ) $.}

{We conclude that every $ ( \alpha_1, k_1 ) $, $ \alpha_1 \in Z $
and $ k_1 \in \mathbb C $ has a neighbourhood in which 
$ m ( \alpha, k ) = 2 \indic_{\Lambda^* } ( k ) $. In view of 
periodicity in \eqref{eq:malpha} to obtain validity for all $ k $ we only need to check it for $ k \in \mathbb C/\Lambda^* $. Since that set is compact, the local result implies that every $ \alpha_1 \in Z $ has
a neighbourhood in which  the equality in \eqref{eq:claima} holds. That means that
$ Z $ is open in $ \mathbb C \setminus \mathcal A_{k_0}  $ and as it is closed, 
it is equal to $ \mathbb C \setminus \mathcal A_{k_0}  $. This proves \eqref{eq:claima} and hence the first part of the lemma with $ \mathcal A = \mathcal A_{k_0}  $.}

{Finally, we note that $ \mathcal A_{k_0}  $ is independent of $ k_0 
\notin \Lambda^* $. In fact, 
\[ \mathbb C \setminus \mathcal A_{k_0} 
\subset \{ \alpha \in \mathbb C : \, \forall\, k \in 
\mathbb C \ \,  m ( \alpha , k) = 2 \indic_{\Lambda^* } ( k ) \} 
\subset \mathbb C \setminus \mathcal A_{k_1} , \]
where the second inclusion follows from the fact that we can take
$ k = k_1 $, so that $ m ( \alpha, k_1 ) = 0 $. Hence we can drop $ k_0 $ in $ \mathcal A_{k_0} $. Since by definition,  
$D( \alpha ) + k_0 $ is not invertible for $ \alpha \in \mathcal A $
and any $ k_0 \notin \Lambda^*$, if follows that $ \Spec D_{\rm{S}} ( \alpha ) = \mathbb C $, completing the proof.}
\end{proof}

We have the following result proved using methods similar to those in \cite{beta}, \cite{bhz2} and
\cite{BHWY} but with some new twists (no pun intended).

\begin{prop} 
\label{p:simple}
Let $ \mathcal A $ be the set of magic $ \alpha $'s defined in \eqref{eq:flat}.  There exists a unique discrete set $ {\mathcal B} $,  with $ \mathcal B \cap \mathcal A = \emptyset $,  and such that for $ \alpha \neq 0 $, 
\begin{equation}
\label{eq:dimker}  \dim \ker P ( \alpha, 0 ) = \dim \ker D_{\rm{S}} ( \alpha ) = \left\{ 
\begin{array}{ll} 
 1 , & \alpha \notin \mathcal A \cup { \mathcal B } , \\
 2, & \alpha \in { \mathcal B } .
 \end{array} \right.
 \end{equation}
Moreover, there exists a holomorphic family
 unique up to multiplication by a holomorphic function
\begin{equation}
\label{eq:holu} \mathbb C \ni \alpha \mapsto u ( \alpha ) \in \ker  P ( \alpha, 0 ) \setminus \{ 0 \} , \ \ \
u ( \alpha ) \in L^2_0 ( \mathbb C/\Lambda ; \mathbb C ) .  \end{equation}
\end{prop}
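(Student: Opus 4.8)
The plan is to push the whole problem onto the scalar operator $P(\alpha,0)$ on $\mathbb C/\Lambda$ and then to combine holomorphic Fredholm theory with the rotational symmetry $\mathscr C_1$. For $\alpha\neq0$ the identity \eqref{eq:D2P} with $k=f=g=0$ shows that $u_2\mapsto(\alpha^{-1}2D_{\bar z}u_2,u_2)$ is a linear isomorphism from $\ker P(\alpha,0)$ onto $\ker D_{\rm{S}}(\alpha)$, so $\dim\ker D_{\rm{S}}(\alpha)=\dim\ker P(\alpha,0)=:N(\alpha)$ and it suffices to study $N$. The family $\alpha\mapsto P(\alpha,0)\colon H^2(\mathbb C/\Lambda)\to L^2(\mathbb C/\Lambda)$ is holomorphic, elliptic and of index $0$. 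For $\alpha\notin\mathcal A$, \eqref{eq:mult1} says that $k\mapsto P(\alpha,k)$ is invertible for $0\neq k$ near $0$ and that the characteristic value $k=0$ has algebraic multiplicity $m(\alpha,0)=2$; since a geometric multiplicity never exceeds the corresponding algebraic one, $N(\alpha)\leq2$, while $N(\alpha)\geq1$ for all $\alpha$ was recorded right after \eqref{eq:mult1}. Hence $N(\alpha)\in\{1,2\}$ for $\alpha\in\mathbb C\setminus\mathcal A$.

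Next I would show that the kernel always meets the sector $L^2_0$ and define $\mathcal B$. By \eqref{eq:defC} we have $P(\alpha,0)\colon L^2_\ell(\mathbb C/\Lambda)\to L^2_{\ell+2}(\mathbb C/\Lambda)$, so $\ker P(\alpha,0)=\bigoplus_{\ell\in\mathbb Z/6\mathbb Z}\ker(P(\alpha,0)|_{L^2_\ell})$ and each $S_\ell:=\{\alpha\in\mathbb C:\ker(P(\alpha,0)|_{L^2_\ell})\neq\{0\}\}$ is an analytic subset of $\mathbb C$ (via a Grushin, or Schur complement, reduction of this holomorphic Fredholm family). Since $\bigcup_\ell S_\ell=\mathbb C$ and a proper analytic subset of $\mathbb C$ is discrete, some $S_{\ell_0}$ equals $\mathbb C$; evaluating at $\alpha=0$ forces $\ell_0=0$, because $P(0,0)=(2D_{\bar z})^2$ has kernel equal to the constant functions, which lie in $L^2_0$. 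Thus $\ker(P(\alpha,0)|_{L^2_0})\neq\{0\}$ for all $\alpha$, and upper semicontinuity of $\alpha\mapsto\dim\ker(P(\alpha,0)|_{L^2_\ell})$ together with the vanishing of these dimensions for $\ell\neq0$ at $\alpha=0$ gives $N(\alpha)=1$, with kernel in $L^2_0$, for $\alpha$ near $0$. (Alternatively one computes the index of $P(\alpha,0)|_{L^2_0}$ to be $+1$ from the indices of $2D_{\bar z}$ on the $L^2_\ell$'s.) Now set $\mathcal B:=\{\alpha\in\mathbb C\setminus\mathcal A:N(\alpha)=2\}$: the set $\{\alpha\in\mathbb C:N(\alpha)\geq2\}$ is analytic by the same local reduction and proper since $N\equiv1$ near $0$, hence discrete, so $\mathcal B$ is discrete and, by construction, disjoint from $\mathcal A$. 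Together with the first paragraph this is \eqref{eq:dimker}.

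Finally, \eqref{eq:holu} is obtained as a nowhere-vanishing holomorphic section of the $L^2_0$-kernel. The set $\mathcal D:=\{\alpha\in\mathbb C:\dim\ker(P(\alpha,0)|_{L^2_0})\geq2\}$ is analytic and discrete; over $\mathbb C\setminus\mathcal D$ the map $\alpha\mapsto\ker(P(\alpha,0)|_{L^2_0})$ is a holomorphic line subbundle of the trivial bundle with fibre $L^2_0$, hence trivial, and so admits a nowhere-vanishing holomorphic section. Across each point of $\mathcal D$ this section (taken from the cofactors of the Grushin reduction) extends holomorphically up to a zero of finite order; dividing that out and patching the resulting local nonvanishing sections, using $H^1(\mathbb C,\mathscr O^*)=0$, yields an entire nowhere-vanishing family $u\colon\mathbb C\to L^2_0(\mathbb C/\Lambda;\mathbb C)$ with $u(\alpha)\in\ker P(\alpha,0)$. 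The steps that need real care are the sector identification in the second paragraph --- that the kernel generically sits in $L^2_0$ and not in some other $L^2_\ell$ --- and the globalisation of the kernel line across $\mathcal A\cup\mathcal B$; by contrast the a priori bound $N\leq2$, which in \cite{beta,bhz2,BHWY} required a separate argument, is here immediate from the algebraic multiplicity $m(\alpha,0)=2$ in \eqref{eq:mult1}, and that is the ``new twist.''
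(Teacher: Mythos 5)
Your argument is correct and reaches the stated conclusions by a genuinely different route. Where the paper's proof passes to the self-adjoint families $\widetilde H_\ell(\alpha)$, uses the chiral symmetry to control the parity of $\dim\ker \widetilde H_\ell(\alpha)$, and then pins the generic kernel into $L^2_0$ via holomorphy of $\tr \pi_0\Pi(\alpha)\pi_0$, you decompose $\ker P(\alpha,0)$ directly over the sectors $L^2_\ell$ and argue with analytic sets; your parenthetical remark that $P(\alpha,0)\colon H^2\cap L^2_0\to L^2_2$ has index $+1$ is in fact the cleanest version of this step (it gives $\ker P(\alpha,0)\cap L^2_0\neq\{0\}$ for \emph{every} $\alpha$ and makes the union argument superfluous). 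Your a priori bound $\dim\ker P(\alpha,0)\leq m(\alpha,0)=2$ for $\alpha\notin\mathcal A$, via the Gohberg--Sigal inequality between geometric and algebraic multiplicity, is a legitimate shortcut; the paper instead excludes a third independent solution by the theta-function/Wronskian argument (three solutions force a common zero, hence $\alpha\in\mathcal A$), and both ultimately rest on \eqref{eq:mult1}. For \eqref{eq:holu} the paper invokes Rellich's theorem for the holomorphic self-adjoint family $\widetilde H(\zeta)$ and then solves a $\bar\partial$-equation to upgrade a smooth family to a holomorphic one, whereas you trivialise the kernel line bundle using $H^1(\mathbb C,\mathscr O^*)=0$ and extend across the exceptional set by the cofactor vector of $E_{-+}$; both are standard and your cofactor extension does land in the kernel at the exceptional points. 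The one piece of information your route does not produce, and which the paper's definition \eqref{eq:defB} of $\mathcal B$ builds in, is the sector location of the second kernel element: for $\alpha\in\mathcal B$ the extra solution lies in $L^2_{-1}$ (obtained in the paper from $\dim\pi_0\Pi(\alpha)L^2=1$ together with $(\pi_0+\pi_{-1})\Pi(\alpha)=\Pi(\alpha)$). That refinement is not part of the statement being proved, but it is used later (in \S\ref{s:Dirac} and in Case 2 of \S\ref{s:rank2}), so your definition of $\mathcal B$ would need that supplementary step to feed into the rest of the paper.
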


\noindent
{\bf Remarks.} 1.The proposition and \eqref{eq:P2D2} show that for $ \Pi ( \alpha ) $ defined in \eqref{eq:D2R} we have
\begin{equation}
\label{eq:Jor1}   D_{\rm{S}} ( \alpha ) \Pi ( \alpha ) \neq 0 , \ \ \ \alpha \notin  \mathcal A \cup { \mathcal B } , \end{equation}
that is, $ D_{\rm{S}} ( \alpha ) $, has a Jordan block block structure at its eigenvalues (which are given by 
$ \Lambda^* $. {The way that $ D_{\rm{S}} (\alpha ) $ was defined we do not have \eqref{eq:dimker}
 at $ \alpha = 0$. Instead, $ \dim \ker P ( 0 , 0 ) = 1 $ and $ \dim \ker D_{\rm{S} } ( 0 ) = 2 $, and from the point of view of
 Theorem \ref{t:2}, we should include $ 0 $ in $ \mathcal B $. To have an agreement between the system and the scalar operator
 we could have considered
 \begin{equation}
 \label{eq:DtS}  \widetilde D_S ( \alpha ) := \begin{pmatrix} 2 D_{\bar z } & \alpha^2 V ( z ) 
 \\ 1 & 2 D_{\bar z } \end{pmatrix} = \begin{pmatrix} 1 & 0 \\ 0 & \alpha \end{pmatrix}^{-1} D_S ( \alpha ) 
 \begin{pmatrix} 1 & 0 \\ 0 & \alpha \end{pmatrix} , \end{equation}
 To maintain a closer analogy to the chiral model \eqref{eq:defD} we keep $ \alpha $ as a linear coupling constant.}

\noindent
2. As we will see, the set $  { \mathcal B } $ has an interesting interpretation in terms of band structure: 
for $ \alpha \in \mathcal  A $, the operator $ H ( \alpha ) $ has a flat band. For 
$ \alpha \notin \mathcal A \cup  { \mathcal B } $, the bands touch tangentially at $ 0 $, 
while for $ \alpha \in  { \mathcal B }  $, the bands exhibit a Dirac
point at $ 0 $. This will be explained in \S \ref{s:Dirac}.

\noindent
3. For $ \alpha \notin \mathcal A \cup \mathcal B $, the unique solution to $ P ( \alpha, 0 ) u = 0 $ 
satisfies symmetries inherited from the equation:
\begin{equation}
\label{eq:symu}
u ( z ) = u ( -z ) , \ \ \ u ( \omega z ) = u ( z ) , \ \ \ \overline{ u ( \bar z ) } = u ( z ) , \ \ \ u ( z + \gamma) = u ( z ) , \ \ \
\gamma \in \Lambda. 
\end{equation}
We note that this also implies that (since $ e^{ \frac13 \pi i} = - \omega^2 $)
\[  u ( e^{ \frac13 \pi i} z ) = u ( z ), \ \ \ \ u ( - \bar z ) = \overline{ u ( z ) } . \]
Analytic continuation shows that these properties remain true for the family $ \mathbb C \ni \alpha \mapsto u ( \alpha ) $.

\begin{proof}
We use
$ \widetilde D_S ( \alpha ) $ defined in 
\eqref{eq:DtS} and the corresponding self-adjoint family
\[  \widetilde H ( \alpha ) :=  \begin{pmatrix} 0 & \widetilde D_{\rm{S}} ( \alpha ) \\
\widetilde D_{\rm{S}} ( \alpha )^* & 0 \end{pmatrix}  .\]
Away from $ 0 $ these are equivalent to $ D_S ( \alpha ) $ and $ H( \alpha ) $ using \eqref{eq:DtS}. 
Since $\mathscr C_2$ commutes
with the conjugation in \eqref{eq:DtS} 
we still have the mapping properties
(\ref{eq:L2map}).

We have 
\[  \widetilde H( 0 ) = \begin{pmatrix} & & 2 D_{\bar z} & 0 \\
& & 1 & 2 D_{\bar z } \\
2 D_z &  1 & & \\
0 & 2 D_z \end{pmatrix}, \ \ \
\ker \widetilde H ( 0 ) = \{ \mathbf e_1 , \mathbf e_4 \}, \ \ \ 
\mathbf e_1 \in L^2_{-1} , \ \ \mathbf e_4 \in L^2_1 .  \]
so that in the notation of \eqref{eq:L2map}
\begin{equation}
\label{eq:dimker0}  \dim \ker \widetilde H_\ell ( 0  ) = \left\{ \begin{array}{ll} 
1 & \ell =  \pm 1 \\
0 & \text{otherwise. } 
\end{array} \right.
 \end{equation}
In view of \eqref{eq:defC2}, $ \mathcal W \widetilde H_{\ell} ( \alpha  ) \mathcal W = - \widetilde H_\ell ( \alpha ) $, and hence
the spectrum of $ \widetilde H_{\ell} ( \alpha ) $ is symmetric about $ 0 $. 
At $ \alpha = 0 $,  $ 0 \in \Spec (\widetilde H_{\ell } ( 0 ) ) $ only for $ \ell = \pm 1 $, and then it is a simple eigenvalue.
Hence, the multiplicity of $ 0 $ as an eigenvalue of $ \widetilde H_{\ell } (\alpha) $, $ \ell = \pm 1 $ has to be odd, while
that $ \widetilde H_\ell ( \alpha ) $, $ \ell \notin \{ \pm 1 \} $ has to be even. In particular, 
\[ \dim \ker \widetilde H_{\pm 1}  ( \alpha  ) \geq 1 . \]
In view of \eqref{eq:L2map} 
\begin{equation}
\label{eq:H2D}  
\begin{gathered} 
\ker \widetilde H_\ell ( \alpha ) =  \ker \widetilde D_{\rm{S}} ( \alpha )^* |_{ L^2_{\ell} }  \oplus \{ 0 \}  +   \{ 0 \} \oplus \ker \widetilde D_{\rm{S}} ( \alpha )  |_{ L^2_{\ell -1}  }  ,\\
\widetilde D_{\rm{S}} ( \alpha ) |_{ L^2_{\ell -1}  } : L^2_{\ell -1 } \to L^2_\ell, \ \ \ 
\widetilde D_{\rm{S}} ( \alpha )^* |_{ L^2_{\ell} } =  ( \widetilde D_{\rm{S}} ( \alpha ) |_{ L^2_{\ell -1}  })^*= 
L^2_{\ell} \to L^2_{\ell-1} . \end{gathered}
\end{equation}
Since it is true at $ \alpha = 0$, the continuity of (simple) eigenfunctions shows that for $ |\alpha | \ll 1 $, 
\[ \ker \widetilde H_1 ( \alpha) = \{ 0 \} \oplus \ker \widetilde D_{\rm{S}} ( \alpha )  |_{ L^2_{0}  }, \] 
and 
\[ \ker \widetilde H_{-1} ( \alpha  ) = \mathscr A_4 \ker \widetilde H_{1} ( \alpha ) = 
\ker \widetilde D_{\rm{S}} ( \alpha )^* |_{ L^2_{-1} }  \oplus \{ 0 \} 
. \]
This also shows that
\[ \dim \ker \widetilde D_S ( \alpha ) = 1, \ \ \ | \alpha | \ll 1 , \]
as $ \dim \ker \widetilde D_S ( \alpha ) $ is upper semicontinuous 
(see for instance \cite[Theorem 2.5.3]{HNotes}). 
Since $ {\rm{rank}} \, \Pi ( \alpha ) = 2 $ (this follows from \eqref{eq:Pi2AD}), we have 
$ \widetilde D_{\rm{S}} ( \alpha ) \Pi ( \alpha ) \neq 0 $ for $ |\alpha | \ll 1 $, that is, 
\[  \ker \widetilde D_{\rm{S}} ( \alpha ) = \widetilde D_{\rm{S}} ( \alpha ) \Pi ( \alpha ) L^2 . \]
In view of the mapping properties in \eqref{eq:L2map}, we conclude that $ \ran\Pi ( \alpha ) = 
\mathbb C u ( \alpha ) \oplus \mathbb C  v ( \alpha ) $, where 
\[  u ( \alpha ) \in \ker \widetilde D_{\rm{S}} ( \alpha ) \subset L^2_0 ( \mathbb C /\Lambda ; \mathbb C^2 ), \ \ 
v ( \alpha ) \in L^2_{-1}  ( \mathbb C /\Lambda ; \mathbb C^2)  , \ \ |\alpha| \ll 1 .  \]
In particular, if $ \pi_\ell : L^2 \to L^2_\ell $ are the orthogonal projections, we have for 
$ |\alpha | \ll 1 $, 
\begin{equation}
\label{eq:proj}
( \pi_0 + \pi_{-1} ) \Pi ( \alpha ) = \Pi ( \alpha ) . 
\end{equation}
We now note that for $ \mathbb C \setminus \mathcal A\ni \alpha \mapsto 
\Pi ( \alpha ) $ is a holomorphic family of projections (this follows from applying $ \partial_{\bar \alpha } $ to 
the definition of $ \Pi ( \alpha ) $ in \eqref{eq:D2R}). We then see that 
\eqref{eq:proj} holds for $ \alpha \in \mathbb C \setminus \mathcal A$.

Since $ \mathbb C \setminus \mathcal A \ni \alpha \mapsto 
\tr \pi_0 \Pi ( \alpha ) \pi_0 $ is holomorphic and hence identically equal to $ 1 $, 
and since $ \pi_0 \Pi ( \alpha ) \pi_0 = \pi_0  \Pi ( \alpha ) $ (the projections 
$ \pi_\ell $ commute with $ \Pi ( \alpha) $ thanks to \eqref{eq:defCS}), we 
conclude that
\begin{equation} 
\label{eq:dimpi0}  \dim \pi_0 \Pi ( \alpha ) L^2 = 1 ,  \ \ \alpha \in \mathbb C \setminus \mathcal A. 
\end{equation} 
Hence, if \eqref{eq:dimker} does not hold then 
\[  \ker \widetilde D_{S } ( \alpha ) |_{ L^2_{-1} } \neq \{ 0 \} , \]
that is, there exists a non-zero $ v \in L^2_{ -1 } ( \mathbb C/\Lambda ) $ such that 
$ P ( \alpha, 0 ) v = 0 $ (we then have $ [ 2 D_{\bar z } v , v ]^t \in 
\ker \widetilde D_{\rm{S} } |_{  L^2_{-1} ( \mathbb C/\Lambda; \mathbb C^2 ) } $). Since
\[  D_{\bar z}^2 : L^2_{-1} ( \mathbb C/\Lambda ) \to L^2_1 ( \mathbb C/\Lambda ) \]
is invertible, and $ u \mapsto V u $, takes $ L^2_{-1} \to L^2_1 $, we have
\[  P ( \alpha , 0 ) |_{ L^2_{-1} } = ( 2 D_{\bar z })^2 |_{ L^2_{-1} } ( I - \alpha^2 K ) , \ \ \
K := [( 2 D_{\bar z })^2 |_{ L^2_ {-1} }]^{-1} V : L^2_{-1}  \to L^2_{-1} . \]
The operator $ K $ is compact and hence $ I - \alpha^2 K $ is invertible outside of a discrete set of 
$ \alpha$'s.
We define the set $ \mathcal B $ as follows
\begin{equation}
\label{eq:defB} 
\alpha \in \mathcal B \ \Longleftrightarrow \   \ker P  ( \alpha , 0 ) |_{ L^2_{-1} } \neq \{ 0 \}  \ \text{ and } 
\alpha \notin \mathcal A . 
\end{equation}
We note that for $ \alpha \in \mathcal B $, $ \dim \ker \widetilde D_{\rm{S}}  ( \alpha ) = 2 $. In fact, if there existed three 
independent solutions $ \widetilde D_{\rm{S}} ( \alpha ) u_j = 0 $, for any $ z_0 \in \mathbb C/\Lambda $ we could find 
$ \gamma \in \mathbb C^3 \setminus \{ 0 \} $ such that  $  \sum_{j=1}^3 \gamma_j u_j ( z_0 ) = 0 $ (since 
$ u_j ( z_0 ) \in \mathbb C^2 $ these are two equations). Since $ \sum_{j=1}^3 \gamma_j u_j \neq 0 $ (from 
the independence of $ u_j $'s)
the now standard argument (see \cite{magic} and \cite[\S 6]{survey})
shows that $ \alpha \in \mathcal  A $.

To see \eqref{eq:holu} we modify the proof of \cite[Proposition 2.3]{bhz2}. For that we fix
$ \alpha $ and define a new operator
\[   \widetilde H ( \zeta ) := \begin{pmatrix} 0 & \widetilde D_{\rm{S}} (\alpha + \zeta )   \\
 \widetilde D_{\rm{S} } ( \alpha +\bar \zeta )  ^* & 0 \end{pmatrix}, \]
which has the property that $  \widetilde H( \zeta )  = \widetilde H ( \bar \zeta )^* $. 
We can restrict this operator to $ L^2_{1} ( \mathbb C/\Lambda; \mathbb C^4 ) $ and, as we already know that
$ \ker \widetilde D_S ( \alpha + \zeta ) |_{L^2_0 } $ is nontrivial, 
$ \dim \ker \widetilde H_1 ( \zeta ) \geq 1 $.  Rellich's theorem 
\cite[Chapter VII, Theorem 3.9]{kato}, then shows that there exists a holomorphic family 
$ \neigh_{\mathbb C } ( 0 ) \ni \zeta \to \widetilde u ( \zeta ) \in   \ker \widetilde H_1 ( \zeta )  \setminus \{ 0 \} $
(it corresponds to the $ 0 $ eigenfunctions of $ \widetilde H_{1} ( \zeta )  $). If $ \alpha + \zeta \notin 
\mathcal  A $, then $ \widetilde u ( \zeta ) = ( 0 , \widetilde {\mathbf u } ( \zeta ) )$, 
$ \ker \widetilde D_{\rm{S}} ( \alpha + \zeta ) |_{ L^2_0 } =  \mathbb C \widetilde {\mathbf u } ( \zeta )  $. This construction is
local near any $ \alpha \in \mathbb C $ and by partition of unity we can find
a smooth family $ \alpha \mapsto \mathbf u_1 ( \alpha ) \in \ker \widetilde D_{\rm{S}} ( \alpha )|_{ L^2_0 } $.

To modify to obtain a holomorphic family we first note that 
\[  0 =  \partial_{\bar \alpha } ( \widetilde D_{\rm{S}}  ( \alpha ) \mathbf u_1 ( \alpha ) ) = 
\widetilde D_{\rm{S}}  ( \alpha ) ( \partial_{\bar \alpha } \mathbf u_1  ( \alpha ) ) 
\ \Longrightarrow \   \partial_{\bar \alpha } \mathbf u_1 ( \alpha ) \in \ker\widetilde D_{ \rm{S}} ( \alpha) |_{ L^2_0 } .  \]
Using the fact that for $ \alpha \notin \mathcal  A $ this kernel is one dimensional we see that
\begin{equation}
\label{eq:falfa} \partial_{\bar \alpha } \mathbf u_1 ( \alpha ) =  f( \alpha ) \mathbf 
u_1( \alpha ), \ \ \ f ( \alpha ) := 
\frac{ \langle  \partial_{\bar \alpha }\mathbf  u_1 ( \alpha ),  \mathbf u_1 ( \alpha )\rangle}{ \| \mathbf u_1 ( \alpha ) \|^2} ,
 \  \ \ \alpha \notin \mathcal A  . \end{equation}
The formula for $ f ( \alpha ) $ defines a smooth function $ f \in C^\infty ( \mathbb C ) $ and that means that
the first formula in \eqref{eq:falfa} holds everywhere. 
The equation $ \partial_{\bar \alpha} F ( \alpha )  = f ( \alpha ) $ 
(see for instance \cite[Theorem 4.4.6]{H1} applied with $ P = \partial_{\bar \alpha } $
and $ X = \mathbb C $)
can be solved 
with $ F \in C^\infty ( \mathbb C ) $. This shows that 
$  \mathbf u ( \alpha ):= \exp ( - F ( \alpha ) ) \mathbf u_1 ( \alpha ) $ is 
holomorphic. Since $  \mathbf u ( \alpha ) = [ 2 D_{\bar z } u ( \alpha ) , u ( \alpha ) ] $, we obtain \eqref{eq:holu}.
\end{proof}

\noindent
{\bf Remark.} For $ \alpha \in \mathbb R \setminus \{ 0 \} $, having a kernel of $ D_{\rm{S}} ( \alpha )|_{L^2_{-1} }  $
is a natural possibility for another reason than the one presented before \eqref{eq:defB}.  It implies that $ \ker H_{0} ( \alpha ) \neq 0 $. 
 Since 
$ \ker H_{\ell} ( 0 ) = 0 $, $ \ell \neq \pm 1 $,  the chiral symmetry shows that $ \dim \ker H_\ell ( \alpha ) 
\equiv 0 \mod 2 $ and if $ \dim \ker H ( \alpha ) > 2 $, then there exists a unique (in view of \eqref{eq:dimpi0})
$ \ell \neq \pm 1 $ such that $ \dim \ker H_\ell ( \alpha ) = 2 $ (For $ \alpha \in 
\mathbb R \setminus \mathcal  A $, we have $ \dim \ker H ( \alpha ) \leq 4 $, 
as otherwise $ \mathscr R_4 H( \alpha ) = H ( \alpha ) \mathscr R_4 $ would show that 
$ \dim \ker D_{\rm{S}} ( \alpha ) > 2 $ contradicting \eqref{eq:Pi2AD}.)
 Since $ \mathscr R_4 : \ker H_\ell  ( \alpha ) \to H_{-\ell} ( \alpha ) $, that means that 
 $ \ell \equiv - \ell \mod 6 $, that is $ \ell = 0 $ or $ \ell = 3 $. The case $ \ell = 0 $ corresponds to 
 a nontrivial kernel of $ D_{\rm{S}} ( \alpha )|_{L^2_{-1} }$ (see \eqref{eq:L2map}). The case $ \ell = 3 $ is
 excluded by the assumption that $ \alpha \notin \mathcal  A $. In fact, we would then have a
 non-trivial kernel of $ D_{\rm{S}} ( \alpha ) |_{ L^2_{2 }} $. But that would mean that we would have
 $v $ with $ D_{\rm{S}} ( \alpha ) v = 0 $,
 \( [ \bar \tau v_1 ( \tau z ) , v_2 ( \tau z ) ]^t = [
   \tau^2 v_1 ( z  ) , \tau^2 v_2 ( z ) ]^t \),
 that is $ v_1 ( \tau z ) = - v_1 ( z ) $ and
 $ v_2 ( \tau z ) = \tau^2 v_2 ( z ) $.
 In particular $ v ( z ) = 0 $ and the theta function argument (see \cite{magic} and \cite[\S 6]{survey})
shows that $ \alpha \in \mathcal  A $ (see Proposition \ref{p:theta}).

\subsection{Dirac points for a discrete set of $ \alpha $'s} \label{s:Dirac}

Suppose that $ \alpha \in {\mathcal B} $ defined in \eqref{eq:defB}. Then 
$ 0 \in \Spec  D_{\rm{S}} ( \alpha ) $ has geometric (and algebraic) multiplicity $ 2 $ and there exist
\begin{equation}
\label{eq:defuv}  \mathbf u \in \ker D_{\rm{S}} ( \alpha )|_{ L^2_0 }  , \ \ \ \mathbf v \in \ker D_{\rm{S}} ( \alpha ) |_{L^2_{-1} }, \ \ 
\langle \mathbf u , \mathbf v \rangle = 0 , \ \ \ \| \mathbf u \| = \| \mathbf v \| = 1 . \end{equation}
We consider the Wronskian of $ \mathbf u $ and $ \mathbf v $,
\begin{equation}
\label{eq:Wr} \Wr ( \mathbf u , \mathbf v ) = u_1 v_2 - u_2 v_1 , \end{equation}
 and note that 
it is non-zero.
Indeed, note $u_1(0) = 0$ and $v_2(0) = 0$ by their symmetries.
If $\Wr(\mathbf u, \mathbf v) = 0$, then either $ \mathbf u $ or $\mathbf v  $ would have a zero
at $ 0 $ and then $ \alpha \in \mathcal  A $ -- see Proposition \ref{p:theta}. We can then set up a Grushin problem for 
$ H ( \alpha, 0 ) $ which remains invertible when we replace $ H( \alpha, 0 ) $ by $ H ( \alpha , k ) $ 
for small $ k $:
\begin{equation}
\label{eq:Grushin1}  
\begin{gathered} \mathscr H (\alpha , k ) = \begin{pmatrix} H ( \alpha, k ) & R_- \\
R_+ & 0 \end{pmatrix} : H^1 ( \mathbb C/\Lambda; \mathbb C^4 ) \times \mathbb C^4 \to 
L^2 ( \mathbb C/\Lambda; \mathbb C^4 ) \times \mathbb C^4 , \\  R_+ := R_-^* , \ \ 
R_-  u_- = u_{-,1} \begin{pmatrix} 0 \\ \mathbf u \end{pmatrix} 
+ u_{-,2} \begin{pmatrix} 0 \\  \mathbf v  \end{pmatrix} + 
u_{-,3} \begin{pmatrix} \mathscr A_2 \mathbf u  \\  0 \end{pmatrix} +
u_{-,3} \begin{pmatrix} \mathscr A_2 \mathbf v  \\  0 \end{pmatrix} . 
\end{gathered}
\end{equation}
Then
\[ \mathscr H ( \alpha, k )^{-1}  = \begin{pmatrix} E ( \alpha, k ) & E_+ ( \alpha, k ) \\
E_-  ( \alpha, k ) & E_{-+}  ( \alpha, k ) \end{pmatrix} ,   \ \ 
E_\pm ( \alpha, 0) = R_{\pm } , \]
where  (see \cite[Proposition 2.12]{notes}) 
\begin{equation}
\label{eq:Emp}  \begin{split} 
E_{-+} ( \alpha, k ) & = - R_+ \begin{pmatrix} 0 & k \\
\bar k & 0 \end{pmatrix} R_- + \mathcal O ( |k|^2 ) \\ & = 
 \begin{pmatrix} 0 &   \bar k A^*  \\  k A  & 0 \end{pmatrix} +\mathcal O ( |k|^2 ) , \ \ \ 
A := \tfrac{\sqrt 3 } 2  W \begin{pmatrix} \ \  0 & 1 \\
- 1 & 0 \end{pmatrix} 
\end{split}
 \end{equation}
 where $ W = W ( \mathbf u , \mathbf v ) $. 
 This follows from the following calculations:
 \[ \begin{split}   & \langle \mathbf u , \mathscr A_2 \mathbf u \rangle = 
 \int_{\mathbb C/\Lambda } u_1  \overline{( - \bar u_2 )}  + u_2 \overline{(\bar u_1 ) } dm ( z) = 0 , 
\ \ \  \langle \mathbf v ,\mathscr A_2 \mathbf v \rangle = 0, \\
  \ \ \
 & \langle \mathbf u , \mathscr A_2 \mathbf v \rangle =  
  \int_{\mathbb C/\Lambda } u_1  \overline{( - \bar v_2 )}  + u_2 \overline{(\bar v_1 ) } dm ( z) = 
\tfrac {\sqrt 3 } 2 \Wr ( \mathbf v , \mathbf u ) = - \langle \mathbf v , \mathscr A_2 \mathbf u \rangle, 
\end{split} 
  \]
and  
\[ \begin{split}
 \begin{pmatrix} 0 & k \\
\bar k & 0 \end{pmatrix} R_- u_- & = 
 \begin{pmatrix} 0 & k \\
\bar k & 0 \end{pmatrix}  \left( u_{-,1} \begin{pmatrix} 0 \\ \mathbf u \end{pmatrix} 
+ u_{-,2} \begin{pmatrix} 0 \\  \mathbf v  \end{pmatrix} + 
u_{-,3} \begin{pmatrix} \mathscr A_2 \mathbf u  \\  0 \end{pmatrix} +
u_{-,3} \begin{pmatrix} \mathscr A_2 \mathbf v  \\  0 \end{pmatrix}\right) \\
& = 
u_{-,1} \begin{pmatrix} k \mathbf u \\ 0  \end{pmatrix} 
+ u_{-,2} \begin{pmatrix}  k  \mathbf v  \\ 0 \end{pmatrix} + 
u_{-,3} \begin{pmatrix} 0 \\ \bar k \mathscr A_2 \mathbf u   \end{pmatrix} +
u_{-,3} \begin{pmatrix}  0 \\ \bar k \mathscr A_2 \mathbf v   \end{pmatrix}
\end{split} \]

To study the eigenvalues near $ 0 $ of $ H ( \alpha, k ) $ for $ |k| \ll 1 $ we observe that 
the Grushin problem \eqref{eq:Grushin1} is well posed when $ H ( \alpha, k ) $ is replaced by
$ H ( \alpha, k ) - \lambda $, $ |\lambda | \ll 1 $. Using \cite[Proposition 2.12]{notes} we then 
see that small eigenvalues of $ H ( \alpha, k ) $ are given by the zeros of 
$ \lambda \to \det ( E^\lambda_{-+}  ( \alpha, k ) ) $ where, in the notation of \eqref{eq:Emp}, 
\[ \begin{split}  E^\lambda_{-+} ( \alpha, k ) & = E_{-+} ( \alpha, k ) - \lambda E_- ( \alpha, k ) E_+( \alpha, k ) + 
\mathcal O ( \lambda^2 ) \\
& =  \begin{pmatrix} 0 &   \bar k A^*  \\  k A  & 0 \end{pmatrix} - \lambda + \mathcal O ( \lambda ^2 + |k|^2 ) .
\end{split}  \]
The eigenvalues of the first term on the right hand side are double and given by $ \pm \frac {\sqrt 3} 2 |k| |W| $ and hence
 (with the labelling convention of \eqref{eq:defEjk})
\begin{equation}   
\label{eq:Dirac}
 E_{\pm j } ( \alpha, k ) = \pm \tfrac {\sqrt 3} 2   | \Wr ( \mathbf u, \mathbf v ) | |k| + \mathcal O ( |k|^2 ) , \ \ \ \alpha \in 
{ \mathcal B}, \ \  j = 1, 2 , \end{equation}
while all the other eigenvalues are non-zero. 
As in \cite{magic}, the Wronskian is interpreted as the Fermi velocity, that is the slope of the Dirac cone.
However, it occurs only for a discrete set of $ \alpha $ at which the cone is double.

When $ \alpha \notin { \mathcal  B} \cup \mathcal A  $, then $ \ker D_S ( \alpha ) = 
\mathbb C \mathbf u $, 
and setting up the Grushin problem using $ ( 0 , \mathbf u )^t $ and $ ( \mathscr A_2 \mathbf u, 0 )^t $, 
shows that 
\begin{equation} 
\label{eq:flat1}
E_{\pm 1 } ( \alpha , k ) = \mathcal O ( |k|^2 ) , \ \  |E_{\pm 2 } ( \alpha, k )| > 0 , \ \ \ 
\alpha  \notin { \mathcal  B} \cup \mathcal A.
\end{equation}

\subsection{Zeros of $ z \mapsto u ( \alpha, z ) $}
\label{s:zeros}

We start with some general considerations and first prove the following fact. {It is an adaptation to the scalar model of the central argument of
\cite{magic}.}
\begin{prop}
\label{p:theta}
In the notation of \eqref{eq:holu} let $ \mathbf u ( \alpha ) = [ 2D_{\bar z } u ( \alpha ) ,  u ( \alpha ) ]^t $.
Then 
\begin{equation}
\label{eq:theta}  \alpha \in \mathcal  A \ \Longleftrightarrow \ 
\exists \, z_0 \in \mathbb C/\Lambda \ \mathbf u ( z_0 ) = 0 . 
\end{equation}
\end{prop}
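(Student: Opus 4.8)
The plan is to recast \eqref{eq:theta} as a statement about the rank-two holomorphic bundle $\mathscr E(\alpha)$ of \S\ref{s:rank2}. Since $D_{\rm S}(\alpha)=2D_{\bar z}+\alpha(\,\cdot\,)$ with the $\alpha$-term a multiplication operator, conjugation by $e^{i\langle z,k\rangle}$ turns $2D_{\bar z}$ into $2D_{\bar z}+k$ and fixes multiplications, so $\ker_{H^2(\CC/\Lambda)}(D_{\rm S}(\alpha)+k)$ is identified with the space $H^0(\CC/\Lambda;\mathscr E(\alpha)\otimes N_k)$ of global holomorphic sections, where $N_k$ is the flat line bundle with multipliers $\gamma\mapsto e^{i\langle\gamma,k\rangle}$; as $k$ runs over $\CC/\Lambda^*$, $N_k$ runs over all of $\operatorname{Pic}^0(\CC/\Lambda)$. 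Hence, by \eqref{eq:flat},
\[\alpha\in\mathcal A\iff H^0(\CC/\Lambda;\mathscr E(\alpha)\otimes N)\neq\{0\}\ \text{ for every }N\in\operatorname{Pic}^0(\CC/\Lambda).\]
For $\alpha\neq0$, $\mathbf u(\alpha)=[2D_{\bar z}u(\alpha),u(\alpha)]^t$ is (up to the constant change of frame \eqref{eq:DtS}) a global holomorphic section of $\mathscr E(\alpha)$ with trivial multipliers, nonzero for all $\alpha$ by \eqref{eq:holu}, and \eqref{eq:theta} asks precisely whether this section has a zero. I use throughout that $\mathscr E(\alpha)$ is topologically trivial, $\deg\mathscr E(\alpha)=0$; the case $\alpha=0$ is immediate since $0\notin\mathcal A$.

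For $\Leftarrow$: if $\mathbf u(\alpha)$ vanishes somewhere, let $L\subseteq\mathscr E(\alpha)$ be the line subbundle spanned pointwise by $\mathbf u(\alpha)$ away from its finite zero set and saturated across it; then $\mathbf u(\alpha)$ is a nonzero section of $L$ with a zero, so $\deg L\geq1$. From $0\to L\to\mathscr E(\alpha)\to M\to0$ we get $\deg M=-\deg L$ and $\operatorname{Ext}^1(M,L)=H^1(\CC/\Lambda;M^{-1}\otimes L)=0$, because $\deg(M^{-1}\otimes L)=2\deg L\geq2$ and $H^1$ of a positive-degree line bundle on the elliptic curve $\CC/\Lambda$ vanishes (Serre duality, $K_{\CC/\Lambda}=\mathcal O$). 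Thus $\mathscr E(\alpha)\cong L\oplus M$ with $\deg L\geq1$, so $\deg(L\otimes N)\geq1$ and hence $H^0(\mathscr E(\alpha)\otimes N)\supseteq H^0(L\otimes N)\neq\{0\}$ for every $N\in\operatorname{Pic}^0$; therefore $\alpha\in\mathcal A$. One can equally write the twisted solutions explicitly as $\mathbf u(\alpha)$ times a ratio of theta functions having a zero at the vanishing point, as in \cite{magic}.

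For $\Rightarrow$: suppose for contradiction that $\mathbf u(\alpha)$ is nowhere zero. Having trivial multipliers, it trivialises the line subbundle it spans, so $0\to\mathcal O\to\mathscr E(\alpha)\to M\to0$ with $\deg M=0$. Choose $k$ with $N_k\not\cong\mathcal O$ and $M\otimes N_k\not\cong\mathcal O$, which excludes only finitely many $k\in\CC/\Lambda^*$. Since $\alpha\in\mathcal A$, there is a nonzero $\mathbf v\in H^0(\mathscr E(\alpha)\otimes N_k)$; its image in $H^0(M\otimes N_k)$ vanishes, as a nontrivial degree-zero line bundle on $\CC/\Lambda$ has no global section, so $\mathbf v$ lies in the image of $H^0(N_k)=\{0\}$ — a contradiction. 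Hence $\mathbf u(\alpha)$ must vanish somewhere.

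The degree and divisor bookkeeping on $\CC/\Lambda$ together with Serre duality are routine; I expect the step needing the most care to be the first paragraph, namely verifying, against the construction of $\mathscr E(\alpha)$ in \S\ref{s:rank2}, that $\ker(D_{\rm S}(\alpha)+k)$ is identified uniformly in $k$ with $H^0(\mathscr E(\alpha)\otimes N_k)$, and that $\{N_k\}$ sweeps out all of $\operatorname{Pic}^0(\CC/\Lambda)$. It is also worth recording that $\mathbf u(\alpha)\neq0$ even when $\alpha\in\mathcal A$, so that \eqref{eq:theta} is not vacuous there, and that the same argument identifies the number of zeros of $\mathbf u(\alpha)$ (with multiplicity, $=\deg L$ above) with the multiplicity of the magic $\alpha$ in Theorem \ref{t:1}.
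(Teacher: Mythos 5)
Your argument is correct, but it follows a genuinely different route from the paper's. The paper proves $\Longleftarrow$ by the explicit theta-function construction of \cite{survey} (multiplying a vanishing protected state by quotients of theta functions to produce Bloch solutions for every $k$), and proves $\Longrightarrow$ by an ODE argument: the Wronskian of $\mathbf u$ and a Bloch solution $\mathbf v$ satisfies $2D_{\bar z}\Wr=-k\Wr$, hence vanishes identically for $k\notin\Lambda^*$ by a Liouville-type argument, after which one shows $\mathbf v=f\mathbf u$ with $f$ meromorphic (this requires the complexification of \S\ref{s:loc}) and non-constant, so $f$ has poles and $\mathbf u$ has zeros. You instead phrase everything as sheaf cohomology of $\mathscr E(\alpha)$ twisted by the flat line bundles $N_k$: your key identification $\ker(D_{\rm S}(\alpha)+k)\simeq H^0(\mathscr E(\alpha)\otimes N_k)$ is exactly the gauge transformation $\mathbf v\mapsto e^{i\langle z,k\rangle}\mathbf v$ combined with Lemma \ref{l:locPDE}, and it is legitimate to use here even though the paper only constructs $\mathscr E(\alpha)$ in \S\ref{s:rank2}, since that construction rests solely on Lemmas \ref{l:locODE}--\ref{l:locPDE} and \eqref{eq:det2tr} and not on the proposition being proved. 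In your $\Longrightarrow$ the vanishing of the image of $\mathbf v$ in $H^0(M\otimes N_k)$ \emph{is} the paper's Wronskian computation in disguise (by \eqref{eq:det2tr} one has $M\simeq\det\mathscr E(\alpha)\simeq\mathcal O$ and the quotient map is $\mathbf v\mapsto\Wr(\mathbf u,\mathbf v)$), and exactness of the $H^0$-sequence replaces the meromorphic-quotient step. In your $\Longleftarrow$ the Ext computation and the splitting are superfluous: the inclusion $L\otimes N\hookrightarrow\mathscr E(\alpha)\otimes N$ already injects on $H^0$, and $h^0(L\otimes N)=\deg L\geq 1$ suffices. What your route buys is uniformity -- the same degree count immediately gives Proposition \ref{p:zeros} and the Case 3 decomposition of \S\ref{s:rank2}; what the paper's route buys is a more elementary, self-contained argument and the explicit relation $\mathbf v=f\mathbf u$ that is reused later (e.g.\ in \eqref{eq:u2vj}). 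One small point to make explicit if you write this up: the saturation defining $L$ requires that the zero set of $\mathbf u(\alpha)$ be discrete with finite multiplicities, which follows from writing $\mathbf u=c_1\mathbf u_1+c_2\mathbf u_2$ with $c_j$ holomorphic as in Lemma \ref{l:locPDE}.
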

\begin{proof}
The proof of $ \Longleftarrow $ is the same as the proof for the chiral model in \cite[\S 6]{survey} but we present a slightly modified version for completeness. {It is based on the properties of the resolvent (see \cite[\S 10.2.3]{notes}): 
\[  \begin{gathered} R(  k ) := ( 2 D_{\bar z} +  k )^{-1} :
L^2 ( \mathbb C/\Lambda) \to L^2 ( \mathbb C/\Lambda ) 
, \ \  k 
\notin \Lambda^* ,\\
 R ( k ) f ( z ) = \int_{\mathbb C/\Gamma } G_{ k } ( z - w )  f ( w ) d m ( w ) , \ \ \ 
( 2 D_{\bar z } +  k ) G_{ k } ( z ) = \delta_{0 }( z )  . \end{gathered}\]
The function $ G_{ k} $ is periodic and $ z - w $ is defined 
modulo $ \Lambda $. (We can describe $ G_{k} $ explicitly using 
theta functions but we do not need that.) If $ \mathbf u ( \alpha ) \in \ker D_{\rm{S}} ( \alpha ) $ is the protected state from Proposition \ref{p:simple} then for $ k \notin \Lambda^* $, 
\[  ( D_{\rm{S}}(\alpha ) + k ) ( G_k ( z - z_0 ) \mathbf u ( \alpha, z ) ) = \delta_{z_0 } ( z ) \mathbf u ( z_0 ) . \]
Hence if $ \mathbf u $ has a zero at $ z_0 $ we obtain an eigenfunction
for every $ k $.}

Now suppose that 
$ \alpha \in \mathcal  A $ and 
$ \mathbf u $ does {\em not} vanish. Since $ 
\Spec D_{\rm{S}} ( \alpha ) = \mathbb C $, for every $ k $ (or equivalently some $ k \notin \Lambda^* $)
there exists $ \mathbf v \neq 0 $ such that $( D_{\rm{S}} ( \alpha ) + k ) \mathbf v = 0 $. Then the
Wronskian \eqref{eq:Wr} is seen to satisfy $ 2 D_{\bar z } \Wr  = - k \Wr  $.
The general solution of this equation is given by 
\[ \Wr ( z, \bar z ) = e^{ \frac i 2 ( k \bar z + \bar k z ) } w ( z) , \ \ \ w \in \mathscr O ( \mathbf C ) . \]
Since $ \Wr $ is periodic and $ z \mapsto e^{ \frac i 2 ( k \bar z + \bar k z ) }  $ is a bounded function,
$ w $ has to be constant, and for $ k \notin \Lambda^* $ that constant has to vanish.

{The vanishing
 of $ \Wr $ means that $ \mathbf u $ and $ \mathbf v $ are parallel and hence, since $ \mathbf u $ is assumed not to vanish, there exists $ F \in C^\infty ( \Omega )
$ such that $ \mathbf v = F \mathbf u $. Applying $ 2 D_{\bar z}$ to both sides shows that
$ 2 D_{\bar z } F =  - k F $.  
Hence 
for some $ f \in C^\infty ( \CC  ) $,  
\begin{equation}
\label{eq:u2v}  \mathbf v( z, \bar z )  = e^{ - \frac i 2 ( k \bar z + \bar k z ) } f ( z) \mathbf u ( z, \bar z ) ,  \ \ \ \ 
\partial_{\bar z } f = 0 . \end{equation}
Since $ \mathbf v  $ and $ \mathbf u $ are be periodic, and 
$ \mathbb C \ni z \mapsto  e^{ - \frac i 2 ( k \bar z + \bar k z ) } $ is bounded (but not periodic for $ k \notin \Lambda^*$), we conclude that $ f $ is bounded and hence constant. But then 
$ \mathbf v $ is not periodic, wich provides a contradiction.}
\end{proof}

We have the following
refinement of Proposition 1. To formulate it, we define the notion of multiplicity:
\begin{defi}
For a solution $ \mathbf u \in \ker D_{\rm{S}} ( \alpha ) $ we define the multiplicity of 
a zero at $ z_0 $ as 
\begin{equation}
\label{eq:defM}    M_{\mathbf u }  ( z_0 ) := \max \{ m :  [ \partial_z^{m-1} \mathbf u ] (z_0 ) = 0 \} , 
\end{equation}
with the convention that $ M_{\mathbf u } ( z_0 ) = 0 $ if $ \mathbf u ( z_0 ) \neq 0 $. 
\end{defi}

Since
$ \mathbf u = [ 2 D_{\bar z } u , u ]^t $, $ P ( \alpha ) u = 0$,  and  $
 m = M_{\mathbf u }  ( z_0 ) $ is equivalent to $ u ( z) = 
( z - z_0)^m v ( z ) $, $ v ( z_0 ) \neq 0 $, $ v \in C^\infty ( \neigh_{\mathbb C } ( z_0 ) )$. {{We note here
that $ \mathbf u ( z_0 ) = 0 $ and $ D_{\rm{S}} ( \alpha ) \mathbf u (z ) = 0$ imply that
$ \partial_{\bar z}^k \mathbf u ( z_0 ) = 0 $ for all $ k $. That means $ \partial_{\bar z }^k u ( z_0 ) = 0 $ for all $ k $
and hence we can factor out a positive power of $ ( z - z_0 ) $.}

\begin{prop}
\label{p:zeros}
In the notation of \eqref{eq:holu} let $ \mathbf u ( \alpha )  = [ 2D_{\bar z } u ( \alpha ) ,  u ( \alpha ) ]^t $.
For $ \alpha \in \mathcal  A $
\begin{equation}
\label{eq:zeros}  \dim \ker D_{\rm{S}} ( \alpha ) = \sum_{ z \in \mathbb C/\Lambda } M_{\mathbf u ( \alpha ) } ( z ) =: m ( \alpha ) . \end{equation}
This means that for all $ k \in \mathbb C /\Lambda^* $, 
\[  \dim \ker H ( \alpha, k ) = 2 m ( \alpha ) , \]
which is the number of flat bands at $ 0 $.
\end{prop}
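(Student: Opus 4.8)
The plan is to relate the dimension of $\ker D_{\rm S}(\alpha)$ for $\alpha\in\mathcal A$ to the total vanishing order of the single protected state $u(\alpha)$, by producing a full basis of the kernel out of derivatives and translates built from $u(\alpha)$, in the spirit of the theta-function arguments in \cite{magic} and \cite[\S6]{survey} used already in Propositions \ref{p:theta} and \ref{p:simple}. First I would fix $\alpha\in\mathcal A$ and recall from Proposition \ref{p:theta} that $\mathbf u(\alpha)$ must vanish; write its zero set in $\mathbb C/\Lambda$ as a finite set $\{z_1,\dots,z_r\}$ with multiplicities $M_j:=M_{\mathbf u(\alpha)}(z_j)$, so that $m(\alpha)=\sum_j M_j$, and recall (as noted after Definition with \eqref{eq:defM}) that near $z_j$ one has $u(z)=(z-z_j)^{M_j}w_j(z)$ with $w_j$ smooth and nonvanishing, and that all $\bar z$-derivatives of $\mathbf u$ vanish at $z_j$.

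For the inequality $\dim\ker D_{\rm S}(\alpha)\ge m(\alpha)$: given a zero $z_0$ of order $M$ and a solution $\mathbf v=(v_1,v_2)^t$ of $(D_{\rm S}(\alpha)+k)\mathbf v=0$ with $k\notin\Lambda^*$, the computation in the proof of Proposition \ref{p:theta} shows $2D_{\bar z}\Wr(\mathbf u,\mathbf v)=-k\Wr(\mathbf u,\mathbf v)$, forcing $\Wr(\mathbf u,\mathbf v)\equiv 0$, hence $\mathbf v=e^{-\frac i2(k\bar z+\bar k z)}f\,\mathbf u$ with $f$ meromorphic. Conversely, for each $z_j$, multiplying $u(\alpha)$ by a suitable elliptic (translated Weierstrass-type) function with a pole of order up to $M_j$ at $z_j$ and compensating poles/zeros, one produces $M_j$ linearly independent honest kernel elements of $D_{\rm S}(\alpha)$ (equivalently of $P(\alpha,0)$) that are genuinely $\Lambda$-periodic; linear independence across the $z_j$ follows from the distinct pole structure, giving $\dim\ker D_{\rm S}(\alpha)\ge\sum_j M_j=m(\alpha)$. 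For the reverse inequality, any $\mathbf v\in\ker D_{\rm S}(\alpha)$ is, by the argument above with $k=0$, of the form $f\,\mathbf u(\alpha)$ with $f$ meromorphic on $\mathbb C/\Lambda$; periodicity of $\mathbf v$ forces $f$ to be an elliptic function whose polar divisor is supported in $\{z_1,\dots,z_r\}$ with order at $z_j$ at most $M_j$ (otherwise $\mathbf v$ would blow up), and the space of such elliptic functions has dimension exactly $\sum_j M_j=m(\alpha)$ by Riemann--Roch on the torus (using that the associated divisor has degree $m(\alpha)\ge 1$, so the linear system is of dimension $m(\alpha)$, the constant functions being the only obstruction but already absorbed since an elliptic function with a single simple pole does not exist). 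This yields $\dim\ker D_{\rm S}(\alpha)\le m(\alpha)$, hence equality \eqref{eq:zeros}.

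For the final assertion, I would invoke \eqref{eq:D2P}: for $k\notin\Lambda^*$ the operator $2D_{\bar z}+k$ is invertible on $L^2(\mathbb C/\Lambda)$, so $\ker(D_{\rm S}(\alpha)+k)$ is isomorphic to $\ker P(\alpha,k)$, and by the translation conjugation $P(k+\mathbf a)=e^{-i\langle z,\mathbf a\rangle}P(k)e^{i\langle z,\mathbf a\rangle}$ of \S\ref{s:sym} together with $\alpha\in\mathcal A$ (flat band, \eqref{eq:flat}), $\dim\ker P(\alpha,k)$ is independent of $k$ and equals $\dim\ker D_{\rm S}(\alpha)=m(\alpha)$; the chiral symmetry $\mathscr W H(\alpha,k)\mathscr W=-H(\alpha,k)$ from \eqref{eq:defW} then doubles this to $\dim\ker H(\alpha,k)=2m(\alpha)$, which is precisely the number of flat bands through $0$ in the labelling \eqref{eq:defEjk}. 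The main obstacle I anticipate is making the "build kernel elements by multiplying $u(\alpha)$ by elliptic functions with prescribed poles" step fully rigorous — one must verify that the resulting functions are genuinely in $L^2$ (no spurious singularities survive because $\bar z$-derivatives of $\mathbf u$ vanish at the $z_j$, so the pole of the elliptic factor is exactly cancelled by the holomorphic-type zero of $u$) and that the dimension count from Riemann--Roch matches $\sum_j M_j$ without over- or under-counting; this is exactly the place where the earlier theta-function technology of \cite{magic}, \cite[\S6]{survey} must be adapted, and where one should double-check the degenerate low-multiplicity cases.
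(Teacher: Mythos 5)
Your proposal is correct and follows essentially the same route as the paper: the lower bound is obtained by multiplying $\mathbf u(\alpha)$ by elliptic functions with poles confined to the zeros of $\mathbf u$ (the paper uses theta-function quotients; its remark after the proof uses exactly the Weierstrass-type building blocks you describe), and the upper bound uses the vanishing Wronskian to write every kernel element as $f\,\mathbf u$ with $f\in L(D)$ and then $\dim L(D)=\deg D=m(\alpha)$ by Riemann--Roch on the torus. The cautions you flag (no elliptic function with a single simple pole; cancellation of the poles by the zeros of $u$ via the vanishing of the $\bar z$-derivatives) are precisely the points the paper's construction is designed to handle.
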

\begin{proof}
We  prove that for $ \alpha \in \mathcal  A $,  $ \dim \ker D_{\rm{S}} ( \alpha ) \geq m ( \alpha )$. For that suppose first that $ \mathbf u ( \alpha ) $ has distinct zeros $ z_0 , \cdots , z_{m-1} $ 
each of multiplicity one. 
We can assume that $ m \geq 2 $ as otherwise there is nothing to prove. We then choose
\[  w_j \notin \{ z_j \}_{j=0}^{m-1} , \ \ \ \ \ \sum_{ j=0}^{m-1} w_j \equiv  \sum_{j=0}^{m-1}  z_j , 
\]
and put
\[ \mathbf u_j ( z, \bar z ) = \frac{ \theta ( z - w_0 ) \theta ( z - w_j ) }{
\theta ( z - z_0 ) \theta ( z- z_j ) }  \mathbf u ( z, \bar z ). \]
The factor in front of $ \mathbf u $ is a meromorphic function with simple poles at 
$ z_0 $ and $ z_j $ (see \cite[\S 3.1]{bhz2} for the review of properties of $ \theta $ in a related context)
but the properties of the zeros show that $ \mathbf u_j \in \ker D_{\rm{S}} ( \alpha ) $. The functions
$ \mathbf u_j $ are linearly independent as 
\[ c_0 + \sum_{ j=1}^{n-1} c_j  \frac{ \theta ( z - w_0 ) \theta ( z - w_j ) }
{ \theta ( z - z_0 ) \theta ( z- z_j ) } \equiv 0 \ \ \Longrightarrow \ c_0 = c_1 = \cdots = c_{m-1} = 0.\]
(Put $ z = z_j $, $ j > 0 $ to see that $ c_j = 0 $ for $ j > 0 $.) The case of higher multiplicities is dealt with similarly
but using meromorphic functions with poles of higher order. This proves that $ \dim \ker D_S ( \alpha ) \geq m ( \alpha ) $.

To prove the opposite inequality, suppose that $ \mathbf u $, $ \mathbf v_j $, $ 1 \leq j\leq M-1 $,  
span $ \ker D_{\rm{S}} ( \alpha ) $, $ M = \dim \ker D_{\rm{S}} ( \alpha ) $. Then 
the Wronskians \eqref{eq:Wr}, $ \Wr( \mathbf u , \mathbf v_j ) = 0 $ as $ \mathbf u $ has to vanish somewhere
by Proposition \ref{p:theta} and the Wronskian is constant.
 But as in \eqref{eq:u2v}, 
\begin{equation}
\label{eq:u2vj}   \mathbf v_j ( z , \bar z ) = f_j ( z ) \mathbf u ( z, \bar z  ) , \ \text{ \ \  $ f_j ( z ) $, 
$ 1 \leq j \leq m - 1 $, meromorphic,} 
\end{equation} 
 $ f_0 \equiv 1 $ and $ \{ f_j \}_{j=0}^{M-1} $ 
linearly independent. The span of $ f_j$'s is contained in the space of meromorphic functions\footnote{For an accessible introduction see Terry Tao's blog {\url{https://terrytao.wordpress.com/2018/03/28/246c-notes-1-meromorphic-functions-on-riemann-surfaces-and-the-riemann-roch-theorem/}} and the version of the Riemann--Roch theorem needed here is in formula (7) there.} $ L ( D ) $
 where 
$ D $ is the divisor defined by the zeros of $ \mathbf u  $ included according to their multiplicities
defined in \eqref{eq:defM}. Since $ \dim L ( D ) = \deg D = m( \alpha )  $ (see \eqref{eq:zeros}), 
this means that the number of zeros of $ \mathbf u $ has to be greater than or equal to
$ M $, the dimension of $ \ker D_{\rm{S}} ( \alpha ) $. 
\end{proof}

\noindent
{\bf Remarks.} 1. The same argument applies in the chiral model and settles
\cite[Problem 14]{survey} -- see \cite[Theorem 1]{bhz3}.

\noindent
2. Although we invoked a basic version of the Riemann--Roch theorem the proof that the number of poles
of $ f_j$'s has to be greater than $ M$ is explicit. If the poles are all simple then
\[ f_j ( z ) = \sum_{ k=1}^{K(j)} \lambda_k (j) \frac{\theta' ( z - a_k ( j ) )  }{ \theta ( z - a_k (j ) ) } + c(j), \ \ \ \ 
\sum_{ k=1}^{K(j)} \lambda_k (j) = 0 .\]
For $ f_0 \equiv 1$, $ f_1, \cdots , f_{M-1} $ to be linearly independent we need 
the number of distinct zeros $ \{ a_k ( j ) : 1 \leq k \leq K( j ) , 1 \leq j \leq M-1 \} $ to be $ \geq M $. It is not difficult to modify this construction to the case of poles of higher order.

\begin{figure} 
\centering
  \includegraphics[width=13cm]{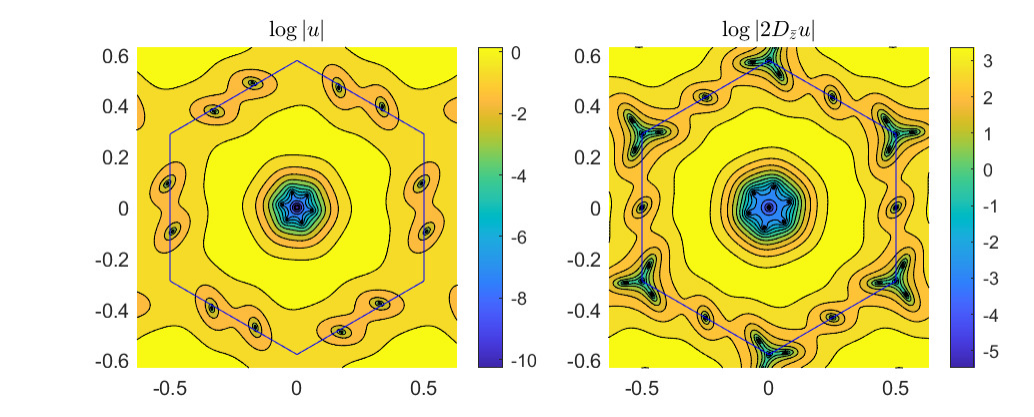}
\caption{\label{f:m1} When multiplicity of the band is one, the only possible zero of $ \mathbf u $ is at  $ 0$ 
and that is illustrated in this figure. For the potential \eqref{eq:defU} in the scalar model with $ V ( z ) = U_{\rm{BM}} (z ) 
U_{\rm{BM}} ( - z ) $ this occurs only for complex $ \alpha$'s.}
 \end{figure}

\subsection{Multiplicity two}
\label{s:mult2}

Suppose that $ \dim \ker D_S ( \alpha )  = 2 $ and $ \alpha \in \mathcal  A $. 
We remark that numerical analysis of the Bistritzer--MacDonald potential
\eqref{eq:defU2} (that is, $ V ( z ) = U ( z ) U ( -z ) $ with that $ U $) suggest this is the case 
for $ \mathcal  A \cap \mathbb R $.

The assumption  $ \dim \ker D_S ( \alpha )  = 2 $ means that 
$ \dim \ker H( \alpha, 0 ) = 4 $ and consequently $ \dim \ker H ( \alpha, k ) = 4 $ for all $ k $. 
Since $ \dim \ker H_\ell ( 0 , 0 ) \equiv \dim \ker H_\ell ( \alpha, 0) \!\! \mod 2 $, in view of \eqref{eq:dimker0}, 
there exists $ \ell \neq \pm 1 $ such that $ \dim \ker H_{\ell } ( \alpha, 0 ) = 2 $. But then \eqref{eq:defA2}
and \eqref{eq:L2map} imply that $ \ell \equiv - \ell \!\!  \mod 6 $, that is, $ \ell = 0, 3 $.

The kernel of $ D_{\rm{S}} ( \alpha ) $ is consequently spanned by $\mathbf u \in L^2_0 ( \mathbb C/\Lambda; 
\mathbb C^2 ) $ and $ \mathbf v \in L^2_\ell ( \mathbb C/\Lambda; 
\mathbb C^2 )$, $ \ell $ is either $ -1  $ (the case of  $\dim \ker H_{0 } ( \alpha, 0 ) = 2$)
 or $ 2 $ (the case of  $\dim \ker H_{3 } ( \alpha, 0 ) = 2$). 
The Wronskian \eqref{eq:Wr}, $ \Wr ( \mathbf u , \mathbf v ) = 0 $ as otherwise we would have 
\eqref{eq:Dirac} and no flat bands. We note that $ u_1 ( \pm z_S ) = u_1 ( 0 ) = 0 $, 
and $ v_2 ( \pm z_S ) = v_2 ( 0 ) = 0 $ (for both $ \ell = -1 $ and $ \ell = 2 $). Hence
\[  u_2 ( \pm z_S ) v_1 ( \pm z_S ) = u_2 ( 0 ) v_1 (0) = 0.\]
Hence, we have two exclusive cases ($ \mathbf u $ cannot vanish at three points as then Proposition 
\ref{p:zeros} would contradict the multiplicity two assumptions):
\begin{equation}
\label{eq:choice}  \mathbf u ( 0 ) =0  \ \text{ or } \ \mathbf u ( \pm z_S ) = 0 
\end{equation}
Suppose first that $ \mathbf u ( 0 ) = 0 $ and $ \mathbf u  ( \pm z_S ) \neq 0 $. Then $ \mathbf v ( \pm z_S ) = 0 $, 
and we can use the following 
 well known 
\begin{lemm}
\label{l:well}
The unique (up to a multiplicative constant) meromorphic function with two poles at $ \pm z_S$ on $ \mathbb C/\Lambda $ 
and a (double) zero at $ 0 $
satisfies
\begin{equation}
\label{eq:tauF}  F ( \tau z ) = \tau^2 F ( z ), \ \ \ \tau = e^{ - \pi i /3 } .  
\end{equation}
\end{lemm}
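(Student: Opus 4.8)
The plan is to characterize $F$ up to a nonzero constant by its divisor, and then observe that the divisor is invariant under $z\mapsto\tau z$. First I would record the elementary lattice facts. Since $\Lambda=\omega\mathbb Z\oplus\mathbb Z$ is hexagonal and $\tau=-\omega$ (as $-\omega=e^{i\pi}e^{2\pi i/3}=e^{-\pi i/3}$), we have $\tau\Lambda=\Lambda$; moreover $0$ is the only fixed point of $z\mapsto\tau z$ on $\mathbb C/\Lambda$, because $\tau-1=\omega$ is a unit of $\mathbb Z[\omega]$, so $(\tau-1)z\in\Lambda\iff z\in\Lambda$; and since $z_S$ satisfies $\omega z_S\equiv z_S\bmod\Lambda$, the map $z\mapsto\tau z$ interchanges $z_S$ and $-z_S$ modulo $\Lambda$, as $\tau z_S=-\omega z_S\equiv -z_S$. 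In particular $0,z_S,-z_S$ are three distinct points of $\mathbb C/\Lambda$.

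Next I would recall existence and uniqueness of $F$. The divisor $D:=2[0]-[z_S]-[-z_S]$ has degree $0$ and $\operatorname{sum}(D)=2\cdot0-z_S-(-z_S)\equiv0\bmod\Lambda$, so by Abel's theorem there is a meromorphic function on $\mathbb C/\Lambda$ with divisor exactly $D$; concretely one may take $F(z)=\theta(z)^2/(\theta(z-z_S)\theta(z+z_S))$, with $\theta$ the odd theta function of $\Lambda$ (see \cite[\S 3.1]{bhz2}), the automorphy factors of numerator and denominator cancelling because $0+0\equiv z_S+(-z_S)\bmod\Lambda$. If $F_1,F_2$ each have a double zero at $0$ and simple poles at $\pm z_S$ and nothing else, then $F_1/F_2$ is holomorphic and nowhere zero on the compact surface $\mathbb C/\Lambda$, hence constant; this is the asserted uniqueness up to a multiplicative constant.

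Now I would run the symmetry argument. Put $G(z):=F(\tau z)$. Because $\tau\Lambda=\Lambda$, $G$ is meromorphic on $\mathbb C/\Lambda$; by the first paragraph its poles lie at $\tau^{-1}\{z_S,-z_S\}=\{z_S,-z_S\}$ and are simple, and its only zero is at $\tau^{-1}\{0\}=\{0\}$, where $G$ vanishes to order $2$. Thus $G$ has divisor $D$, so by uniqueness $G=\lambda F$ for some $\lambda\in\mathbb C^\times$, i.e.\ $F(\tau z)=\lambda F(z)$. To identify $\lambda$, expand at the double zero: $F(z)=c\,z^2+\mathcal O(z^3)$ with $c\neq0$; then $F(\tau z)=c\,\tau^2z^2+\mathcal O(z^3)$ while $\lambda F(z)=\lambda c\,z^2+\mathcal O(z^3)$, and comparing the $z^2$-coefficients gives $\lambda=\tau^2$, which is \eqref{eq:tauF}. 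As a sanity check, $F$ is a constant multiple of $1/\wp$ for the Weierstrass function of $\Lambda$, since $\wp$ is even with a double pole at $0$ and $\wp(z_S)=\wp(\omega z_S)=\omega\wp(z_S)$ forces $\wp(z_S)=0$; in the equianharmonic case $\wp(\tau z)=\wp(\omega z)=\omega\wp(z)=\tau^{-2}\wp(z)$, giving $F(\tau z)=\tau^2F(z)$ directly.

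The only genuine content — hence the ``hard part'', such as it is — is the divisor bookkeeping for $G=F(\tau\cdot)$, which rests entirely on the geometric fact that $z\mapsto\tau z$ fixes $0$ and interchanges $\pm z_S$ modulo $\Lambda$; everything else is Liouville's theorem on the torus together with a one-line Taylor comparison at $0$.
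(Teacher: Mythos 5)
Your proof is correct, and it takes a genuinely different route from the paper's. The paper establishes uniqueness by the standard residue/Abel constraints and then writes $F$ down explicitly as an absolutely convergent lattice sum of partial fractions; the identity $F(\tau z)=\tau^2F(z)$ is obtained by first checking $F(-z)=F(z)$ from the series and then verifying $F(\omega z)=\bar\omega F(z)$ by a direct reindexing of the sum using $\omega z_S=z_S+\gamma_0$. You instead argue entirely at the level of divisors: since $\tau\Lambda=\Lambda$ and $\tau$ fixes $0$ while swapping $\pm z_S$ mod $\Lambda$, the function $F(\tau\cdot)$ has the same divisor as $F$, hence equals $\lambda F$ by Liouville on the torus, and the constant $\lambda=\tau^2$ is read off from the leading Taylor coefficient at the double zero. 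This is cleaner and avoids any series manipulation or theta transformation law (the paper only invokes the theta quotient in a remark); what the paper's computation buys in exchange is an explicit, convergence-checked formula for $F$ that also exhibits $F(-z)=F(z)$ directly. Your $\wp$-function cross-check is the same function in disguise and is a nice independent confirmation. One harmless slip: $\tau-1=-\omega-1=\omega^2$, not $\omega$; it is still a unit of $\mathbb Z[\omega]$, so the (unused) fixed-point remark survives.
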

\begin{proof}
We first note that $ F $ is unique as for any periodic meromorphic function the sum of the poles is congruent to the sum of the zeros and the sum of residues adds up to $ 0 $. We can then give an explicit formula for 
$ F$:
\[ 
\begin{split} F ( z) & = \sum_{ \gamma \in \Lambda } \frac{1}{ z + \gamma - z_S } 
- \frac{1}{ z + \gamma + z_S } - \frac{1}{ \gamma - z_S } 
 + \frac 1{ \gamma + z_S } 
\\
& = - 2 z_S \sum_{ \gamma \in \Lambda } \frac{ z^2 + 2 z \gamma }{ ( z+ \gamma +z_S ) ( z + \gamma - z_S )  (\gamma - z_S )( \gamma + z_S )} \\
\end{split} 
\]
This shows absolute convergence and $ F ( z ) = F ( - z )$. Hence to obtain \eqref{eq:tauF} it suffices
to show that $ F ( \omega z ) = \bar \omega F ( z ) $. In fact, $ \omega z_S = z_S + \gamma_0 $, 
$ \gamma_0 \in \Lambda $, and 
\[ 
\begin{split} F ( \omega z) & = \bar \omega \sum_{ \gamma \in \Lambda } \frac{1}{ z +  \gamma - \gamma_0  - z_S  } 
- \frac{1}{ z + \gamma + \gamma_0 + z_S  } - \frac{1}{ \gamma - \gamma_0 - z_S } 
 + \frac 1{ \gamma + \gamma_0  + z_S }  \\
 & = \bar \omega  F ( z ) + \sum_{ \gamma \in \Lambda } \frac{1}{ z + \gamma + z_S  } 
- \frac{1}{ z + \gamma + 2 \gamma_0 + z_S  } - \frac{1}{ \gamma  + z_S } 
 + \frac 1{ \gamma + 2 \gamma_0  + z_S } .
 \end{split} \]
The sum on the right hand side vanishes as it converges uniformly and defines a periodic holomorphic function vanishing at $ 0 $. 
This completes the proof. 
\end{proof}

\noindent
{\bf Remark.} The function $ F $ can be expressed using theta functions (see \cite[\S 3.1]{bhz2} for the definitions in the this context) 
\[ F ( z ) = \frac{ \theta ( z )^2 }{ \theta ( z - z_S ) \theta ( z+ z_S )}, \]
and the conclusion follows from \cite[(3.1)]{bhz2} and the fact that $ \bar \omega z_S = z_S + \gamma_0 $, 
$ \gamma_0 \in \Lambda $.

Since $ F ( z ) \mathbf v $ is not a scalar multiple of $ \mathbf v $, we have $ \mathbf u = c_0 F ( z ) \mathbf v \in L^2_{\ell + 2 } $. But for $ \ell = -1, 2 $, 
$ \ell + 2 \not \equiv 0 \mod 6 $ which is a contradiction. Hence only the second option in \eqref{eq:choice} is allowed. But then \eqref{eq:tauF} implies that $ \mathbf v = c_0 F ( z ) \mathbf u \in L^2_{2} $. 
We summarise this with 
\begin{prop}
\label{p:m2}
The multiplicity of $ \alpha \in \mathcal  A $ is 2 (that is, $ \dim \ker H( \alpha, k ) = 4 $
for all $ k $) if and only if $ \mathbf u ( \alpha ) = [ 2 D_{\bar z} u ( \alpha) , u ( \alpha )  ]^t $ in \eqref{eq:holu} 
vanishes at $ \pm z_S $. In that case, 
\[ \ker D_S ( \alpha ) = \mathbb C \mathbf u \oplus \mathbb C \mathbf v , \ \ \ \mathbf u \in L^2_0 ( \mathbb C/\Lambda;
\mathbb C^2 ) ,  \ \ \mathbf v \in L^2_2 ( \mathbb C/\Lambda;
\mathbb C^2 ) .\]
\end{prop}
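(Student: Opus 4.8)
The statement is essentially the conclusion of the analysis that precedes it, so the plan is to package that analysis into a proof of both implications. For the forward implication, suppose $\alpha\in\mathcal A$ has multiplicity $2$, i.e.\ $\dim\ker D_{\rm S}(\alpha)=2$. The symmetry bookkeeping above (the chiral symmetry of $\widetilde H_\ell$, the count \eqref{eq:dimker0}, and the maps in \eqref{eq:defA2}, \eqref{eq:L2map}) shows that $\ker D_{\rm S}(\alpha)=\mathbb C\mathbf u\oplus\mathbb C\mathbf v$ with $\mathbf u\in L^2_0$ and $\mathbf v\in L^2_\ell$, $\ell\in\{-1,2\}$, and the presence of a flat band forces $\Wr(\mathbf u,\mathbf v)\equiv0$ (otherwise \eqref{eq:Dirac} would give a Dirac point, not a flat band). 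First I would use that, as recorded above, the grading assignments force $u_1$ and $v_2$ to vanish at $0$ and $\pm z_S$ (these being the fixed points of $z\mapsto\omega^2z=\tau^2z$ on $\mathbb C/\Lambda$, at which $u_1\in L^2_1$ and $v_2\in L^2_\ell$ pick up nontrivial factors under $\mathscr C_1$ and $\mathscr C_1^2$). Then $\Wr(\mathbf u,\mathbf v)=u_1v_2-u_2v_1\equiv0$ gives $u_2v_1=0$ at each of $0,\pm z_S$, so at each of these points $\mathbf u$ or $\mathbf v$ vanishes; since $\mathbf u$ (hence $\mathbf v=f\mathbf u$, as in \eqref{eq:u2v}) vanishes somewhere and $\Div\mathbf u$, $\Div\mathbf v$ both have degree $2$ (Propositions \ref{p:theta}, \ref{p:zeros}), neither can vanish at all three, and $\tau$-invariance of these divisors together with $\tau z_S\equiv-z_S$ yields the dichotomy \eqref{eq:choice}.

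Next I would exclude the alternative $\mathbf u(0)=0$. In that case $\mathbf u(\pm z_S)\neq0$, so $u_2(\pm z_S)\neq0$ and hence $v_1(\pm z_S)=0$; with $v_2(\pm z_S)=0$ this forces $\Div\mathbf v=[z_S]+[-z_S]$ while $\Div\mathbf u=2[0]$. Since $\Wr\equiv0$ we may write $\mathbf u=c_0F\mathbf v$, where the meromorphic $F$ has divisor $2[0]-[z_S]-[-z_S]$; by Lemma \ref{l:well}, $F(\tau z)=\tau^2F(z)$, and as multiplication by such an $F$ raises the $\mathscr C_2$-grading by $2$, this gives $\mathbf u\in L^2_{\ell+2}$, forcing $\ell\equiv4\pmod 6$ and contradicting $\ell\in\{-1,2\}$. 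Hence $\mathbf u(\pm z_S)=0$, so $\Div\mathbf u=[z_S]+[-z_S]$ and $\mathbf u(0)\neq0$; then $u_2(0)\neq0$, so $v_1(0)=0$, and with $v_2(0)=0$ we get $\mathbf v(0)=0$ and therefore $\Div\mathbf v=2[0]$. Writing now $\mathbf v=c_0F\mathbf u$ with $\Div F=2[0]-[z_S]-[-z_S]$, again the function of Lemma \ref{l:well}, we conclude $\mathbf v=c_0F\mathbf u\in L^2_{0+2}=L^2_2$, which is the structure statement (and pins $\ell=2$).

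For the converse, if $\mathbf u(\alpha)$ vanishes at $\pm z_S$ then $\alpha\in\mathcal A$ by Proposition \ref{p:theta} and $m(\alpha)=\deg\Div\mathbf u(\alpha)\ge2$ by Proposition \ref{p:zeros} (the zero at $-z_S$ being automatic from $\tau$-invariance of the divisor); to see it equals $2$ I would run the grading count backwards, noting that for any $\tau$-invariant effective divisor of degree $>2$ containing $[z_S]+[-z_S]$ the identification $\ker D_{\rm S}(\alpha)=\{f\mathbf u:f\in L(\Div\mathbf u)\}$ and the $\mathscr C_2$-weight decomposition of $L(\Div\mathbf u)$ would place a protected state in an $L^2_\ell$ excluded by the parity constraints of \eqref{eq:dimker0} and \eqref{eq:defA2}. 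The hard part is really just this initial bookkeeping — pinning $\ell\in\{-1,2\}$ in the forward direction and excluding the larger divisors in the converse — which asks one to combine the chiral-symmetry parity of $\dim\ker H_\ell$, the identity relating $\dim\ker H_\ell(\alpha)$ to $\dim\ker D_{\rm S}(\alpha)|_{L^2_{\ell-1}}$ up to an index term, and the reflections $\mathscr A_4$, $\mathscr R_4$; once that is in place, the rest is the elementary Wronskian-and-divisor manipulation carried out above.
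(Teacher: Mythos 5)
Your argument is essentially the paper's own proof of this proposition (carried out in \S\ref{s:mult2}): the same parity/grading bookkeeping pinning $\mathbf v\in L^2_{-1}$ or $L^2_2$, the vanishing of the Wronskian, the forced vanishing of $u_1,v_2$ at $0,\pm z_S$, the dichotomy \eqref{eq:choice}, and the exclusion of $\mathbf u(0)=0$ via Lemma~\ref{l:well} and the resulting grading contradiction $\ell+2\not\equiv 0 \bmod 6$. The only place you go beyond the paper is the converse direction (ruling out a $\tau$-invariant divisor of degree $>2$ containing $[z_S]+[-z_S]$), which the paper leaves implicit and you only sketch; your proposed route through the weight decomposition of $L(\mathrm{Div}\,\mathbf u)$ is consistent with the paper's methods.
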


\noindent
{\bf Remark.} We observe that in the case of $ m ( \alpha ) = 2 $ we can write
\begin{equation}
\label{eq:ker2th}  \ker D_{\rm S } ( \alpha ) = \{ c \frac{ \theta ( z - z_1 ) \theta ( z + z_1 ) } { 
\theta ( z - z_S ) \theta ( z + z_S ) } \mathbf u ( \alpha ) , \ \ z_1 \in \mathbb C/\Lambda, \ \ 
c \in \mathbb C \} . \end{equation}
The Weierstrass identity \cite[(3.3)]{voca} 
gives 
\[ \theta ( z - z_S ) \theta ( z + z_S ) \theta ( z_1 )^2 + \theta ( z_S - z_1 ) \theta ( z_S + z_1 ) 
\theta ( z )^2 =  \theta ( z_S )^2 \theta ( z - z_S ) \theta ( z + z_S ) \]
which means that, with $ \mathbf v = F  \mathbf u $, ($ F ( z)$ defined in Lemma \ref{l:well})
\[  \frac{ \theta ( z - z_1 ) \theta ( z + z_1 ) } { 
\theta ( z - z_S ) \theta ( z + z_S ) } \mathbf u  
= \frac{\theta ( z_1)^2}{ \theta(z_S )^2}  \mathbf u + \frac{ \theta ( z_S - z_1 ) \theta ( z_S + z_1 ) }
{\theta( z_S )^2 } \mathbf v . \]

\subsection{Local structure of solutions}
\label{s:loc}

We start with some general comments: we consider the following {\em non-holomorphic} embedding of 
$ \mathbb R^2 \simeq \mathbb C  $ in $ \mathbb C^2 $:
\[ F:  \mathbb C \to \mathbb C^2 , \ \ \ F: z \mapsto ( z, \bar z ) \in \mathbb C^2 . \]
The image of $ F$,  $ X := \{ ( z, \bar z ) : z \in \mathbb C \} $, is a totally real linear
subspace (over $ \mathbb R $) of $ \mathbb C^2 $ in the sense that $ X \cap i X = \{ 0 \} $. 
In particular, (we use the notation $ \neigh_{\Omega}  ( \Omega_0 ) $ for an open neighbourhood of $ \Omega_0 $ in 
$ \Omega $, $ \mathscr O ( \Omega ) $ denotes holomorphic functions in $ \Omega  $, $ \zeta = ( \zeta_1, \zeta_2 
\in \mathbb C^2 $)
\begin{equation}
\label{eq:w2z} 
\begin{gathered} u \in \mathscr O ( \neigh_{\mathbb C^2 } ( X ) ) \ \Longrightarrow \ \left\{ 
\begin{array}{l} 
\partial_{\zeta_1}  u ( \zeta  ) |_{ \zeta = ( z, \bar z ) } = \partial_z [ u ( z, \bar z ) ] , \\
\partial_{\zeta_2} u ( \zeta ) |_{ \zeta = ( z, \bar z) } = \partial_{\bar z } [ u ( z, \bar z ) ] ,
\end{array} \right. \\
u \in \mathscr O ( \neigh_{\mathbb C^2 } ( X ) ), \ \ u|_X  = 0 \ \Longrightarrow  \ 
u \equiv 0 . 
\end{gathered} 
 \end{equation}

With this in mind, we consider the following problem (we drop $ h $ or $ \alpha = 1/h $ which can be put into $ V $):
\begin{equation}  
\label{eq:system} 2  D_w \mathbf v_j  ( z, w ) = \begin{pmatrix} 0 & \alpha V ( z, w )  \\
\alpha  & 0 \end{pmatrix} \mathbf v_j , \ \ \ \mathbf v_j ( z , \bar z_0  ) = \mathbf e_j ,
\end{equation}
where $ \mathbf e_1 = [ 1 , 0 ]^t$, $ \mathbf e_2 = [0 , 1 ]^t $, for $ z $ near $ z_0 $ and 
$ w $ near $ \bar z_0 $.

\begin{lemm}
\label{l:locODE}
The system \eqref{eq:system} has unique solutions 
$ \mathbf v_j \in \mathscr O ( {\rm{neigh}}_{ \mathbb C^2 } ( z_0 , \bar z_0 ) ; \mathbb C^2 ) $.
\end{lemm}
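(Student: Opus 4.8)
The plan is to treat \eqref{eq:system} as a holomorphic ODE in the single variable $w$ with $z$ entering as a holomorphic parameter, and then invoke the classical existence-uniqueness theorem for linear holomorphic ODEs. Concretely, fix $z$ in a small polydisc around $z_0$; the coefficient matrix
\[
B(z,w) := \begin{pmatrix} 0 & \tfrac12\alpha V(z,w) \\ \tfrac12\alpha & 0 \end{pmatrix}
\]
is holomorphic in $(z,w)$ on a neighbourhood of $(z_0,\bar z_0)$ in $\CC^2$, since $V$ is real analytic and $\Lambda$-periodic and hence extends holomorphically in both arguments near the totally real set $X$ in the sense of \eqref{eq:w2z}. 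For each fixed such $z$, the Cauchy problem $2 D_w \mathbf v_j = B(z,w)\mathbf v_j$, $\mathbf v_j(z,\bar z_0) = \mathbf e_j$ (note that $2D_w = \tfrac1i\partial_w$ so this is $\partial_w \mathbf v_j = i B(z,w)\mathbf v_j$) is a linear ODE with holomorphic coefficients, which by the standard Cauchy theory (Picard iteration / majorant series, see e.g. any text on analytic ODE) has a unique solution holomorphic in $w$ on a disc whose radius can be taken uniform for $z$ in a slightly smaller polydisc.

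The next step is to upgrade joint holomorphy in $(z,w)$. Here I would run the Picard iteration
\[
\mathbf v_j^{(0)}(z,w) = \mathbf e_j, \qquad \mathbf v_j^{(n+1)}(z,w) = \mathbf e_j + i\int_{\bar z_0}^{w} B(z,w')\,\mathbf v_j^{(n)}(z,w')\,dw',
\]
the integral taken along the straight segment from $\bar z_0$ to $w$. Each $\mathbf v_j^{(n)}$ is manifestly holomorphic in $(z,w)$ jointly (composition and integration of holomorphic functions preserve holomorphy, and the parameter $z$ only enters through the holomorphic integrand). The usual majorant estimate gives $\|\mathbf v_j^{(n+1)} - \mathbf v_j^{(n)}\|_{\infty} \le (C|w-\bar z_0|)^{n+1}/(n+1)!$ on the polydisc, so the series converges uniformly there; by Weierstrass's theorem on uniform limits of holomorphic functions the limit $\mathbf v_j \in \mathscr O(\neigh_{\CC^2}(z_0,\bar z_0);\CC^2)$ and solves \eqref{eq:system}. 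Uniqueness is immediate: any two holomorphic solutions agree on each slice $\{z = \text{const}\}$ by ODE uniqueness, hence agree on a neighbourhood of $(z_0,\bar z_0)$, or alternatively their difference is a holomorphic function vanishing on the appropriate set and one applies the second implication in \eqref{eq:w2z}.

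I do not expect any genuine obstacle here — the content is entirely the classical Cauchy–Kovalevskaya/holomorphic-ODE machinery applied with a holomorphic parameter. The only point requiring a sentence of care is making the radius of convergence uniform in $z$, which follows because the majorant constant $C$ can be chosen to be $\sup |B|$ over a fixed compact polydisc, independent of the slice; and the only notational wrinkle is the factor $i$ coming from $2D_w = \tfrac1i\partial_w$, which is harmless. Thus the lemma reduces to an off-the-shelf statement once one records joint holomorphy in the parameter, and I would present it in essentially the two displays above.
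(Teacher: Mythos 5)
Your proof is correct. The skeleton is the same as the paper's: both treat \eqref{eq:system} as a linear holomorphic ODE in $w$ with $z$ as a parameter and get existence and uniqueness slice by slice from the classical Cauchy theory. Where you diverge is in how holomorphy in the parameter $z$ is obtained. The paper takes the slice-wise solution (implicitly using smooth dependence on parameters), applies $\partial_{\bar z}$ to the equation, observes that $\partial_{\bar z}\mathbf u_j$ solves the same ODE with zero Cauchy data at $w=\bar z_0$, and concludes it vanishes by uniqueness; you instead run the Picard iteration directly and note that each iterate is jointly holomorphic, so the uniform limit is too by Weierstrass. Your route is slightly more self-contained, since the paper's $\partial_{\bar z}$ argument presupposes that the solution is already differentiable in $z$ before one can differentiate it, whereas your iteration delivers joint holomorphy in one pass; the paper's version is shorter once that regularity is taken for granted. (A cosmetic point only: with the paper's convention $2D_{\bar z}=\tfrac1i(\partial_{x_1}+i\partial_{x_2})=\tfrac2i\partial_{\bar z}$, one has $2D_w=\tfrac2i\partial_w$ rather than $\tfrac1i\partial_w$, so your factors of $\tfrac12$ in $B$ and the stated identity for $2D_w$ are not mutually consistent — but any fixed constant is harmless for the majorant estimate.)
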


\begin{proof}
For a fixed $ z $, the standard theory shows existence of unique solutions which are
holomorphic in $ w $. (If $ w \mapsto V ( w, z ) $ is entire, we have a global solution in $ w $ -- see \cite[\S 7.2]{SjBook}.) We now see that
$ \partial_{\bar z } \mathbf u_j $ solves the equation \eqref{eq:system} but 
with the initial condition $ \partial_{\bar z } \mathbf u_j ( z , \bar z_0 ) = 0 $. But that means that
$ w \mapsto \partial_{\bar z } \mathbf u_j ( w, z) = 0 $. 
\end{proof}

\begin{lemm}
\label{l:locPDE}
There exists $ \mathbf u_j \in \mathscr O ( 
\neigh_{ \mathbb C^2 } ( z_0, \bar z_0 ) ) $ solving 
\[ 2  D_w \mathbf u_j  ( z, w ) = \begin{pmatrix} 0 & \alpha V ( z,w )  \\
 \alpha  & 0 \end{pmatrix} \mathbf u_j , \]
such that any solution $ \mathbf u \in C^\omega ( \neigh_{\mathbb C} ( z_0 ) ) $ solving
$ D_{\rm{S} } (\alpha ) \mathbf u = 0 $, (see \eqref{eq:syst0}) is given by 
\begin{equation}
\label{eq:u2c12}  \mathbf u ( z, \bar z ) = c_1 ( z ) \mathbf u_1 ( z, \bar z ) + c_2 ( z ) \mathbf u_2 ( z , \bar z ) , \ \ 
c_j \in \mathscr O ( {\neigh_{\mathbb C} ( z_0 ) }) . \end{equation}
\end{lemm}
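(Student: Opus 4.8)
The plan is to reduce $D_{\rm S}(\alpha)\mathbf u=0$, restricted to the totally real slice $X=\{(z,\bar z):z\in\neigh_{\mathbb C}(z_0)\}$, to a linear holomorphic ODE in the second variable $w$, and to take $\mathbf u_1,\mathbf u_2$ to be a fundamental system for that ODE. First I would construct the candidates: writing $D_{\rm S}(\alpha)=2D_{\bar z}I+\bigl(\begin{smallmatrix}0&\alpha V\\\alpha&0\end{smallmatrix}\bigr)$, a local solution of $D_{\rm S}(\alpha)\mathbf u=0$ satisfies $2D_{\bar z}\mathbf u=-\bigl(\begin{smallmatrix}0&\alpha V\\\alpha&0\end{smallmatrix}\bigr)\mathbf u$ on $X$, so (correcting a sign in the displayed equation for $\mathbf u_j$) I would, by exactly the argument of Lemma~\ref{l:locODE} — fix $z$, solve the resulting linear ODE in $w$ with coefficients holomorphic in $(z,w)$, and observe that $\partial_{\bar z}$ of the solution solves the same ODE with vanishing initial data, hence is $0$ — obtain $\mathbf u_1,\mathbf u_2\in\mathscr O(\neigh_{\mathbb C^2}(z_0,\bar z_0);\mathbb C^2)$ solving $2D_w\mathbf u_j=-\bigl(\begin{smallmatrix}0&\alpha V(z,w)\\\alpha&0\end{smallmatrix}\bigr)\mathbf u_j$ with $\mathbf u_j(z,\bar z_0)=\mathbf e_j$. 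Because the coupling matrix has zero trace, $\Wr(\mathbf u_1,\mathbf u_2)=u_{1,1}u_{2,2}-u_{1,2}u_{2,1}$ satisfies $2D_w\Wr=0$, so $\Wr(z,w)\equiv\Wr(z,\bar z_0)=1$; in particular $\mathbf u_1(z,w),\mathbf u_2(z,w)$ form a basis of $\mathbb C^2$ at every point.

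Next I would complexify a given solution. If $\mathbf u\in C^\omega(\neigh_{\mathbb C}(z_0))$ solves $D_{\rm S}(\alpha)\mathbf u=0$ then, being real analytic, it extends to $\mathbf u(z,w)\in\mathscr O(\neigh_{\mathbb C^2}(z_0,\bar z_0);\mathbb C^2)$ whose restriction to $X$ is the original solution, and $V$ likewise extends to a holomorphic $V(z,w)$. By the first implication in \eqref{eq:w2z}, $2D_w\mathbf u(z,w)$ restricted to $w=\bar z$ equals $2D_{\bar z}[\mathbf u(z,\bar z)]$, so the equation on $X$ says precisely that the holomorphic function $2D_w\mathbf u(z,w)+\bigl(\begin{smallmatrix}0&\alpha V(z,w)\\\alpha&0\end{smallmatrix}\bigr)\mathbf u(z,w)$ vanishes on $X$; by the second implication in \eqref{eq:w2z} it then vanishes on a full neighbourhood of $(z_0,\bar z_0)$. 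Hence, after shrinking the neighbourhood if necessary, $\mathbf u(z,\cdot)$ solves the same $w$-ODE as $\mathbf u_1(z,\cdot),\mathbf u_2(z,\cdot)$.

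Finally I would set $(c_1,c_2)^t:=[\mathbf u_1\ \mathbf u_2]^{-1}\mathbf u$, which lies in $\mathscr O(\neigh_{\mathbb C^2}(z_0,\bar z_0))$ because $[\mathbf u_1\ \mathbf u_2]$ is everywhere invertible. Applying $2D_w$ to $\mathbf u=c_1\mathbf u_1+c_2\mathbf u_2$ and using that all of $\mathbf u,\mathbf u_1,\mathbf u_2$ solve the $w$-ODE leaves $(2D_w c_1)\mathbf u_1+(2D_w c_2)\mathbf u_2=0$, so $2D_w c_j=0$ by linear independence; hence $c_j=c_j(z)$ is holomorphic in $z$ alone — equivalently $c_j(z)$ is the $j$-th component of $\mathbf u(z,\bar z_0)$. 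Restricting $\mathbf u=c_1\mathbf u_1+c_2\mathbf u_2$ to $w=\bar z$ then gives \eqref{eq:u2c12}.

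I expect no real difficulty here; the one point needing care is the passage between the PDE picture (the equation $D_{\rm S}(\alpha)\mathbf u=0$ on the real slice $X$) and the ODE picture (a linear equation in $w$ on a neighbourhood in $\mathbb C^2$), which rests entirely on $X$ being totally real, as packaged in \eqref{eq:w2z}: $\partial_w$ restricts to $\partial_{\bar z}$ along $X$, and a holomorphic function vanishing on $X$ vanishes near it. Everything else is uniqueness for linear holomorphic ODEs depending holomorphically on a parameter, exactly as in Lemma~\ref{l:locODE}.
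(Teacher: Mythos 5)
Your proof is correct and follows essentially the same route as the paper's: complexify the real-analytic solution to a neighbourhood of the totally real slice via \eqref{eq:w2z}, use the fundamental system from Lemma~\ref{l:locODE} for the holomorphic ODE in $w$, and conclude that the coefficients depend holomorphically on $z$ alone. Your observation about the sign in the displayed equation is accurate (it is a harmless typo in \eqref{eq:system}, irrelevant to the argument), and your explicit Wronskian and Cramer's-rule steps merely spell out details the paper leaves implicit.
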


\noindent 
{\bf Remark.} The assumption of analyticity of $ V $, made throughout this paper is not necessary here.
Using \cite[Proposition 19.1]{KS} applied in the simplest case of $ m = 1 $ gives the lemma for 
$ V \in C^\infty ( \mathbb C ) $ without the necessity of complexifying $ \mathbb C $ to $ \mathbb C^2 $. 
 Consequently, 
the conclusion of this section work for a more general class of potentials. We opt for a more elementary 
approach under the stronger assumption of real analyticity. 

\begin{proof}[Proof of Lemma \ref{l:locPDE}] 
We have many different choices for $ \mathbf u_j $ and one is given in Lemma \ref{l:locODE}. 
We note that $ \mathbf v_j ( z_0, \bar z_0  ) = \mathbf e_j $ and hence 
$ \mathbf v_j ( z, \bar z ) $ are linear independent for $ z \in \neigh_{\mathbb C} ( z_0 ) $. 
If $ D_{\mathrm{S} } \mathbf u = 0 $ near $ z_0 $ then $ u $ is real analytic in 
$ \neigh_{\mathbb C} ( z_0 ) \simeq \neigh_X ( z_0 , \bar z_0 ) $
(by the standard ellipticity result for operators with analytic coefficients). Hence it extends to 
$ \widetilde { \mathbf u } \in \mathscr \neigh_{\mathbb C^2 } ( z_0 , \bar z_0 ) $ and (as in
\eqref{eq:w2z}) 
\[  2 D_{w } \widetilde {\mathbf u } ( z, w) = \begin{pmatrix} 0 & \alpha V ( z, w )  \\
\alpha & 0 \end{pmatrix} \widetilde {\mathbf u } ( z, w)   . \]
Hence for $ ( z, w ) \in \neigh_{\mathbb C^2 } ( z_0 , \bar z_0 ) $, 
\[ \widetilde  {\mathbf u } ( z, w ) = c_1 ( z ) \mathbf u_1 ( z, w ) + c_2 ( z ) \mathbf u_2 ( z, w ) , \]
and, as $ \widetilde  {\mathbf u } ( z , w ) $ is holomorphic in both variables, the coefficients 
$z \mapsto c_j $ are also holomorphic. Restricting to $ X $ gives \eqref{eq:u2c12}.
\end{proof}

\subsection{Rank-2 holomorphic vector bundle}
\label{s:rank2}
The local structure of solutions shows the existence of the following rank 2 topologically trivial 
holomorphic vector bundle, $ \mathscr E  $, over $ \mathbb C/\Lambda $. We have 
\[ \mathscr E  = \mathbb C/\Lambda \times \mathbb C^2 , \ \ \ \pi : \mathscr E   \mapsto \mathbb C /\Lambda . \]
To define the following covering of $ \mathbb C/\Lambda $: for $ z_0 \in \mathbb C^2 $ we choose
$ U = \neigh_{\mathbb C/\Lambda } ( z_0 ) $ such that \eqref{eq:u2c12} holds with some 
linearly independent (at each $ z \in U $), $ \mathbf u_j $, $ j = 1,2 $. We then define
\[  \begin{gathered} 
g_U = g_{U, \mathbf u_1, \mathbf u_2 } : U \times \mathbb C^2 \to \pi^{-1} (U) , \\
 g_{U} ( z  , \zeta  ) \mapsto ( z , \zeta_1 \mathbf u_1 ( z ) + \zeta_2  \mathbf u_2 ( z ) ) \subset 
 \mathbb C/\Lambda \times \mathbb C^2 , \end{gathered} \]
 noting that, as $ \mathbf u_j ( z ) $ are independent, this is a map is an isomorphism 
 $ \mathbb C^2 \to \pi^{-1} ( z ) $. 
If $ V $  is another neighbourhood with a local basis of the kernel of $ D_S $ over $ V $ given by  $\mathbf v_1 , \mathbf v_2  $, then Lemma \ref{l:locPDE} shows that
\[ \mathbf v_j ( z, \bar z ) = c_{1j}  ( z ) \mathbf u_1 ( z, \bar z ) + c_{2j} ( z) \mathbf u_2 ( z, \bar z ) , \ \ \ 
c_{kj} \in \mathscr O ( U \cap V ) . \]
 Hence for $ z \in U \cap V $, 
\[   g_U^{-1}  \circ g_V |_{ \{ z \}  \times \mathbb C^2}  ( \zeta ) = 
C ( z ) \zeta  , \ \  C ( z ) := \begin{pmatrix} c_{11} ( z ) & c_{12} ( z ) \\
c_{21} ( z ) & c_{22} ( z ) \end{pmatrix} , \]
is holomorphic.

If we now include $ \alpha $ we obtain a family of holomorphic vector bundles $ \mathscr E  ( \alpha ) $:
their local sections correspond to local solutions to  $ D_{\rm{S} } ( \alpha ) \mathbf u = 0 $.

We can describe $ \mathscr E  ( \alpha ) $ at different values of $ \alpha $ and see how it fits in the 
classification of holomorphic rank 2 vector bundles over tori -- see \cite[\S 4]{daly} for a gentle introduction
to that theory. 

We start with the following observation:
\begin{equation}
\label{eq:det2tr}
\text{ The determinant line bundle, $ \wedge ^2 \mathscr E  ( \alpha ) $, is trivial.}
\end{equation}
\begin{proof}[Proof of \eqref{eq:det2tr}] In our choice of $ \mathbf u_j (z) $, defined in $ U$, we can 
always arrange that in $ U $, $  \Wr ( \mathbf u_1 , \mathbf u_2 ) := \mathbf u_1 \wedge \mathbf u_2 \equiv 1 $. In fact, since $ D_{\bar z }  \Wr ( \mathbf u_1, \mathbf u_2 ) = 0$, the Wronskian is holomorphic 
and non-vanishing, so we can replace $ \mathbf u_1 $ by $ \mathbf u_1 / \Wr ( \mathbf u_1, \mathbf u_2 ) $. 
The transition matrix is given by 
\[ C ( z ) = \begin{pmatrix}  \ \ \mathbf v_1 \wedge \mathbf u_2  & \ \ \mathbf v_2 \wedge  \mathbf u_2   \\
 - \mathbf v_1 \wedge  \mathbf u_1  & - \mathbf v_2\wedge  \mathbf u_1 \end{pmatrix} , \ \ \ z \in U \cap V . \]
But the transition factor for $ \wedge ^2 \mathscr E  ( \alpha ) $ is then 
\[ \det C ( z ) = \det \begin{pmatrix}  \mathbf v_1 \wedge \mathbf u_1  &  \mathbf v_1 \wedge  \mathbf u_2  \\
 \mathbf v_2 \wedge  \mathbf u_1  &  \mathbf v_2 \wedge  \mathbf u_2 \end{pmatrix}  =  ( \mathbf v_1 \wedge \mathbf v_2 ) ( \mathbf u_1 \wedge \mathbf u_2 ) \equiv 1. \]
This means that constants are global sections.
\end{proof}

\noindent
{\bf Case 1:} $ \alpha \notin  { \mathcal B } \cup \mathcal A$. In this case
there exists a unique global section satisfying $ D_{\rm S } ( \alpha )  \mathbf u = 0$, 
$ \mathbf u \in C^\omega ( \mathbb C/\Lambda ) $. We note that $ \mathbf u $ is a holomorphic
section of $ \mathscr E  ( \alpha ) $ but obviously it is {\em not} a holomorphic function on $ \mathbb C/\Lambda $.
Since $ \alpha \notin \mathcal  A $, $ \mathbf u $ does not vanish and hence it defines a trivial
line subbundle, $ L ( \alpha ) $ of $ \mathscr E ( \alpha ) $. The quotient line bundle $ \mathscr E  ( \alpha ) / L ( \alpha ) $
has an explicit representation based on the proof of Proposition \ref{p:simple}. We showed there
that for $ \alpha \notin  { \mathcal B } \cup \mathcal A  $, there exists $ 
\mathbf v $ such that $ D_{\rm S} ( \alpha ) \mathbf v = \mathbf u $. We also remark that 
$ \mathbf v (z) $ is independent of $ \mathbf u ( z) $ at every point. In fact, in the notation of \eqref{eq:Wr}, 
\[ \begin{split}  2 D_{\bar z } W ( \mathbf u, \mathbf v ) & = 
2 D_{\bar z }  u_1 v_2 + 2 D_{\bar z } v_2 u_1 - 2 D_{\bar z } u_2 v_1 - 2 D_{\bar z } v_1 u_2 
\\& = 
\alpha V u_2 v_2 + ( v_1 + u_2 ) u_1 - u_1 v_1 - ( \alpha V v_2 + u_1 ) u_2 = 0 .
\end{split} \]
Hence dependence at some point would imply that  $ W ( \mathbf  u, \mathbf v ) \equiv 0 $.
Arguing as in \eqref{eq:u2v}, $ \mathbf u ( z, \bar z ) = f ( z ) \mathbf v ( z, \bar z ) $ for 
some meromorphic function $ f $. But then $ \mathbf u $ would have zeros which contradicts 
the assumption $ \alpha \notin \mathcal A$ and Proposition \ref{p:theta}.
 
We now define the following element of $ C^\infty ( \mathbb C; \mathbb C^2 ) $:
\begin{equation}
\label{eq:deftu}    \widetilde {\mathbf u }  ( z  )  := \tfrac{ i } 2 \bar z  \mathbf u - \mathbf v  \end{equation}
which satisfies
\begin{equation}
\label{eq:Dtu}   D_{\rm{S}} ( \alpha )  \widetilde {\mathbf u }   = 0 , \ \ \ 
\widetilde {\mathbf u } ( z + \gamma ) =  \widetilde {\mathbf u } ( z) +  \tfrac{ i } 2  \bar \gamma \mathbf u ( z) . 
\end{equation}
Since $ \mathbf u ( z ) $ are $ \mathbf v ( z )  $ are independent, 
$ \mathbf u $ and $ \widetilde{\mathbf u}  $ can be used as a basis, 
near $ z_0 $ and over $ \mathscr O ( \neigh_{\mathbb C/\Lambda } ( z_0 ) ) $, of local solutions to $ D_{\rm{S}} ( \alpha ) \mathbf w = 0 $.

From \eqref{eq:Dtu} we see that $ \widetilde {\mathbf u } $ defines a non-vanishing
 section of $ \mathscr E ( \alpha ) / L ( \alpha ) $.
To see that $ \mathscr E (\alpha) $ is indecomposable, suppose that $ \mathscr E (\alpha) = L \oplus M $. Since $ \mathscr E (\alpha) $ and $ L $ are trivial 
so is  $M $ which consequently 
has a global nonvanishing section, $ \mathbf w $. Then $ [ \mathbf w ] 
= c [\widetilde { \mathbf u } ]$  in $ \mathscr E (\alpha)/L $ and we can assume that assume $ c = 1 $. This means that 
\[ \mathbf w ( z, \bar z )=  \widetilde{ \mathbf  u} ( z, \bar z ) +
F ( z, \bar z ) \mathbf u ( z, \bar z ) .\]
Applying the equation we see that $ D_{\bar z } F = 0 $, while \eqref{eq:Dtu} gives
$ F ( z + \gamma ) = F ( z ) - i  \bar \gamma /2 $. Consequently
$ f(z, \bar z ) :=  F ( z ) +   i  \bar z /2$ 
is periodic and solves $ \partial_{\bar z } f = i/2  $. But that is impossible 
as we see by integrating both sides over $ \mathbb C/\Lambda $. 
 
 \noindent
 {\bf Case 2:} $ \alpha \in {\mathcal B } $. This is the easiest case as the proof of 
 Proposition \ref{p:simple} shows that $ \ker D_{\rm{S}} ( \alpha ) $ is spanned by two independent
 solutions $ \mathbf u \in L^2_0 ( \mathbb C/\Lambda; \mathbb C )  $ and $ \mathbf v \in 
  L^2_{-1} ( \mathbb C/\Lambda; \mathbb C )$ (see \eqref{eq:defuv}). In particular, they never vanish and define two trivial line bundles, $L_0 (\alpha ) $ and $ L_{-1} ( \alpha ) $ and
  \[  E ( \alpha ) = L_0 ( \alpha ) \oplus L_{-1} ( \alpha ) . \]
  
  \noindent
  {\bf Case 3:} $ \alpha \in \mathcal  A $. In view of Proposition \ref{p:zeros}, 
the dimension of the space of holomorphic sections of $ \mathscr E  ( \alpha ) $ is $ m ( \alpha ) $, 
which is also the number of zeros of each section. From \eqref{eq:u2vj} we see that 
at each point $ z $,  $ \{ \mathbf w ( z ) : \mathbf w \in \ker D_{\rm{S}} ( \alpha ) \} \simeq \mathbb C $, 
and hence $ \ker D_{\rm{S}} ( \alpha ) $ defines a line bundle, $ L ( \alpha ) $  over $ \mathbb C/\Lambda $. The proof of Proposition \ref{p:zeros} shows that this is the line bundle associated to 
the divisor given by the zeros of any fixed element of $ \ker D_{\rm{S}} ( \alpha )$.

In the notation of
\cite[Definition 2.18]{daly}, $ \mathscr E  $ (we drop $ \alpha $ here) is decomposable if 
$ {\rm{Ext}}^1 ( \mathscr E /L, L ) = \{ 0 \} $. Since $ \mathscr E  $ is trivial, the degree of $ \mathscr E /L $ is equal to $ - m $, where $ m > 0 $
is the degree of $ L $. We have, by Serre's duality (\cite[Proposition 2.21]{daly}, with trivial $ K_X $ for tori), $ {\rm{Ext}}^1 ( \mathscr E /L , L ) \simeq H^1 ( ( \mathscr E /L)^* \otimes L ) \simeq  H^0 ( ( \mathscr E /L ) \otimes L^* )^* $. But the last space is trivial as 
the degree of the line bundle $ ( \mathscr E /L ) \otimes L^*  $ is $ - 2 m < 0 $. 
 Hence, indeed, $ \mathscr E  $ is decomposable.

As noted in \eqref{eq:det2tr}, the determinant bundle of $ \mathscr E  ( \alpha ) $ is trivial, and if 
$ \mathscr E  ( \alpha ) = L ( \alpha ) \oplus M ( \alpha ) $ then 
$ M ( \alpha ) = \det \mathscr E  ( \alpha )  \otimes L ( \alpha)^* = L( \alpha )^* $. Hence we can identify $ \mathscr E  ( \alpha ) $ with
$ L ( \alpha ) \oplus L ( \alpha )^* $.

\begin{figure} 
\centering
 \includegraphics[width=13cm]{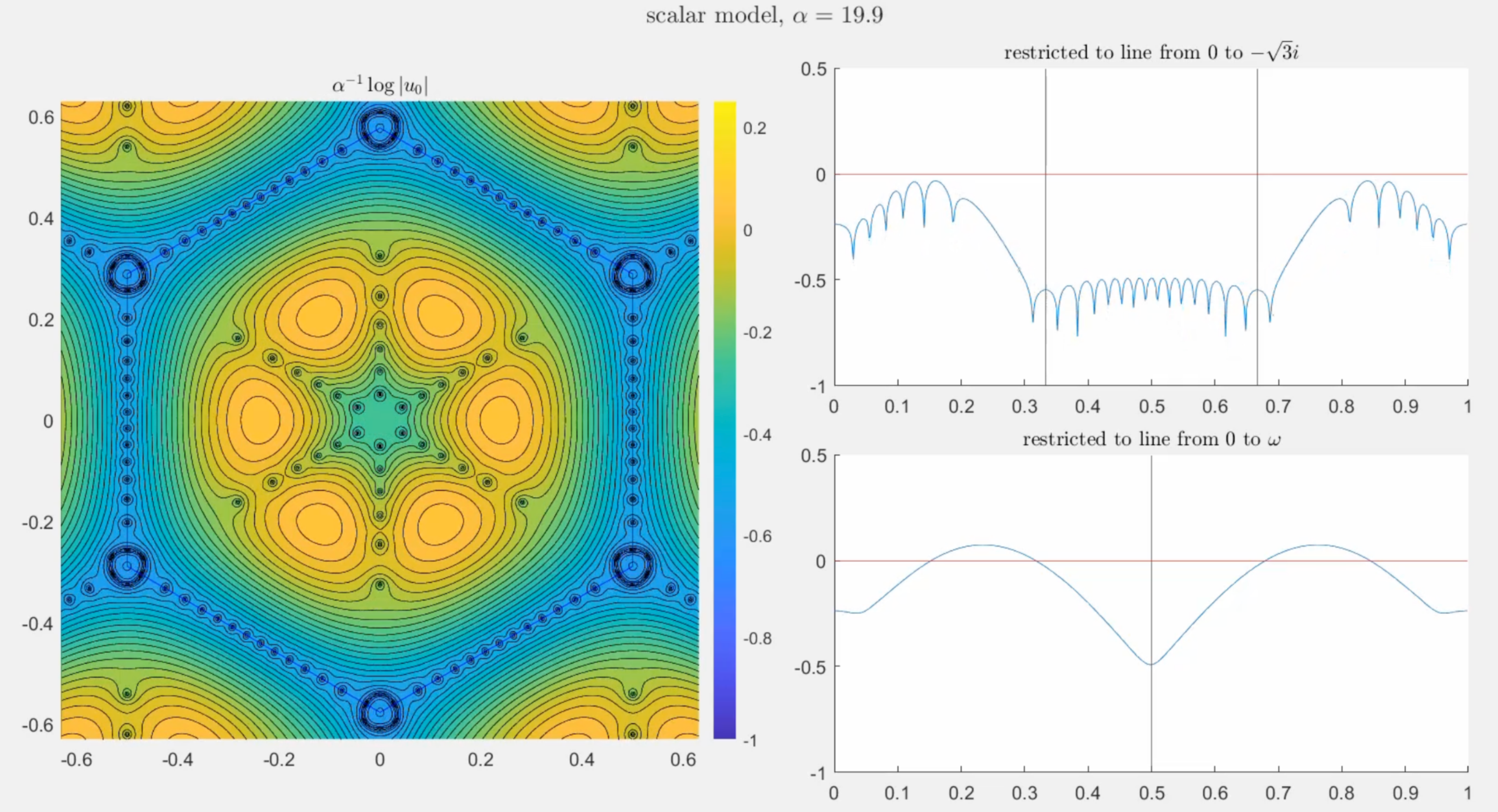}
\caption{\label{f:WKB} On the left: contour plot of $ z \mapsto \alpha^{-1} \log | u ( \alpha, z ) | $, $ \alpha = 20 $, where
$  ( 2 D_{\bar z } )^2 - \alpha^2 U ( z ) U ( -z ) ) u ( \alpha, z ) = 0 $ and $ U $ is the Bistritzer--MacDonald potential \eqref{eq:defU2}. We see the structure of zeros and the regions of exponential decay predicted in \cite{hizw}. On the right:
the plot of $ z \mapsto \alpha^{-1} \log | u ( \alpha, z ) | $ over the the interval connected the center to the edge of the hexagon, continued through the edge and then to the translated center (top) and over the interval connecting the 
centers and perpendicular to the edge of the hexagon. For an animated version with varying $ \alpha $ see
\url{https://math.berkeley.edu/~zworski/scalar_protected.mp4}.}
 \end{figure}

 \section{WKB structure of protected states}
 \label{s:WKB}
 
Numerical computations indicate that the family of protected states, $ u ( \alpha ) $, (see Proposition \ref{p:simple}) 
has a structure of a WKB state -- see Figure \ref{f:WKB} for the case of the Bistritzer--MacDonald potential \eqref{eq:defU2}
It was shown in \cite{hizw} that the ($L^2$-normalised) protected state (in the scalar and chiral models) for that potential is exponentially small, that is of size $ \exp( - c_0 \alpha ) $ in a neighbourhood of the hexagon. Numerically we see
that outside of any neighbourhood of the 6 maxima of $ | u ( \alpha ) | $ we have exponential decay of the solution. 
Mathematically, the key property behind the exponential decay near the hexagon is the vanishing of the 
Poisson bracket $ \{ q , \bar q \}|_{q^{-1} ( 0 ) } $, $ q := ( 2 \bar \zeta )^2 - V ( z ) $ on the hexagon and its non-vanishing outside of the hexagon (except at the center). We refer to \cite{hizw} for the detailed discussion of this.

The structure changes for other potentials and an interesting case is provided by 
\begin{equation}
\label{eq:Hpot}  U ( z ) = i \overline { U_{BM} ( z ) }^2 ,
\end{equation} 
which satisfies the conditions \eqref{eq:defU} but for which $ \{ q , \bar q \}_{ q^{-1} ( 0 ) } \equiv 0 $. That implies an existence of a global solution to the eikonal equation and suggests different WKB structure.

 In this section we will first discuss the zero creation at the corners of the hexagon for the Bistritzer--MacDonald potential. 
 We will then present a heuristic derivation of a WKB state for the model based on \eqref{eq:Hpot} and an argument towards the proof of Theorem \ref{t:2}.

 \subsection{Mechanism for the creation of zeros for the Bistritzer--MacDonald potential}
 \label{s:mechzero}
 For the scalar model numerical calculation show that the real elements of $ \mathcal A $ have multiplicity 2. 
 In view of the arguments presented in \S \ref{s:mult2} this means that the protected state 
 $ \mathbf v ( \alpha ) $, $ D_{\rm{S}} ( \alpha ) \mathbf v ( \alpha ) = 0 $ has simple zeros (in the sense of \eqref{eq:defM}) at the 
 corners of the hexagon. Using \eqref{eq:syst0} we see that 
 \[ \mathbf v ( \alpha , z) = \begin{pmatrix} \alpha^{-1} 2 D_{\bar z } u ( \alpha, z )  \\
 u ( \alpha, z )  \end{pmatrix} , \ \ \   (( 2 D_{\bar z } )^2 - \alpha^2 V ( z )) u ( \alpha, z ) = 0 . \]
 The structure of $ u ( \alpha, z ) $ was illustrated in Figure \ref{f:WKB}.
 
 In this section we will describe an approximate local mechanism for the creation of zeros of $ u $ and 
 $ \partial_{\bar z } u $ at the corner. The symmetries of $ u $ will play a crucial role here.

We start by describing $ V ( z ) = U_{\rm{BM} }  ( z ) U_{\rm{BM} } ( -z ) $, where $ U_{\rm{BM} } $ is given in
\eqref{eq:defU2} near the corner of the hexagon $ z_S $: 
\begin{equation}
\label{eq:V2Ai} V ( z_S + w ) = - i a \bar w + b w^2 + \mathcal O ( |w|^3 ), \ \ \ a = \tfrac{32}3 \pi^3 , \ \ b = \tfrac{32}9 \pi^4 + \mathcal O ( |w|^3 ) . \end{equation} 
The model operator near the corner (writing $ z = w $) is then given by 
\begin{equation}
\label{eq:modelP}  P = ( 2  D_{\bar z } )^2 + \alpha^2( i a \bar z - b z^2)   . \end{equation}
To represent the solutions using Airy functions we write 
\[ z = -  i \beta w , \  \ \  \bar z = i \beta \bar w ,   \ \ D_{\bar{z }} = -  \beta^{-1}   \partial_{\bar w } ,   \]
so that if $ \beta  = ( 2 / \alpha )^{\frac23} a ^{-\frac13} $ then
\[ \begin{split} P & = (2h/ \alpha )^2 \partial_{\bar w }^2  - a \alpha \bar w + b \alpha^2 w^2 \\
& = 
4 ( \alpha/2 )^{4/3} a^{\frac13}  ( \partial_{\bar w}^2 - \bar w + c \alpha^{-\frac23} w^2 ) , \ \ \ 
c :=  2^{\frac23} b/ a^{\frac43} =  \tfrac 23 ( \tfrac {16} 3 )^{\frac13} > 0 . \end{split}  \]
We keep in mind that
\begin{equation}
\label{eq:z2w}   w = i \beta ( \alpha )  z  , \ \ \ \beta ( \alpha ) := (2/a)^{\frac23}  a^{-\frac13} .  \end{equation} 
Let $ u $ denote the solution in the original variables of \eqref{eq:scalar0}. We want to deduce the symmetries of
 (local) solutions to $ P v  = 0 $ pretending that the operators agree. In view of \eqref{eq:symu}
 and the fact that $ \omega z_S \equiv z_S\!\! \mod \! \Lambda $, $  \bar z_S = - z_S \!\! \mod \! \Lambda $,  we have 
\begin{equation}
\label{eq:symv} \begin{gathered}  v (  w ) = u ( z_S + \omega i \beta w ) = 
u ( \omega ( z_S +  i \beta w  ) ) = u ( z_S + \omega i \beta w  ) = v ( \omega w  ) , \\
  v  ( \bar  w ) = u ( z_S +  i \beta \bar  w ) = u  ( \overline{ - z_S - i \beta w  } ) = 
\overline{ u ( -z_S - i \beta w  ) } = \overline{ u  ( z_S + i \beta w ) } = \overline{ v   ( w ) } . \end{gathered} \end{equation} 
(Note that this last consequence is global in the sense that we use global properties of our solution.)

\begin{figure} 
\centering
\includegraphics[width=6cm]{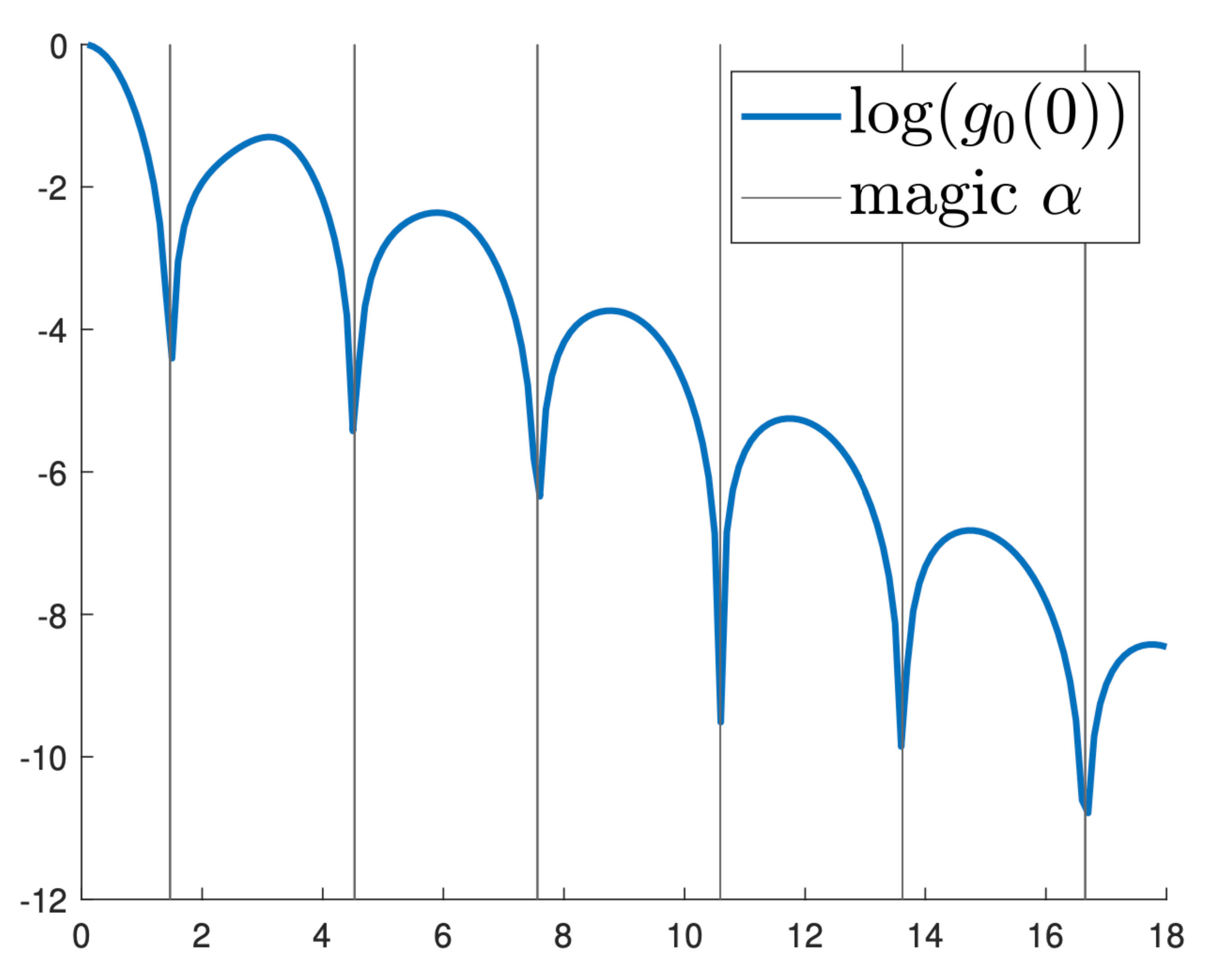}
\caption{\label{f:g0} Illustration of $ g_0( 0 , \alpha ) $ in \eqref{eq:decv} and \eqref{eq:f2g} -- it is calculated by 
numerically evaluating the Wronskian of the protected state with $ Ai_2 ( \bar w - \alpha^{-\frac23} c w^2 ) $.
We see exponential decay explained in \cite[Theorem 1]{hizw} and the zeros at the magic angles of the scalar model 
for the potential \eqref{eq:defU2}.}
  \end{figure}

We recall some basic facts about Airy functions\footnote{See \url{https://mtaylor.web.unc.edu/wp-content/uploads/sites/16915/2018/04/airyf.pdf} for a quick but thorough introduction.}. The first Airy function is given by 
\[ Ai ( s ) =  \frac{1}{ 2\pi } \int_{\mathbb R } e^{ i st +i  t^3/3 } dt , \ \ \  \partial_s^2 Ai ( s) - s A i ( s ) = 0. \]
By either considering the equation satisfied by $ Ai $ or by deforming the contour in the integral we see that
$ \mathbb C \ni \xi \mapsto Ai ( \xi )$ is  holomorphic and $ Ai ( \overline \xi ) = \overline{ Ai ( \xi ) } $. 
That allows to define 
\begin{equation}
\label{eq:defAij}  
Ai_j ( \xi ) := \sum_{ k=0}^{j} \omega^{ kj} Ai ( \omega^k \xi ) . 
\end{equation}
It follows that $ Ai_1 \equiv 0 $ and that $ Ai_0 $ and $ Ai_2 $ give a basis of solutions to 
the Airy equations ($ Ai_0 ( 0 ) = 3 Ai ( 0 ) \neq 0 $, $ Ai_0' ( 0 ) = 0 $, $ Ai_2 ( 0 ) = 0 $, $ Ai_2'( 0 ) = 3 Ai'(0) \neq 0 $). We also have the following properties 
\begin{equation}
\label{eq:propAij} 
\overline{ Ai_j ( \xi ) }  
= Ai_j ( \overline \xi ), \ \ \
Ai_j ( \omega \xi ) 
=  \omega^{-j} Ai_j ( \xi ) . 
\end{equation}

\begin{figure} 
\centering
\includegraphics[width=4cm]{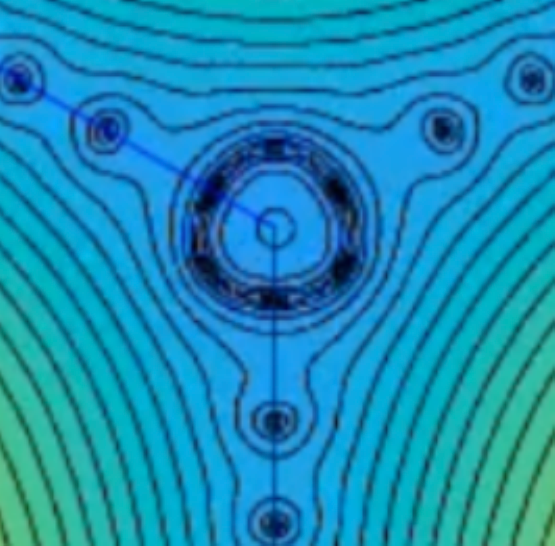} \includegraphics[width=4cm]{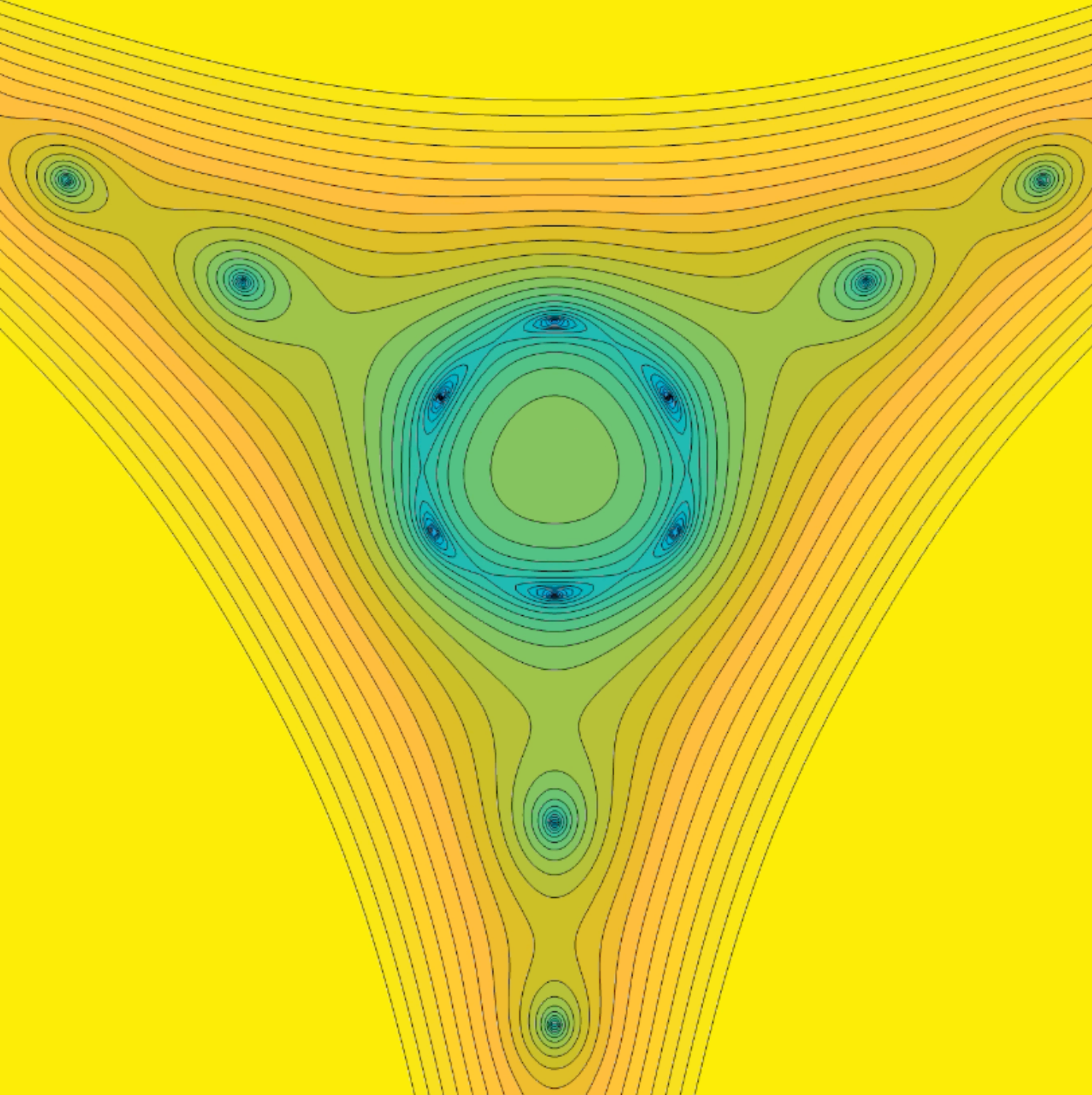}
\caption{\label{f:zerocr} On the left a snapshot from the movie referenced in Figure \ref{f:WKB} showing the evolutions of 
zeros at the corner; on the right a snapshot from \url{https://math.berkeley.edu/~zworski/airy_movie.mp4} showing the 
same structure for \eqref{eq:decv} with $ g_0 $ given by \eqref{eq:apg0} with $ c_j = 1 $, $j < 4 $, $ c_4 = 0 $.}
  \end{figure}

We now consider solutions to  $ P v = 0 $  where $ P$ is  \eqref{eq:modelP} but in coordinates \eqref{eq:z2w}. 
In view of \eqref{eq:V2Ai} these solutions approximate solutions to $ P ( \alpha, 0 ) u = 0 $ near the 
corners of the hexagon. They are given by 
\begin{equation}
\label{eq:decv}  v ( w , \bar w ) = f_0 ( w, \alpha ) Ai_0 ( \bar w - \alpha^{-\frac23} c w^2 ) + f_2 ( w, \alpha  ) Ai_2 ( \bar w - \alpha^{-\frac23} c w^2 ) ,
\end{equation} 
where $ f_j $, $ j = 0, 2 $ are holomorphic in $ w $. 
Combining \eqref{eq:symv} with \eqref{eq:propAij} we obtain,
with  $ g_j $, $j =1,2 $,  holomorphic, 
\begin{equation}
\label{eq:f2g}   f_0 ( w , \alpha ) = g_0 ( w^3, \alpha ) , \ \ \ f_2 ( w, \alpha ) = w g_2 ( w^3 , \alpha ) , 
\ \ \  g_j ( \bar \zeta , \alpha ) = \overline { g_j ( \zeta, \alpha ) } .
\end{equation}
Numerically we can compute $ g_j $'s by taking Wronskians of the actual solution to $ P ( \alpha, 0 )u = 0 $
with $ Ai_j ( \bar w -  \alpha^{-\frac23} c w^2 ) $. The resulting functions $ g_j $ are close to being holomorphic 
(the Airy functions provide approximate solutions for $ P ( \alpha, 0 ) $ -- see \eqref{eq:V2Ai}; they are holomorphic when solutions to $ P v = 0 $ are decomposed using \eqref{eq:decv}) and satisfy 
\[   \begin{gathered} g_0' ( 0, \alpha ) \neq 0  , \ \ \ 
g_2 ( 0, \alpha  ) \neq 0 . 
\end{gathered} \]
and  the zeros of $  g_0 ( 0, \alpha ) $ determine the distribution of $ \alpha$'s. Figure \ref{f:g0} shows the plot of the
numerically computed $ \log | g_0 ( 0 , \alpha )| $ as function of $ \alpha $ (using \eqref{eq:decv} and \eqref{eq:f2g}). 
Exponential decay and the way zeros are created suggests  that 
\begin{equation} 
\label{eq:apg0} g_0 ( \zeta , \alpha ) \simeq e^{ - c_0 \alpha } ( c_1 \zeta + c_2 \sin( c_3 \alpha + c_4 ) ) . 
\end{equation}
That choice of $ g_0 $ in \eqref{eq:decv} and \eqref{eq:f2g} matches the the observed mechanism of zero creation -- see Figure \ref{f:zerocr}.

\begin{figure} 
\centering
\includegraphics[width=7cm]{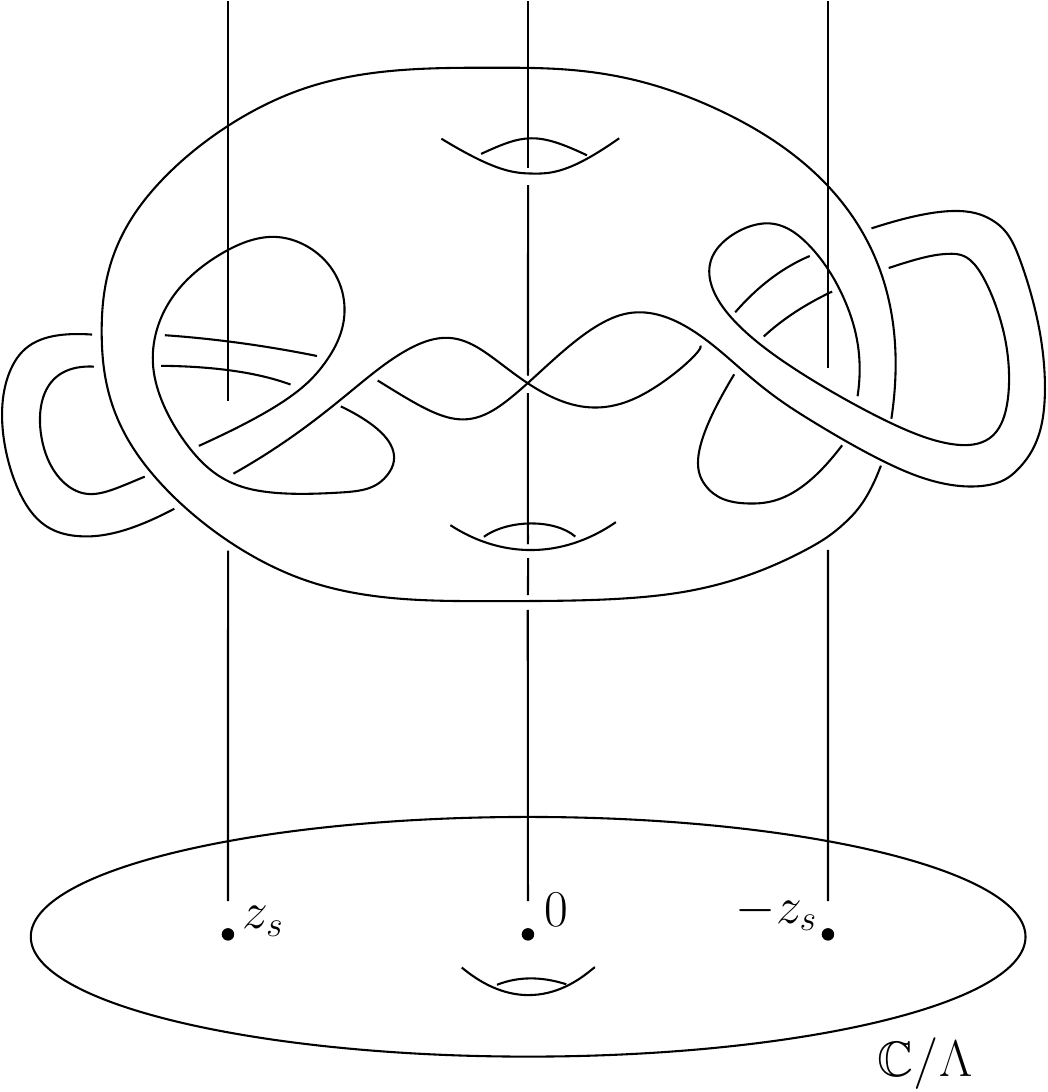} 
\caption{\label{f:BH} An attempt to illustrate the characteristic variety $ q^{-1} ( 0 ) $ for 
the classical symbol of \eqref{eq:scalar0}, $ q ( x, \xi ) = ( 2 \bar \zeta )^2 - V ( z ) $. It is a genus two surface
with singular projection to the base torus above the vertices of the hexagon (the stacking points $ \pm z_S $)
and a single singularity above the center of the hexagon. That singularity is almost a normal crossing $ \bar \zeta^2 - z^2 = 0$  but the two 
surfaces, $ \bar \zeta \simeq \pm z $ crossing transversally at the center (and momentum zero) are not smooth but $ C^{2,\gamma} $, 
$ \gamma < 1 $.  Elsewhere, 
the surfaces is real analytic.}
  \end{figure}

Ideally one would like to connect solutions approximated by \eqref{eq:decv} with other local WKB solutions to obtain an approximation for $ g_0 ( 0 , \alpha ) $ coming from a global solution: knowing $ c_3 $ in \eqref{eq:apg0} would then given a spacing in the values of real $ \alpha $'s as in \eqref{eq:quants}. An attempt in that direction was made in \cite{renu} in a somewhat different spirit than proposed in this section. The difficulty lies in a complicated structure of the characteristic variety $  q^{-1} ( 0 ) $, $ q ( x, \xi ) = ( 2 \bar \zeta )^2 - V ( z ) $, $ \zeta = \frac12 ( \xi_1 - i \xi_2 ) $, $ z = x_1 + i x_2 $ (this is the classical observable which is semiclassical quantised, with $ h = 1/\alpha $ to \eqref{eq:scalar0} -- see \cite{hizw}). 
An attempt to illustrate that structure is made in Figure \ref{f:BH}. The exponential decay of the solution near the hexagon  adds to the many complications.

\begin{figure} 
\centering
\includegraphics[width=5cm]{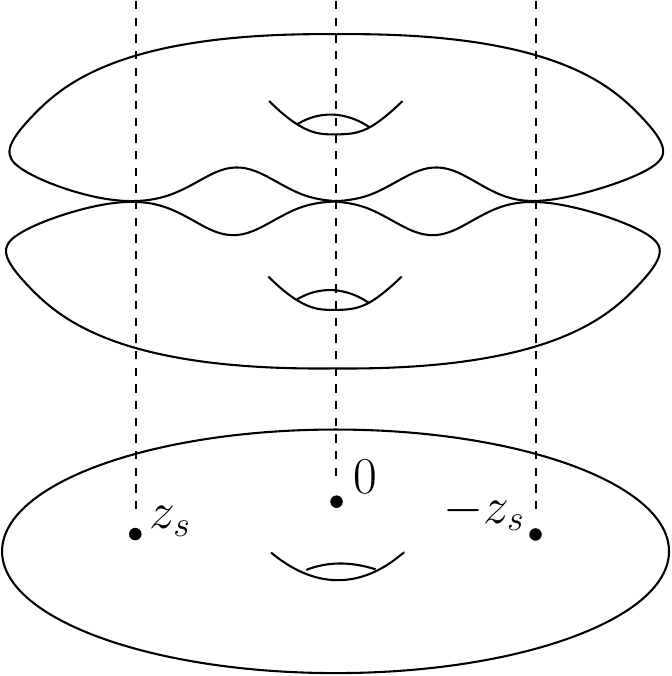}
\caption{\label{f:Hch}  The characteristic variety $ q^{-1} ( 0 ) $ for 
the classical symbol of \eqref{eq:scalar0}, $ q ( x, \xi ) = ( 2 \bar \zeta )^2 - V ( z ) $ with $ V ( z ) = U ( z ) U ( -z ) $
and $ U ( z ) = i \overline{ U_{\rm{BM}} ( z ) }^2 $. It is a union of graphs of $ \pm d \varphi ( z )$, where
$ \varphi $ is real and is given in \eqref{eq:defph}. These two real Lagrangian tori touch at the 
edges of the hexagon and at its center. That complicates the WKB structure of solutions which nevertheless is much simpler than that for the characteristic variety shown in Figure \ref{f:BH}.}
  \end{figure}

 \subsection{An example with a quantisation condition for magic alphas}
 \label{s:Henry} 
 
We now consider $ U $ given by \eqref{eq:Hpot}. The properties \eqref{eq:defU} are inherited from the same 
properties being satisfied by $ U_{\rm{BM}} $. Hence, the set of magic alphas $ \mathcal A$ is well defined 
both for the corresponding chiral and scalar models. Here we will consider the latter.
What changes is the structure of the characteristic variety $ q^{-1} ( 0 ) $, $ q := ( 2 \bar \zeta)^2 - V ( z ) $. 
It consists of two Lagrangian tori with three high tangency common points -- it is shown in Figure \ref{f:Hch}.

The simpler structure of the characteristic variety is due the existence of a global phase function. To see it we 
write the potential as 
\begin{equation}
\label{eq:Vbar2}
\begin{gathered} 
    V(z) = W(z)^2 =  U (z ) U ( - z ) , \ \ \   U ( z ) := i \overline {U_{\rm{BM}} ( z ) }^2 , \\
     W(z) := i \overline{U(z)U(-z)} = - 2i\left(\frac{4\pi}{3}\right)^2
    \sum_{l=0}^2 \omega^l \cos \left( \left\langle z, \omega^{l} {4\pi i}/{\sqrt{3}}\right\rangle\right).
    \end{gathered} 
\end{equation}
The eikonal equation for $ ( 2 D_{\bar z } ) - \alpha^2 W ( z )^2 $ (where we think of $ \alpha = 1/h $ as a semiclassical parameter) reads 
\begin{equation}
\label{eq:eiko}
2 \partial_{\bar z } \varphi ( z) = W ( z ) , 
\end{equation}
and it is solved by a real valued function:
\begin{equation}
\label{eq:defph} 
    \varphi ( z ) := \frac{8\pi}{3\sqrt{3}}\sum_{l = 0}^2 \sin\left(\left\langle z,{4\pi i}/{\sqrt{3}}\omega^l\right\rangle\right). 
   \end{equation}
Hence 
\begin{equation}
\label{eq:Hch}   
\begin{gathered} q^{-1} ( 0 ) = \{ ( x, d_x \varphi ( x)  ) : x \in \mathbb C/\Lambda \} \cup \{ ( x , - d_x \varphi (x ) ) : x \in \mathbb C/\Lambda \} , \\
q ( x, \xi ) = (2 \bar \zeta )^2  - W ( z)^2 , \ \ \ \zeta = \tfrac12 ( \xi_1 - i \xi_2 ) , \ \  z = x_1 + i x_2 . 
\end{gathered}
\end{equation}
In particular, we have 
\begin{equation}
\label{eq:0bra}  \{ q , \bar q \}|_{ q^{-1} ( 0 ) } = 0 ,\end{equation}
and consequently there is no mechanism for exponential decay of solutions to $ P ( \alpha, 0 ) u = 0 $ as in the case of 
$ U =  U_{\rm{BM}} $ -- see \cite{hizw}.

The phase function \eqref{eq:defph} enjoys the following natural symmetries:
\begin{equation}
\label{eq:symph}
\begin{gathered}
 \varphi ( -z ) = \varphi ( \bar z ) = - \varphi ( z ) , \ \ \ \varphi ( \tau z ) = - \varphi ( z ) , \ \ \ \tau = e^{ - i \pi / 3 } .
 \end{gathered} 
 \end{equation}
For future reference we also consider the behaviour of $ \varphi $ at the vertex, $  z = z_S $ and at the center $ z = 0$, 
of the hexagon. For that we first note that \eqref{eq:V2Ai} gives 
\[ W ( z_S +w  ) = i \overline {V_{\rm{BM}} ( z )}
= i ( - i a w + b \bar w^2 + \mathcal O ( |w|^3) ))  = a w + \mathcal O ( |w|^2 ) ,\]
so that 
\begin{equation}
\label{eq:phi_at_zS}
\varphi( z_S + w ) = \tfrac12 a |w|^2  + \mathcal O ( |w|^3 )  , \  \ \ a = \tfrac{32} 3 \pi^3 .
\end{equation}
At the center we have 
\begin{equation}
\label{eq:phi_at_0}
\begin{split} \varphi( z ) & = \frac{8\pi}{3\sqrt{3}}\sum_{\ell = 0}^2 \sin\left(\frac{2\pi i}{\sqrt{3}} ( \bar z \omega^\ell - z \bar \omega^\ell ) \right)
 = -i \frac{ 32 \pi^4}{ 81} \sum_{\ell = 0}^2  \left( \bar z \omega^\ell - z \bar \omega^\ell \right)^3  + \mathcal O ( |z|^5 )\\
 & = -i \frac{ 32 \pi^4 }{27} ( \bar z^3 - z^3 ) + \mathcal O ( |z|^5 ) 
  = - \frac{ 64 \pi^4 }{27} \Im z^3 + \mathcal O ( |z|^5 ) . 
\end{split} 
\end{equation}

\begin{figure} 
\centering
\includegraphics[width=12cm]{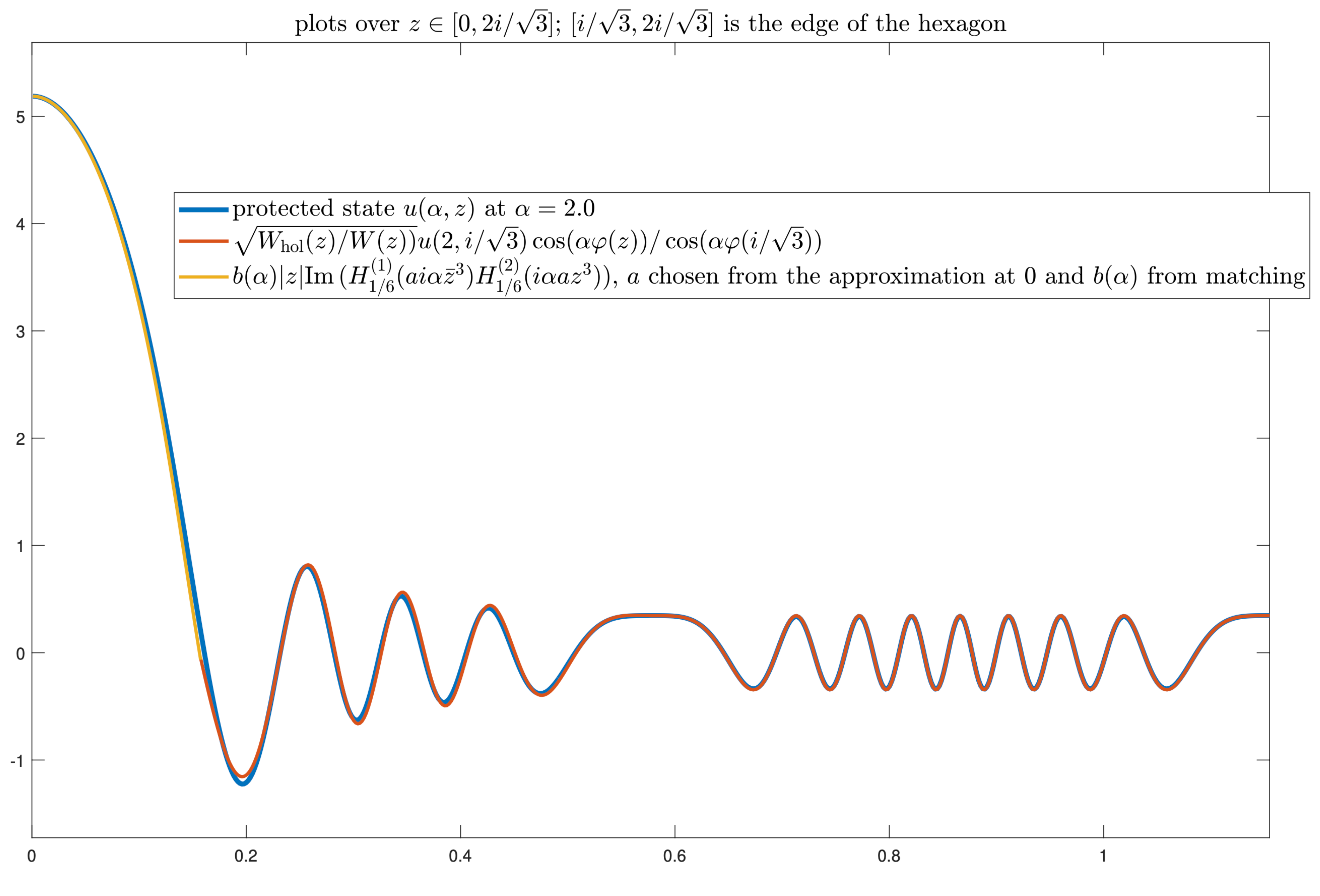}
\caption{\label{f:on_the_edge} The protected state for the potential \eqref{eq:Hpot} over a segment on the
imaginary axis (it is real valued there) compared to WKB approximations near the center of the hexagon
(based on Hankel functions) and away from the center (based on a standard WKB approximation multiplied by a
suitable holomorphic function). Although we do not present a rigorous proof of this approximation the agreement with the
numerics is quite striking. For an animated version with varying $ \alpha $, see \url{https://math.berkeley.edu/~zworski/on_the_edge.mp4}.}
  \end{figure}

We will now propose an approximate solution to $ P ( \alpha, 0 ) u = 0 $ satisfying the symmetries of
the protected state \eqref{eq:symu}. We will construct $ u $ over the closed parallelogram, $ \Sigma $, spanned
by $ \omega  $ and $ - \omega^2 $ (this points are $ \Lambda $-congruent to 0, the centre of the hexagon) and $
(\omega - \omega^2)/3 $, $ 2 ( \omega - \omega^2 )/3 $ (vertices of the
hexagon). We remark that 
\begin{equation}
\label{eq:defSig}   \Sigma_\pm := \Sigma \cap \{ \pm \Re z \geq 0 \}  , \ \ \ \Sigma_0 :=  \Sigma \cap \{ \Re z = 0 \} , 
\end{equation} 
where the $ \pm $ sets are  fundamental domains for the actions of the lattice $ \Lambda $ and 
$ z \mapsto e^{ i \pi /3 } z $ while $ \Sigma_0 $ is the edge of the hexagon.
 The approximate solution is pieced from an approximate solution valid  near $\Sigma_0$ %\sout{away from $\pm \frac12$}}
 ("near the edge") and an approximate solution valid near $\omega, -\omega^2$ %\sout{$ \pm 12 $}
 ("near the centre").

\begin{figure} 
\centering
\includegraphics[width=10cm]{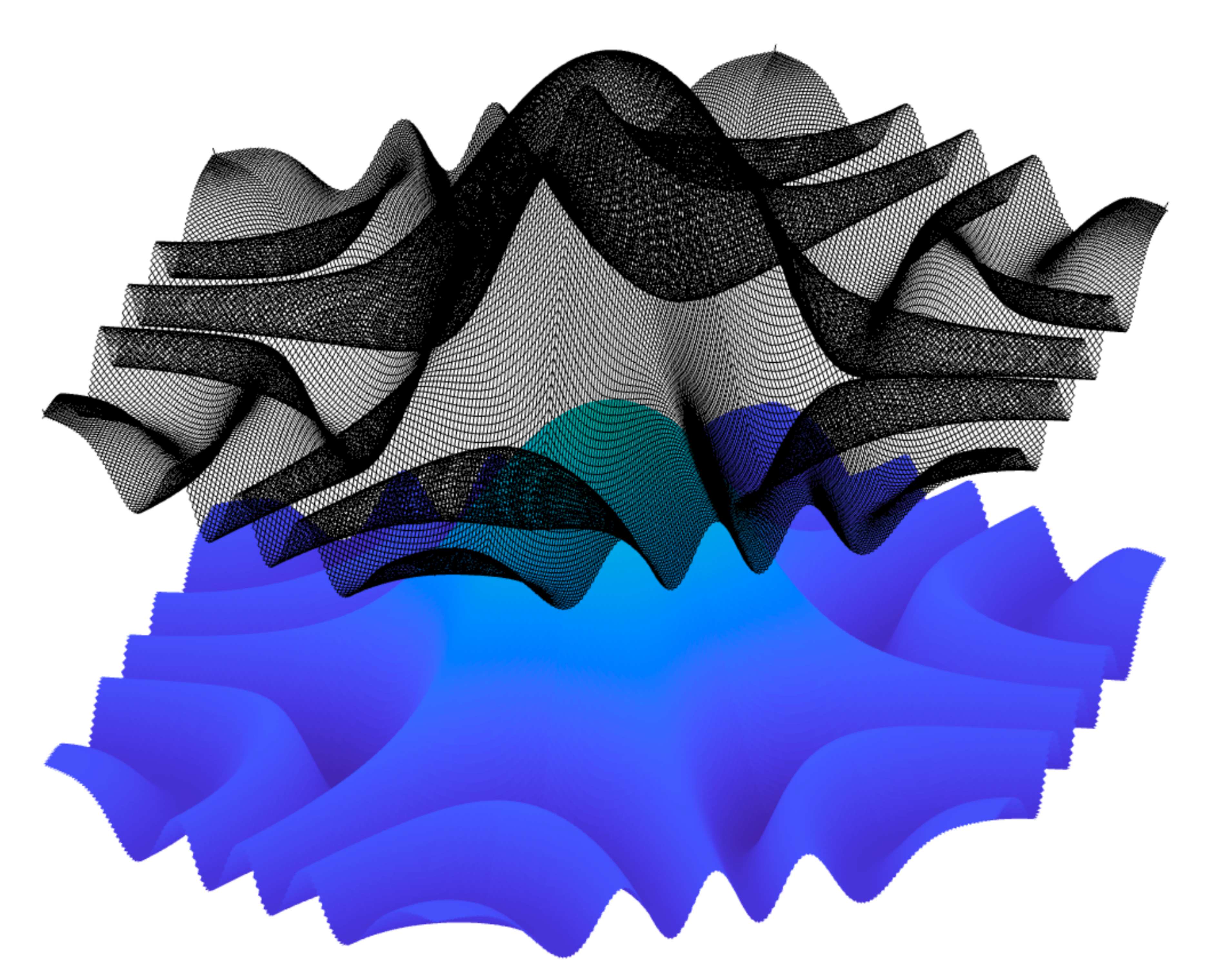}
\caption{\label{f:global} Bottom:  the (dominant) real part of the protected state for $ \alpha = 1 $ over the fundamental domain given by the hexagon. Top: the matched ans\"atze \eqref{eq:cos_ansatz} and \eqref{eq:hankel_ansatz}. The strong agreement in the ``eyeball norm" suggests a rigorous agreement of finer approximations -- see Figure \ref{f:numbers} for some numerical comparisons. A rotating version of this figure can be found at \\
\url{https://math.berkeley.edu/~zworski/rotate_compare.mp4}.}
  \end{figure}

The ultimate goal is to count zeros on the edge as a function of $ \alpha $: a creation of a zero at the vertex at a magic
$ \alpha $ produces two zeros on the edge (see Figure \ref{f:WKB}). Hence counting of those zeros will determine the distribution of real magic $ \alpha$'s.

\begin{figure} 
\centering
\includegraphics[width=12cm]{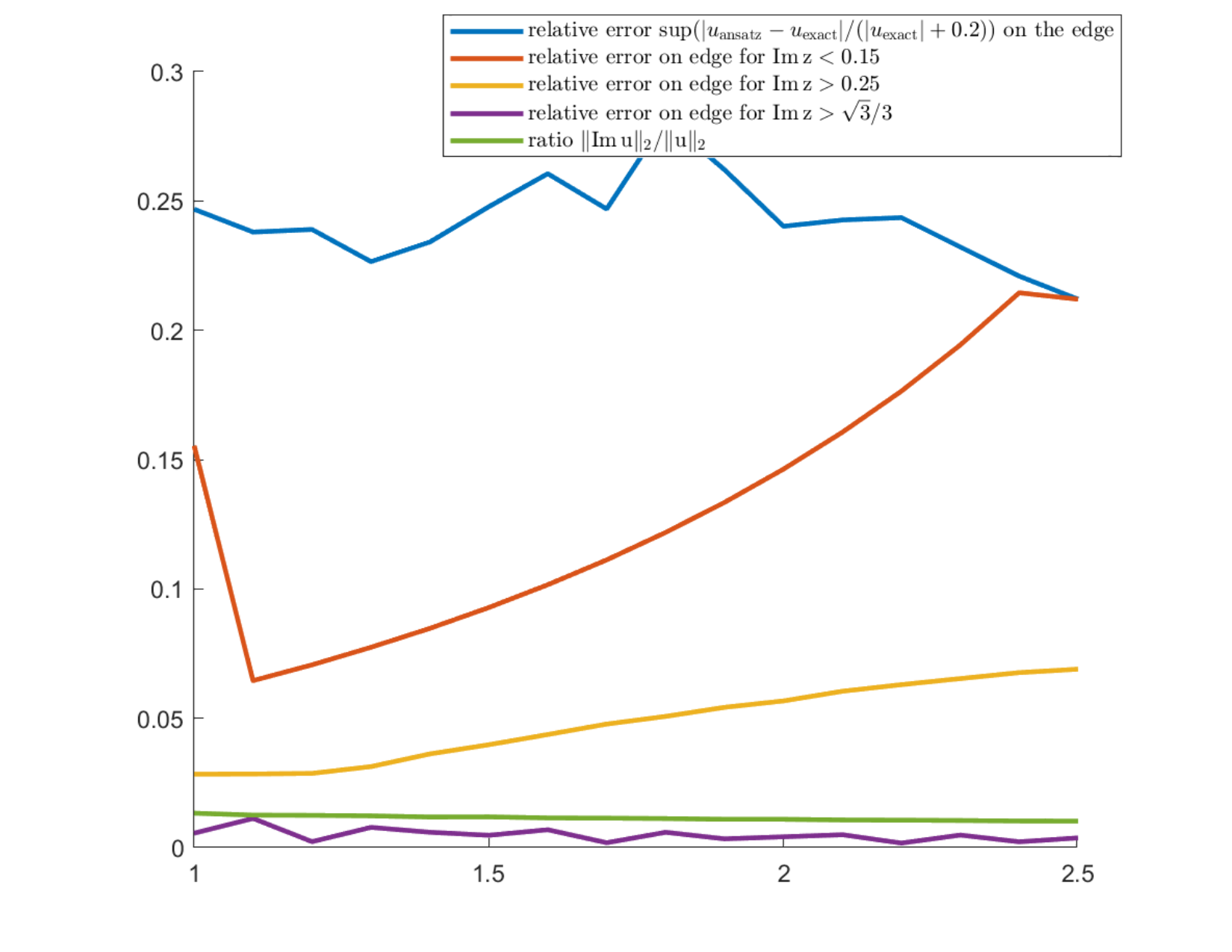}
\caption{\label{f:numbers} Numerical comparison of the rough WKB approximation and the exact solution.
The global relative error is poor due to the quality of the matching; the error for the Hankel function 
approximation ($ 0 < \Im z < 0.15 $) is also not that good as it is based on the Taylor expansion near $ 0 $;
on the other hand the approximation on the edge of the hexagon ($ \Im z > \sqrt 3/3 $) is excellent. Numerically we also see that the imaginary part is small.} 
\end{figure}

\subsubsection{Approximate solution near the edge}
From \eqref{eq:symu} we see that we seek an approximate solution which is real on  $\Sigma_0$ . %\sout{the boundary of $ \Sigma_\pm $}}. 
The protected state is {\em not} real everywhere but numerically we see that the imaginary part decays with $ \alpha $.
We expect that this is related to the structure of the characteristic variety \eqref{eq:Hch}, \eqref{eq:0bra} but at the moment we do not have a rigorous argument for that. 

Leading terms in a WKB solutions to $ P ( \alpha, 0 ) u = 0 $ are given by $ W ( z )^{-\frac12} \exp ( \pm i \varphi ( z ) ) $. 
The approximation of the protected state should be real and bounded on $ \Sigma_0 $ (the edge of the hexagon). 
Since we can multiply the approximate solutions by (local) holomorphic functions this suggests the following ansatz in 
$ \Sigma $, near  $ \Sigma_0 \cap \{ \Re z = 0 \} $:
\begin{equation}
\label{eq:cos_ansatz} 
u_{\rm E} ( z ) = ( W_{\rm{hol}} ( z ) / W( z ) )^{\frac12} \cos (\alpha \varphi ( z ) ) , \ \ \
W_{\rm{hol}} \in \mathscr O ( \Sigma ) , \ \ \ W_{\rm{hol}} |_{\Sigma_0 } = W|_{\Sigma_0 } .
\end{equation}
In words, $ W_{\rm{hol}} $ is the holomorphic extension of the real analytic function $ W ( z ) $ from $ \Sigma_0 $ to $ \Sigma $:
\[  W_{\rm{hol}} ( z )  = 
- 2i\left(\frac{4\pi}{3}\right)^2 \left(\cos  \left( 4 \pi i   z / {\sqrt 3}\right) - \cos \left( 2  \pi i   z/  {\sqrt 3} \right)  \right) ,
\] 
    
       The ansatz \eqref{eq:cos_ansatz} is in remarkable agreement with the protected state away from the $ \pm \frac12 $ that is
   away from the center of the hexagon -- see Figure \ref{f:global}. That is particularly striking on the edge continued
   (in a different fundamental domain) to the center -- see Figure \ref{f:on_the_edge}.  We stress that a much finer analysis would be needed to justify this agreement. That has to involve matching the ansatz \eqref{eq:cos_ansatz} with the rougher
 ansatz near the center \eqref{eq:hankel_ansatz}.

\subsubsection{Approximate solution near the centre}

Near the center we will use a simplified version of $ \varphi ( z ) $ obtained by using the leading term in 
\eqref{eq:phi_at_0}:
\begin{equation}
\label{eq:defphi0}   \varphi_0 ( z ) :=   a i  (  z^3 - \bar z^3  )  , \ \ \ W_0 ( z ) := 2 \partial_{\bar z } \varphi_0 ( z ) = 
- 6 i a \bar z^2 , \ \ \
 a := 
\frac{ 32 \pi^4 } {27}  .
\end{equation}
(Eventually, when matching, we will use the lattice to move $ 0 $ to the $ \omega, - \omega^2 $ vertices of $ \Sigma $ and
vice versa.) To understand the structure of solutions to $ ( ( 2 D_{\bar z } )^2 - \alpha^2 W_0 ( z)^2 ) u ( z ) = 0 $,
we note that the leading term of a WKB solution is given by 
\begin{equation}
\label{eq:WKB2H}    u_\pm  ( \bar z ) :=  W_0(z)^{-\frac12} e^{ \pm i \varphi_0 ( z ) } = c_0 \bar z^{-1} e^{ \pm \alpha a \bar z^3 } 
e^{ \mp \alpha a z^3 } 
 . \end{equation}
Since we are free to multiply the approximate local solution by holomorphic functions, this suggests a substitution 
\begin{equation}
\label{eq:u2f}
u(\bar z)=\bar z^{-1}f( i \alpha a \bar z^3 ) 
\ \Longrightarrow \   w^2 f'' ( w ) + (w^2 + \tfrac1 9 ) f ( w ) = 0 , \ \ \ w := i \alpha a \bar z^3 . 
\end{equation}
The equation for $ f $  is essentially the Bessel equation: if $ f(w)=w^{1/2}g(w) $, then 
then
$$
w^2 g''(w)+wg'(w)+(w^2-\tfrac{1}{36}) g(w)=0. 
$$
This is the Bessel equation and in our case (in view of \eqref{eq:WKB2H}) 
is it natural to consider solutions as linear combinations of Hankel functions (see \cite[\S 9.1]{abs} for definitions and properties of these functions):
$ g ( w ) = H^{(j)}_{1/6}  ( w ) $, $ j = 1,2 $, 
\begin{equation}
\label{eq:propH}
\begin{gathered} 
H^{(j)}_{1/6} ( w ) = w^{1/6} F_1 ( w^2 ) - (-1)^j i w^{-1/6} F_2 ( w^2 ) ,  \ \ F_j ( 0 ) \neq 0, \ \  F_j ( w ) = \overline{ F_j ( \bar w ) }, 
\\ H^{(j)}_{1/6}  ( w ) \simeq (2/\pi)^{\frac12} w^{-\frac12} e^{ - (-1)^j i ( w - \frac \pi {3 } ) } , \ \ |w| \to \infty, \ \ 
-\pi < \arg w < 2 \pi . 
\end{gathered} 
\end{equation}
(We define the branch of $ w^{\frac16} $ for $ - \pi < \arg w < 2 \pi $ and the values on the negative axis correspond to 
$ \arg w = \pi $.) 
Returning to \eqref{eq:u2f}, we see that the two standard solutions to $ ( 2 D_{\bar z} )^2 - \alpha^2 W_0 ( z ) ) u = 0 $ are 
$$
u_j(\bar z)=\bar z^{1/2}H_{1/6}^{(j)} \left( i a \alpha \bar z^3 \right),\quad j=1,2, \ \ \ a = 
 32 \pi^4  /{27}  .
$$
We note that in view of \eqref{eq:propH}, $ \bar z \mapsto u_j ( \bar z ) $ are entire (once we choose the correct
branch of $ \bar z \mapsto \bar z^{1/2} $. We also note that in view of \eqref{eq:propH} the complex conjugates, 
$ \overline { u_j ( z ) } $ also solve the equation.

We use $ u_j ( \bar z ) $ as approximate solutions in 
\[   \Omega_\pm^\delta   :=  \pm \{ |z| <  \delta ,  | \arg z | \leq \pi / 6 \} , \ \ \ 
\Omega_+^\delta + \omega = \neigh_{ \Sigma} ( \omega ) , \ \ 
\Omega_-^\delta  - \omega^2 = \neigh_{ \Sigma } ( -\omega^2 )  ,\]
that is near the left and right corners of $ \Sigma $. That is the region where \eqref{eq:cos_ansatz} fails to provide
an approximation.

We need to take a combination of $ u_j ( \bar z ) $, $ \overline {u_j ( z ) } $ with holomorphic coefficients which
is smooth at $ 0 $ when extended from $ \Omega_+^\delta $ to a neighbourhood of $ 0$ using the symmetry 
$ z \mapsto e^{ i \pi / 3 } $ and matches the ansatz \eqref{eq:cos_ansatz}. To satisfy these conditions we arrived 
at the following solution to $ ( (2 D_{\bar z } )^2 - \alpha^2 W_0 ( z ) ) u = 0 $:
\begin{equation}
\label{eq:hankel_ansatz}
u_{\rm{C}} ( z ) = i  |z|  \left(   H^{(1)}_{1/6} (a i \alpha \bar z^3 )  H^{(2)}_{1/6} (i\alpha a z^3) - 
\overline {H^{(2)}_{1/6} (a i \alpha \bar z^3 ) } \,  \overline{H^{(2)}_{1/6} (i\alpha a z^3)} \right). 
\end{equation} 
We consider $ u_{\rm C} $ as an approximate solution to $ ( (2 D_{\bar z } )^2 - \alpha^2 W ( z ) ) u = 0$ near the centre.
A finer ansatz involving a more complicated equation than \eqref{eq:u2f}, depending on $ z $ as a parameter, would be needed
to produce a rigorous agreement. Nevertheless, the agreement shown in Figures \ref{f:on_the_edge} and \ref{f:global}
is remarkable.

\subsubsection{Quantisation rule for magic alphas} We expect real $ \alpha \in \mathcal A $  to have multiplicity two 
and the discussion in \S \ref{s:mult2} then shows that at such $ \alpha$'s, $ u ( \alpha ) $ and 
$ D_{\bar z } u ( \alpha ) $  of Proposition \ref{p:simple}
has simple zeros at the corners of the hexagon (in the sense of \eqref{eq:defM} and the discussion following it). 
We could use the ansatz \eqref{eq:cos_ansatz}  to find $ \alpha $'s for which $ u_{\rm C} (\alpha,  i/\sqrt 3) = 0 $
and that provides a good agreement (in view of the very good agreement with $ u $ on the edge of the hexagon -- 
see Figure \ref{f:on_the_edge}.

A more robust way, potentially applicable in other cases such as
the scalar model based on the Bistritzer--MacDonald potential, is given by counting zeros of $ u ( \alpha ) $
(the family of protected states in Theorem \ref{t:1}) on the edge of the hexagon.

\begin{proof}[``Proof" of Theorem \ref{t:2}] Every time a zero occurs at the corner
{{\em two} zeros are added to the zeros on the edge when $ \alpha $ increases} -- see the movies references in Figures \ref{f:WKB} (for the
Bistritzer--MacDonald potential) and \ref{f:on_the_edge} (for the potential \eqref{eq:Hpot}) -- {this follows from the symmetries of $ u $ in \eqref{eq:symu}.}  Hence of 
that number is given by $ N ( \alpha ) $ then the spacing between magic alpha's, $ \Delta \alpha $ is given by
solving $ N ( \alpha + \Delta \alpha ) - N ( \alpha ) = 2 $. The approximation of $ u ( \alpha ) $ given by 
\eqref{eq:cos_ansatz} is given by $ \cos ( \alpha \varphi ( z ) )$. On the edge we have
\[ \varphi ( z ( t ) ) = \frac{ 8 \pi } { 3 \sqrt 3 } \left ( \sin \left ( \frac{ 4 \pi } 3 t \right) -  2 \sin \left ( \frac{ 2 \pi } 3 t \right) \right), \ \ 
z ( t ) = i t/\sqrt 3, \ \ \  1 \leq t \leq 2 , \]
from which we see that $ t \mapsto \varphi ( z ( t ) ) $ is strictly increasing. Hence
$  N ( \alpha ) \simeq \alpha ( \varphi ( 2 ) - \varphi ( 1 ) ) $ and 
\[  \Delta \alpha \simeq \frac{ 2 } {  \varphi ( 2 ) - \varphi ( 1 ) } = \frac14 . \]
This provides a {\em heuristic} argument for \eqref{eq:deltal}. 
\end{proof}
 
\section{A toy model}

We now turn our attention to the toy model $V(x,y)=W(x)^2$ featured in Theorem~\ref{t:3}.
Here $W$ is a 1-periodic function and we take the lattice of periodicity $\Lambda=\mathbb Z^2$.

\subsection{Existence of protected states}

Before studying the toy model, we consider the question of existence of protected states for
operators of the form~\eqref{eq:scalar0} where $V$ is a general potential, periodic
with respect to some lattice $\Lambda\subset \mathbb C$, which no longer needs to satisfy the symmetries~\eqref{eq:defV}.

We recall that for any $ V \in L^\infty ( \mathbb C/\Lambda ; \mathbb C ) $, $ k \in \mathbb C \setminus \Lambda^* $,
$ \alpha \mapsto P( \alpha, k )^{-1} $ is a meromorphic family of operators. In fact, 
for $ k \notin \Lambda^* $, $ ( 2 D_{\bar z } + k )^{-2} : L^2 ( \mathbb C/\Lambda ) \to 
H^2 ( \mathbb C/\Lambda ) $ exists and hence
\[ P ( \alpha, k ) =  ( 2 D_{\bar z } + k )^2( I - \alpha^2  ( 2 D_{\bar z } + k )^{-2} V ( z )). \] 
Since $ \alpha \mapsto \alpha^2  ( 2 D_{\bar z } + k )^{-2} V ( z ) $ is a holomorphic family of compact operators,
meromorphic Fredholm theory (see \cite[Theorem C.8]{res}) gives the meromorphy of  $ \alpha \mapsto  ( I +  
\alpha^2 (   2 D_{\bar z} + k ) ^{-2} V   )^{-1}$  and hence of $ P ( \alpha, k )^{-1} $ (we have the required
invertibility at one point by putting $ \alpha = 0 $).

Under an additional assumption it is easy to that there are no protected states at $ k \in \Lambda^* $. 
That assumption, \eqref{eq:V0}, can never hold for $ V $ satisfying the symmetry assumptions~\eqref{eq:defV}. 
\begin{lemm}
\label{l:W0}
Suppose that $ V $ satisfies
\begin{equation}
\label{eq:V0}       \int_{ \mathbb C/\Lambda } V ( z ) dm ( z ) \neq 0 . 
\end{equation}
Then, in the notation of \eqref{eq:scalar0},  and for all $ k \in \mathbb C $, $ \alpha \mapsto P( \alpha, k )^{-1} $
is a meromorphic family of operators. In particular $ \ker P( \alpha, 0 ) = \{ 0 \} $ except for a discrete set of $ \alpha $'s. 
\end{lemm}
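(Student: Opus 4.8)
The plan is to reduce to the case $k=0$ and then apply analytic Fredholm theory once we know that $P(\alpha,0)$ is invertible at a single value of $\alpha$. The case $k\notin\Lambda^*$ is already treated in the text above, so the only new point is $k\in\Lambda^*$. For such $k$ we use the conjugation $P(\alpha,k)=e^{-i\langle z,k\rangle}P(\alpha,0)e^{i\langle z,k\rangle}$ together with the fact, recorded in \S\ref{s:sym}, that $u\mapsto e^{i\langle z,k\rangle}u$ is unitary on $L^2(\mathbb C/\Lambda)$; hence meromorphy of $\alpha\mapsto P(\alpha,k)^{-1}$ for $k\in\Lambda^*$ follows from that of $\alpha\mapsto P(\alpha,0)^{-1}$, and it remains to show the latter is a meromorphic family of operators $L^2(\mathbb C/\Lambda)\to H^2(\mathbb C/\Lambda)$.

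The crucial step -- and the only place where \eqref{eq:V0} enters -- is to exhibit one $\alpha_0$ at which $P(\alpha_0,0)$ is invertible. Recall that $(2D_{\bar z})^2:H^2(\mathbb C/\Lambda)\to L^2(\mathbb C/\Lambda)$ is Fredholm of index $0$, with kernel and cokernel both spanned by the constant function $\mathbf 1$, and boundedly invertible from the mean-zero functions in $H^2$ onto the mean-zero functions in $L^2$ (immediate from Fourier analysis, since $2D_{\bar z}$ has eigenvalue $\xi$ on the mode $e^{i\langle z,\xi\rangle}$, $\xi\in\Lambda^*$). As $V\in L^\infty$ and $H^2\hookrightarrow L^2$ is compact, $P(\alpha,0)=(2D_{\bar z})^2-\alpha^2V$ is a compact perturbation of $(2D_{\bar z})^2$, hence Fredholm of index $0$ for every $\alpha$; so $P(\alpha_0,0)$ is invertible as soon as $\ker P(\alpha_0,0)=\{0\}$. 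Suppose, for contradiction, that $\ker P(\alpha_n,0)\neq\{0\}$ for some $\alpha_n\to 0$, $\alpha_n\neq 0$, and choose $u_n\in\ker P(\alpha_n,0)$ with $\|u_n\|_{L^2}=1$. Writing $u_n=c_n\mathbf 1+v_n$ with $c_n=|\mathbb C/\Lambda|^{-1}\int_{\mathbb C/\Lambda}u_n\,dm$ and $v_n$ of mean zero, the identity $(2D_{\bar z})^2v_n=(2D_{\bar z})^2u_n=\alpha_n^2Vu_n$ and the invertibility just recalled give $\|v_n\|_{H^2}\le C|\alpha_n|^2\|V\|_{L^\infty}\to 0$, whence $|c_n|\to|\mathbb C/\Lambda|^{-1/2}>0$. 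Integrating $(2D_{\bar z})^2u_n=\alpha_n^2Vu_n$ over $\mathbb C/\Lambda$ and using $\int_{\mathbb C/\Lambda}(2D_{\bar z})^2u_n\,dm=0$ (the integral of a derivative over a closed manifold vanishes) yields $\int_{\mathbb C/\Lambda}Vu_n\,dm=0$ for every $n$; letting $n\to\infty$ and using $v_n\to 0$ in $L^1$ gives $\big(\lim_n c_n\big)\int_{\mathbb C/\Lambda}V\,dm=0$, contradicting \eqref{eq:V0}. Hence $P(\alpha_0,0)$ is invertible for some $\alpha_0$ close to $0$.

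To finish, factor $P(\alpha,0)=P(\alpha_0,0)\big(I-(\alpha^2-\alpha_0^2)P(\alpha_0,0)^{-1}V\big)$. Since $V:L^2\to L^2$ is bounded and $P(\alpha_0,0)^{-1}:L^2\to H^2\hookrightarrow L^2$ is compact, $\alpha\mapsto(\alpha^2-\alpha_0^2)P(\alpha_0,0)^{-1}V$ is an entire family of compact operators on $L^2(\mathbb C/\Lambda)$, and $I-(\alpha^2-\alpha_0^2)P(\alpha_0,0)^{-1}V$ equals $I$ at $\alpha=\alpha_0$. Analytic Fredholm theory (\cite[Theorem C.8]{res}) then shows that $\alpha\mapsto\big(I-(\alpha^2-\alpha_0^2)P(\alpha_0,0)^{-1}V\big)^{-1}$ is meromorphic on $\mathbb C$ with finite-rank principal parts, so $\alpha\mapsto P(\alpha,0)^{-1}=\big(I-(\alpha^2-\alpha_0^2)P(\alpha_0,0)^{-1}V\big)^{-1}P(\alpha_0,0)^{-1}$ is a meromorphic family $L^2\to H^2$; combined with the first paragraph this gives the statement for all $k$. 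The ``in particular'' is then immediate: the poles form a discrete subset of $\mathbb C$, and at every other $\alpha$ the operator $P(\alpha,0)$ is invertible, so $\ker P(\alpha,0)=\{0\}$. The main obstacle is the middle step -- producing the invertibility point $\alpha_0$ via the contradiction argument driven by \eqref{eq:V0}; the reduction to $k=0$ and the analytic Fredholm conclusion are routine.
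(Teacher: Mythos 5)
Your proof is correct, and its overall architecture matches the paper's: reduce $k\in\Lambda^*$ to $k=0$ by conjugation, produce one value of $\alpha$ near $0$ where $P(\alpha,0)$ is invertible using \eqref{eq:V0}, and then invoke analytic Fredholm theory. The difference is in the middle step. The paper sets up a Grushin problem for $(2D_{\bar z})^2$ with the constant function as the one-dimensional obstruction and computes the effective Hamiltonian $E_{-+}(\alpha)=\alpha^2\int V\,dm+\mathcal O(\alpha^4)$ as a Neumann series \eqref{eq:Empz}; invertibility of $P(\alpha,0)$ is then equivalent to $E_{-+}(\alpha)\neq 0$, which \eqref{eq:V0} guarantees on a whole punctured disk $D(0,\varepsilon)\setminus\{0\}$. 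You instead run a contradiction argument with the decomposition $u_n=c_n\mathbf 1+v_n$ into mean and mean-zero parts, showing $v_n\to 0$ and that integrating the equation forces $\int V\,dm=0$. Both arguments isolate the constant mode as the sole obstruction at $\alpha=0$ and use \eqref{eq:V0} to remove it at leading order, so yours is essentially the "by hand" version of the Grushin computation; it is a bit more elementary and self-contained, while the paper's version is quantitative and its formula \eqref{eq:Empz} is reused in the Remark following the lemma (for potentials with Fourier support in a sub-semigroup $\Gamma\subset\Lambda^*$, where every term $E_-V(EV)^{k-1}E_+$ vanishes). One cosmetic point: $c_n$ need not converge, only $|c_n|$ does; either pass to a subsequence or observe directly that $|c_n|\,\bigl|\int V\,dm\bigr|=\bigl|\int Vv_n\,dm\bigr|\to 0$ forces $\int V\,dm=0$.
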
 
\begin{proof} 
For $ k \in \Lambda^* $ we can restrict ourselves to the case of $ k = 0 $ (conjugation by $ e^{ i \langle k , z\rangle } $ reduces to this case). Let us assume for simplicity that $ | \mathbb C/ \Lambda| = 1$. We have the following Grushin problem (see \cite[\S C.1]{res}) for $ P ( 0 , 0 ) $:
\[ \begin{gathered}   \begin{pmatrix}  ( 2 D_{\bar z } )^2   & R_- \\
R_+ & 0 \end{pmatrix} : H^2 ( \mathbb C/\Lambda ; \mathbb C ) \oplus \mathbb C \to 
L^2 ( \mathbb C/\Lambda ; \mathbb C ) \oplus \mathbb C , \\ 
R_- u_- = u_- ,\ \ \  R_+ u  = \langle u , 1 \rangle . \end{gathered} \]
The inverse is given by 
\[   \begin{pmatrix} E & E_+ \\
E_- & 0 \end{pmatrix}, \ \ \  
E_+ = R_- , \ \  E_- = R_+ , \ \ 
Eu = ( 2 D_{\bar z } )^{-2} ( v - \langle v , 1 \rangle ) . 
\] 
We now consider $ P ( \alpha, 0 ) $ as a perturbation of $ P ( 0 , 0 ) $: for small values of $ \alpha $ 
the Grushin problem with the same $ R_\pm $ remains invertible with the inverse (see \cite[C.1.7]{res}):
\begin{equation}
\label{eq:Empz}   \begin{pmatrix} E ( \alpha ) & E_+ ( \alpha )  \\
E_- ( \alpha ) & E_{-+} ( \alpha)  \end{pmatrix}, \ \ 
E_{-+} ( \alpha ) = \sum_{k=1}^\infty \alpha^{2k}  E_- V ( E V )^{k-1}  E_+ . \end{equation} 
We recall from \cite[(C.1.1)]{res} that $ P ( \alpha, 0 ) $ is invertible if and only if $ E_{-+} ( \alpha ) \neq 0 $. 
The leading term in the expansion of $ E_{-+ } ( \alpha )  $ is given by
$ \alpha^2 \int_{\mathbb C /\Lambda } V ( z) dm ( z ) $. Hence \eqref{eq:V0} implies that 
$ P ( \alpha , 0 )^{-1} $ exists for $ \alpha \in D ( 0 , \varepsilon ) \setminus \{ 0 \} $. Since 
$ \alpha \mapsto P ( \alpha, 0 ) $ is a holomorphic family of Fredholm operators we can again apply 
analytic Fredholm theory to obtain global meromorphy of $ P ( \alpha , 0 ) $. 
\end{proof} 

\noindent
{\bf Remark.} Suppose that $ \Gamma \subset \Lambda^* $ such that $ 0 \notin \Gamma $ and 
for $ \gamma, \gamma' \in \Gamma $, $ \gamma + \gamma' \in \Gamma $. Then consider
\begin{equation}
\label{eq:defCG} V ( z ) := \sum_{ \gamma \in \Gamma } a_\gamma e^{ i \langle z , \gamma \rangle} \in C^\infty ( \Lambda/\Gamma ) , \ \ \ 
 a_{\gamma } = \mathcal O ( |\gamma|^{-\infty } ) . \end{equation}
 Then $ \ker P ( \alpha, 0 ) \neq \{ 0 \} $ for all $ \alpha $'s. (Of course the condition \eqref{eq:V0} is obviously violated). This follows from \eqref{eq:Empz}. In fact, if $ C^\infty_{\Gamma} $ denotes the set of $ V's $ of the form
 \eqref{eq:defCG} then 
 \[ E :  C^\infty_\Gamma \to C^\infty_\Gamma , \ \ \ V : C^\infty_\Gamma \to C^\infty_\Gamma , \ \ \ 
 E_- : C^\infty_\Gamma \to 0 . \]
 (Here as above $ V $ as an operator denotes multiplication by $ V $.) 
 
 %%%%%%%%%%%%%%%%%%%%%%%%%%%%%%%%%%%%%%%%%%%%%%%%%%%%%%%%%%%%%%%%%%%%%%%%%%%%%%%%
\subsection{Basic properties of the toy model}
\label{s:toy-basic}

We first assume that the potential takes the form $V(z)=W(z)^2$,
where $W$ is periodic with respect to some lattice $\Lambda\subset\mathbb C$.
 In that case 
the characteristic variety $ q^{-1} ( 0 ) $, $ q = ( 2 \bar \zeta )^2 - W (z)^2 $ is given by a union of two disjoint tori
$ \{ ( x, \xi ) : 2 \bar \zeta = \pm W ( z ) \}$, $ \zeta = \frac12 ( \xi_1 - i \xi_2 ) $. These tori are Lagrangian if and only 
of $ \Im \partial_z W (z ) \equiv 0 $ but we are interested in the general case of complex trigonometric polynomials $ W ( z ) $
satisfying the condition 
\begin{equation}
\label{eq:W0}      W_0 := | \mathbb C/\Lambda |^{-1} \int_{ \mathbb C/\Lambda } W ( z ) dm ( z ) \neq 0 . 
\end{equation}

\begin{figure} 
\centering
\includegraphics[width=7cm]{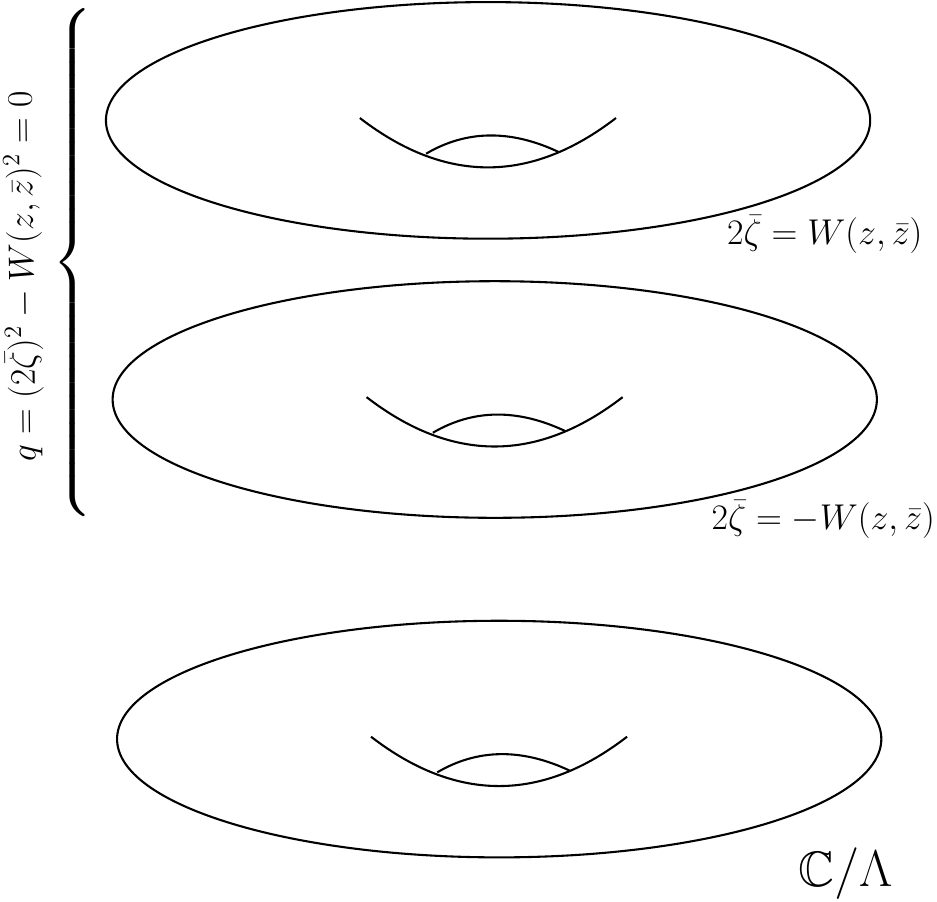}
\caption{\label{f:toy}  The characteristic variety $ q^{-1} ( 0 ) $ for 
the classical symbol of \eqref{eq:scalar0}, $ q ( x, \xi ) = ( 2 \bar \zeta )^2 - V ( z ) $ with $ V ( z ) = W ( z )^2  $
and $ W ( z ) \neq 0 $. It is a union of two {\em disjoint} tori projecting diffeomorphically to the base torus. 
The  tori are Lagrangian only if $ \Im \partial_z W ( z ) \equiv 0 $. Flat bands do not occur for this model but we 
can ask for the values of $ \alpha $ for which $ \ker_{ H^2 ( \mathbb C/\Lambda ) } P ( \alpha, k ) \neq \{ 0 \}$. An 
answer in a special case is given in Theorem \ref{t:3}.}
  \end{figure}

In view of Lemma~\ref{l:W0} it natural to ask about the set 
\begin{equation}
\label{eq:defAk}   \mathcal A ( k ) := \{ \alpha :  \ker P ( \alpha, k ) \neq \{ 0 \} \}. \end{equation}
with the multiplicity of element of $ \mathcal A ( k ) $, $ m ( \alpha , k )$, defined by 
\begin{equation}
\label{eq:defmk} m ( \alpha, k ) := 
\frac{1}{2 \pi i } \tr_{ L^2 ( \mathbb C/\Lambda)   } \oint_{ \alpha} P ( \beta, k )^{-1} \partial_\beta P ( \beta , k ) d \beta,
\end{equation}
where the integral is over a positively oriented circle including $ \alpha $ as the only possible pole
of $ \beta \mapsto P ( \beta, k )^{-1}  $. If $ k \in \Lambda^* $ this assumes meromorphy of $ 
\beta \mapsto P ( \beta, k )^{-1}   $. 

We henceforth specialize to the simplest case $ \Lambda  =  \mathbb Z^2 $ and $ W ( x ,y) = W (x ) $, 
$ ( x, y ) \in \mathbb R^2 $
where $W(x)=W(x+1)$ is real analytic and in fact extends to an entire function. That amounts to the study of the operator $P_\zeta(\alpha)$ defined in~\eqref{eq:Pzet}. We present a self-contained argument which indicates that even in the simplified setting (compared to that in \S \ref{s:WKB}) a rigorous derivation becomes quite 
involved.

As explained in~\eqref{e:realtor} below, we can reduce the study of the kernel of $P_\zeta(\alpha)$
on $1$-periodic functions on~$\mathbb R$ to the semiclassically rescaled operator
$P(\alpha,h):=(hD_x)^2-\alpha^2 W(x)^2$ on functions on~$\mathbb R$ satisfying $u(x+1)=e^{ik/h}u(x)$.

We now assume that the potential $W(x)$ extends to an entire function $W(z)$, which still has
to be 1-periodic. 
The operator $P(\alpha,h)$ naturally extends to an operator on entire functions.
This is the setting in which the analysis of this section is carried out. We emphasize
that the \emph{meaning of the letter $z$ is different here than in the other sections}:
it is the complexification of~$x$. Accordingly, we are looking for elements of the kernel
of $P(\alpha,h)$ which are \emph{holomorphic} in~$z$.

%%%%%%%%%%%%%%%%%%%%%%%%%%%%%%%%%%%%%%%%%%%%%%%%%%%%%%%%%%%%%%%%%%%%%%%%%%%%%%%%
\subsection{Semiclassical formulation}
\label{s:sfo}

We consider the differential operator
$$
P(\alpha,h)=(hD_z)^2-\alpha^2W(z)^2,\quad
z=x+iy\in\mathbb C,\quad
D_z=\frac{1}{2i}(\partial_x-i\partial_y)
$$
where $\alpha\in\mathbb C$, $h>0$, and $W$ is an entire function. We will study
the limit $h\to 0$ and assume that
\begin{equation}
  \label{e:alpha-ass}
C_0^{-1}\leq |\alpha|\leq C_0
\end{equation}
for some constant $C_0$. We should mention that the problem scales and we could consider $ | \alpha | \gg 1 $
instead of  \eqref{e:alpha-ass} and $ h \to 0 $. It is however natural in the context of WKB to use
real $ h \to 0$ as the asymptotic parameter.

We study holomorphic solutions $u(z)$ to the equation
\begin{equation}
  \label{e:schr-1}
P(\alpha,h)u=0.
\end{equation}
We first review some basic properties of second order holomorphic ODEs.
For a holomorphic function $u$, define the vector-valued holomorphic function
\begin{equation}
  \label{e:vec-u-def}
\mathbf u(z)=(u(z),hD_zu(z))\ \in\ \mathbb C^2.
\end{equation}
The Cauchy problem for the equation~\eqref{e:schr-1}
is well-posed: for any $z_0\in\mathbb C$, the map $u\mapsto \mathbf u(z_0)$
is an isomorphism from the space of solutions to~\eqref{e:schr-1}
onto~$\mathbb C^2$ -- see for instance \cite[\S 7.2]{SjBook}. 

For $z_0,z_1\in\mathbb C$, define the transition matrix
$\mathcal M_{z_1\leftarrow z_0}:\mathbb C^2\to\mathbb C^2$ as follows:
for each solution $u$ to~\eqref{e:schr-1} we have
\begin{equation}
  \label{e:transition-def}
\mathbf u(z_1)=\mathcal M_{z_1 \leftarrow z_0}\mathbf u(z_0), \ \ \ \mathcal M_{z_1 \leftarrow z_0 } = 
\mathcal M_{z_1 \leftarrow z_0 } ( \alpha, h ) . 
\end{equation}
For two holomorphic functions $u,v$, define their Wronskian (it differs from the Wronskian used
in \S \ref{s:scal0} by a factor of $ h $):
\begin{equation}
  \label{e:Wr-def}
\Wr(u,v):=u(hD_zv)-(hD_zu)v.
\end{equation}
We will use the identity
\begin{equation}
  \label{e:Wr-identity}
hD_z \Wr(u,v)=uP(\alpha,h)v-vP(\alpha,h)u.
\end{equation}
If $u,v$ are both solutions to~\eqref{e:schr-1}, then~\eqref{e:Wr-identity}
shows that $\Wr(u,v)$ is constant. This implies that
\begin{equation}
  \label{e:transition-det}
\det \mathcal M_{z_1 \leftarrow z_0}=1.
\end{equation}
Finally, let $\Phi(z)$ be an antiderivative of~$W$:
\begin{equation}
  \label{e:Phi-def}
\partial_z\Phi=W.
\end{equation}

\subsection{WKB approximation} 
We will now study the solutions to \eqref{e:schr-1} using the global phase function $\Phi$.

%%%%%%%%%%%%%%%%%%%%%%%%%%%%%%%%%%%%%%%%%%%%%%%%%%%%%%%%%%%%%%%%%%%%%%%%%%%%%%%%
\subsubsection{A growth bound}

From now on in this section, we fix a  smooth non-self-intersecting curve:
\begin{equation}
  \label{e:gamma-cond}
\gamma:[0,1]\to\mathbb C\setminus W^{-1}(0), \quad
\gamma'\neq 0, 
\end{equation}
We henceforth denote by $C_\gamma$ a constant depending only on $\gamma$, $W$, and the constant $C_0$ in~\eqref{e:alpha-ass}, whose precise meaning might change from place to place.

Define the real-valued function
\begin{equation}
  \label{e:psi-def}
\psi_{\gamma,\alpha}(t):=\Im\big(\alpha\Phi(\gamma(t))\big),\quad
t\in [0,1].
\end{equation}
The next lemma is a basic bound on the growth of $|\mathbf u|$ along $\gamma$,
where $u$ solves~\eqref{e:schr-1}: it can grow exponentially at most
like the integral of $|\psi_{\gamma,\alpha}'|$. This is classical and different formulations can be found for instance in 
\cite[\S 7.1]{SjBook} but we give a direct proof in our special case.
%%%%%%%%%%%%%%%%%%%%%%%%%%%%%%%%%%%%%%%%%%%%%%%%%%%%%%%%%%%%%%%%%%%%%%%%%%%%%%%%
\begin{lemm}
  \label{l:energy-bd}
Assume that $u$ solves~\eqref{e:schr-1} and $ \mathbf u $ is given by \eqref{e:vec-u-def}. Then for $ \gamma $ satisfying \eqref{e:gamma-cond} 
we have for $t\in [0,1]$, 
\begin{equation} 
  \label{e:energy-bd}
|\mathbf u(\gamma(t))|\leq C_\gamma\exp\Big(h^{-1}\int_0^t|\psi'_{\gamma,\alpha}(s)|\,ds\Big)|\mathbf u(\gamma(0))|.
\end{equation}
\end{lemm}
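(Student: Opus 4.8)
The plan is to turn the equation~\eqref{e:schr-1} into a first-order linear system for $\mathbf u$ \emph{along the curve} $\gamma$, diagonalise its leading part, and close the estimate by a Gr\"onwall argument in which the only term of size $h^{-1}$ is exactly the one that integrates to $\psi_{\gamma,\alpha}$. Since $u$ is holomorphic we have $hD_zu=-ih\,\partial_z u$; writing $\mathbf u=(u,hD_zu)$ and using $(hD_z)^2u=\alpha^2W(z)^2u$ gives $\partial_z\mathbf u=\tfrac{i}{h}A(z)\mathbf u$ with
\[
A(z)=\begin{pmatrix}0&1\\ \alpha^2 W(z)^2&0\end{pmatrix},
\]
and therefore $\tfrac{d}{dt}\mathbf u(\gamma(t))=\tfrac{i}{h}\gamma'(t)A(\gamma(t))\mathbf u(\gamma(t))$.

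Next I would diagonalise $A(z)$, whose eigenvalues are $\pm\alpha W(z)$ with eigenvectors $(1,\pm\alpha W(z))$, via $V(z)=\begin{pmatrix}1&1\\ \alpha W(z)&-\alpha W(z)\end{pmatrix}$. This is where the hypothesis $W\neq 0$ on $\gamma$ from~\eqref{e:gamma-cond} enters: it makes $V$ and $V^{-1}$ holomorphic near $\gamma([0,1])$, so $V,V^{-1}$ and their $z$-derivatives are bounded on $\gamma$ by a constant $C_\gamma$ (using also compactness of $[0,1]$, analyticity of $W$, and~\eqref{e:alpha-ass}). Setting $\mathbf w(t):=V(\gamma(t))^{-1}\mathbf u(\gamma(t))$ transforms the system into
\[
\mathbf w'(t)=\Big(B(t)+\tfrac{i}{h}\begin{pmatrix}g'(t)&0\\ 0&-g'(t)\end{pmatrix}\Big)\mathbf w(t),\qquad g(t):=\alpha\Phi(\gamma(t)),
\]
where $g'(t)=\alpha W(\gamma(t))\gamma'(t)$ since $\partial_z\Phi=W$ by~\eqref{e:Phi-def}, and $B(t)=\big(\tfrac{d}{dt}V(\gamma(t))^{-1}\big)V(\gamma(t))$ satisfies $\|B(t)\|\le C_\gamma$. (Equivalently one may avoid diagonalising and argue directly with $c_\pm(t):=(hD_zu)(\gamma(t))\pm\alpha W(\gamma(t))\,u(\gamma(t))$, which solve the same kind of system; this is the ``direct proof in our special case''.)

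Finally, the Gr\"onwall step. Since each $|w_j(t)|^2$ is real, the diagonal term contributes $\tfrac{2}{h}\Re\!\big(ig'(t)\big)\big(|w_1(t)|^2-|w_2(t)|^2\big)$ to $\tfrac{d}{dt}|\mathbf w(t)|^2$; because $\psi_{\gamma,\alpha}(t)=\Im g(t)$ we have $\Re(ig'(t))=-\psi'_{\gamma,\alpha}(t)$, so this contribution is bounded in absolute value by $\tfrac{2}{h}|\psi'_{\gamma,\alpha}(t)|\,|\mathbf w(t)|^2$, while the $B(t)$-term contributes at most $2C_\gamma|\mathbf w(t)|^2$. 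Hence $\big|\tfrac{d}{dt}|\mathbf w(t)|^2\big|\le\big(2C_\gamma+\tfrac{2}{h}|\psi'_{\gamma,\alpha}(t)|\big)|\mathbf w(t)|^2$, and Gr\"onwall's inequality gives $|\mathbf w(t)|\le C_\gamma\exp\!\big(h^{-1}\int_0^t|\psi'_{\gamma,\alpha}(s)|\,ds\big)|\mathbf w(0)|$. Transferring back through $\mathbf u(\gamma(t))=V(\gamma(t))\mathbf w(t)$ and $\mathbf w(0)=V(\gamma(0))^{-1}\mathbf u(\gamma(0))$ and using the bounds on $V,V^{-1}$ yields~\eqref{e:energy-bd}. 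The argument is essentially routine; the one point that requires care is the bookkeeping of constants — checking that the $z$-dependence of $V$ produces only the bounded term $B(t)$ rather than an extra contribution of size $h^{-1}$, and that every constant depends solely on $\gamma$, $W$, and $C_0$, never on $h$ or on the particular solution $u$.
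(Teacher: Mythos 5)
Your argument is correct and is in substance the paper's proof: the paper works directly with the energy $E=\tfrac12\big(|h\partial_z u|^2+|\alpha W u|^2\big)$ and bounds $|\partial_t\log E_\gamma(t)|\leq 2h^{-1}\big|\Im\big(\alpha W(\gamma(t))\gamma'(t)\big)\big|+C_\gamma$, which is exactly your Gr\"onwall inequality for $|\mathbf w|^2$ once the bounded, $z$-dependent conjugation by $V$ is undone (your $|w_1|^2+|w_2|^2$ is $E$ up to the bounded nonvanishing factor $|\alpha W|^2$, and your cancellation $\big||w_1|^2-|w_2|^2\big|\leq|\mathbf w|^2$ is the paper's $|\Im(\overline{\alpha Wu}\cdot h\partial_z u)|\leq E$). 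The essential point is the same in both: the only $O(h^{-1})$ contribution is $\Im\big(\alpha W(\gamma(t))\gamma'(t)\big)=\psi'_{\gamma,\alpha}(t)$, with all remaining terms absorbed into $C_\gamma$ thanks to $W\neq 0$ along $\gamma$, compactness of $[0,1]$, and~\eqref{e:alpha-ass}.
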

%%%%%%%%%%%%%%%%%%%%%%%%%%%%%%%%%%%%%%%%%%%%%%%%%%%%%%%%%%%%%%%%%%%%%%%%%%%%%%%%
\begin{proof}
For $u$ which solves~\eqref{e:schr-1}, define the energy
\begin{equation}
  \label{e:energy-def}
E(z):=\tfrac12\big(|h\partial_zu(z)|^2+|\alpha W(z)u(z)|^2\big),\quad
E_\gamma(t):=E(\gamma(t)).
\end{equation}
We may assume that $u$ is not identically~0 so that $E_\gamma(t)>0$ for all $t$.
We compute
$$
\begin{aligned}
\partial_t \log E_\gamma(t)&=2E_\gamma (t)^{-1}\Re\big(\partial_zE(\gamma(t))\gamma'(t)\big),
\\
\partial_z E(z)&=ih^{-1}\alpha W(z)\Im\big(\overline{\alpha W(z)u(z)}\cdot h\partial_z u(z)\big)+\tfrac12\overline W(z)\partial_z W(z)|\alpha u(z)|^2.
\end{aligned}
$$
For all $z=\gamma(t)$, $0\leq t\leq 1$, we have
$$
\begin{aligned}
\big|\Im\big(\overline{\alpha W(z)u(z)}\cdot h\partial_z u(z)\big)\big|&\leq E_\gamma(z),\\
\big|\tfrac12\overline W(z)\partial_z W(z)|\alpha u(z)|^2\big|&\leq C_\gamma E_\gamma (z).
\end{aligned}
$$
This implies
$$
|\partial_t \log E_\gamma(t)|\leq 2h^{-1}\big|\Im\big(\alpha W(\gamma(t))\gamma'(t)\big)\big|+ C_\gamma.
$$
Integrating and using the identity
$$
\psi'_{\gamma,\alpha}(t)=\Im\big(\alpha W(\gamma(t))\gamma'(t)\big)
$$
we obtain
$$
|\log E_\gamma(t)-\log E_\gamma(0)|\leq C_\gamma+2h^{-1}\int_0^t |\psi'_{\gamma,\alpha}(s)|\,ds\quad\text{for all }t\in [0,1].
$$
Since $C_\gamma^{-1}|\mathbf u(\gamma(t))|^2\leq E_\gamma(t)\leq C_\gamma|\mathbf u(\gamma(t))|^2$,
this shows~\eqref{e:energy-bd}. 
\end{proof}
%%%%%%%%%%%%%%%%%%%%%%%%%%%%%%%%%%%%%%%%%%%%%%%%%%%%%%%%%%%%%%%%%%%%%%%%%%%%%%%%

%%%%%%%%%%%%%%%%%%%%%%%%%%%%%%%%%%%%%%%%%%%%%%%%%%%%%%%%%%%%%%%%%%%%%%%%%%%%%%%%
\subsubsection{WKB solutions}
\label{s:WKB-sol}

We next construct approximate WKB solutions, $u^\pm_N$, to the equation~\eqref{e:schr-1}.
Let $\gamma$ be a curve satisfying~\eqref{e:gamma-cond}. Since $\gamma$ does not self intersect, we can fix a simply connected open set
set $\mathcal U_\gamma$ such that 
$$
\gamma\ \subset\ \mathcal U_\gamma\ \subset\ \mathbb C\setminus W^{-1}(0).
$$
Our solutions have the form (where $N\geq 0$ is an integer)
\[
\begin{gathered}
u^\pm_N(z) =\exp(i \Phi^\pm_N(z;\alpha,h)/h),\quad
z\in \mathcal U_\gamma,\\
\Phi^\pm_N(z;\alpha,h)=\pm \alpha\Phi(z)-ihb^\pm_N(z;h/\alpha), \end{gathered}
\]
where $\Phi$ is defined in~\eqref{e:Phi-def}. This means that
\[ u^\pm_N ( z ) = e^{ \pm i \alpha \Phi( z ) /h } A_\pm ( z , h/\alpha ) , \ \ \  A_\pm ( z , \tilde h ) = 
e^{ b^\pm_N ( z; \tilde h ) }  , \ \ \tilde h := h/\alpha . \]
The function whose exponential gives the amplitude of $ u_N^\pm $ is given by 
\begin{equation}
\label{eq:b2a}
b^\pm_N(z;\tilde h)=\sum_{j=0}^N \tilde h^j a^\pm_j(z),
\end{equation}
where the coefficients $a^\pm_j \in \mathscr O ( \mathcal U_\gamma) $ are determined by the equation
for $ b^\pm_N $: 
\begin{equation}
  \label{e:transport}
2W(z)\partial_z b^\pm_N(z;\tilde h)=-\partial_z W(z)\pm i\tilde h\big(\partial_z^2 b^\pm_N(z;\tilde h)+(\partial_z b^\pm_N(z;\tilde h))^2\big)+\mathcal O(\tilde h^{N+1}).
\end{equation}
Note that~\eqref{e:transport} determines the derivatives
$\partial_z a^\pm_j(z)$ uniquely. Since $ \partial_z a_0^\pm = - \frac12 \partial_z W/W $, we can take
\begin{equation}
  \label{e:a-0-def}
a^\pm_0(z)=-\tfrac12\log W(z)
\end{equation}
where we fix a branch of $\log W$ on~$\mathcal U_\gamma$.
%%%%%%%%%%%%%%%%%%%%%%%%%%%%%%%%%%%%%%%%%%%%%%%%%%%%%%%%%%%%%%%%%%%%%%%%%%%%%%%%
\begin{lemm}
\label{l:WKB} In the notation of \eqref{e:psi-def}, 
we have for all $t\in [0,1]$ and $0<h<C_{\gamma,N}^{-1}$
\begin{gather}
  \label{e:WKB-size}
C_{\gamma,N}^{-1}e^{\mp\psi_{\gamma,\alpha}(t)/h}\leq |u^\pm_N(\gamma(t))|
\leq C_{\gamma,N} e^{\mp\psi_{\gamma,\alpha}(t)/h},
\\
  \label{e:WKB-rhs}
|P(\alpha,h)u^\pm_N(\gamma(t))|\leq C_{\gamma,N} e^{\mp \psi_{\gamma,\alpha}(t)/h}h^{N+2}.
\end{gather}
Here
$C_{\gamma,N}$ denotes a constant depending only on $N$, $\gamma$, $W$, and the constant $C_0$ in~\eqref{e:alpha-ass}.
\end{lemm}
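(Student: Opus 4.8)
The plan is to substitute the WKB ansatz $u^\pm_N=e^{i\Phi^\pm_N/h}$, with $\Phi^\pm_N=\pm\alpha\Phi-ihb^\pm_N$, directly into $P(\alpha,h)$ and to extract both estimates from the way the coefficients $a^\pm_j$ were chosen in~\eqref{e:transport}. First I would record the elementary identity: since on holomorphic functions $hD_z$ acts as $-ih\,\partial_z$ (with $\partial_z$ the usual complex derivative), for any holomorphic $\Psi$ the function $v=e^{i\Psi/h}$ satisfies
\[
P(\alpha,h)v=\big((\partial_z\Psi)^2-ih\,\partial_z^2\Psi-\alpha^2W^2\big)v.
\]
Taking $\Psi=\Phi^\pm_N$, using $\partial_z\Phi=W$ from~\eqref{e:Phi-def}, and grouping terms, the $\alpha^2W^2$ contributions cancel and one is left with
\[
P(\alpha,h)u^\pm_N=\mp i\alpha h\,r^\pm_N(z;\tilde h)\,u^\pm_N,\qquad \tilde h:=h/\alpha,
\]
\[
r^\pm_N(z;\tilde h):=2W\,\partial_z b^\pm_N+\partial_z W\mp i\tilde h\big(\partial_z^2 b^\pm_N+(\partial_z b^\pm_N)^2\big).
\]
By the defining relation~\eqref{e:transport} of $b^\pm_N$ we have $r^\pm_N=\mathcal O(\tilde h^{N+1})$; this is the only nontrivial input and the rest is bookkeeping.

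For the size bound~\eqref{e:WKB-size} I would simply take absolute values. Since $\Phi$ is holomorphic, $|e^{\pm i\alpha\Phi(z)/h}|=e^{\mp\Im(\alpha\Phi(z))/h}$, which along $\gamma$ is exactly $e^{\mp\psi_{\gamma,\alpha}(t)/h}$ by~\eqref{e:psi-def}. Hence $|u^\pm_N(\gamma(t))|=e^{\mp\psi_{\gamma,\alpha}(t)/h}\,e^{\Re b^\pm_N(\gamma(t);\tilde h)}$, and it remains to bound $\Re b^\pm_N(\gamma(t);\tilde h)$ from above and below by a constant $C_{\gamma,N}$. This follows because $b^\pm_N=\sum_{j=0}^N\tilde h^j a^\pm_j$ with each $a^\pm_j$ a fixed element of $\mathscr O(\mathcal U_\gamma)$ (independent of $h$ and $\alpha$), $\gamma([0,1])$ is a compact subset of $\mathcal U_\gamma$, and $|\tilde h|=h/|\alpha|\leq C_0 h$ is small by~\eqref{e:alpha-ass}; thus $b^\pm_N(\gamma(t);\tilde h)=a^\pm_0(\gamma(t))+\mathcal O_{\gamma,N}(h)$ with $\Re a^\pm_0(\gamma(t))=-\tfrac12\log|W(\gamma(t))|$, which is bounded on $\gamma$ precisely because $W$ has no zero there (condition~\eqref{e:gamma-cond}). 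This gives~\eqref{e:WKB-size} for $0<h<C_{\gamma,N}^{-1}$.

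For~\eqref{e:WKB-rhs} I would combine the two previous points: from the identity above,
\[
|P(\alpha,h)u^\pm_N(\gamma(t))|=|\alpha|\,h\,\big|r^\pm_N(\gamma(t);\tilde h)\big|\,\big|u^\pm_N(\gamma(t))\big|.
\]
Because the $a^\pm_j$ and all their derivatives are holomorphic on the fixed set $\mathcal U_\gamma$ and $\gamma$ is compact, the error term $\mathcal O(\tilde h^{N+1})$ in~\eqref{e:transport} is uniform along $\gamma$ with a constant depending only on $N$, $\gamma$, $W$; together with $|\tilde h|\leq C_0h$ and $|\alpha|\leq C_0$ this yields $|\alpha|\,h\,|r^\pm_N(\gamma(t);\tilde h)|\leq C_{\gamma,N}h^{N+2}$. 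Multiplying by the already established bound~\eqref{e:WKB-size} finishes the proof.

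I do not expect a genuine obstacle here, since this lemma is precisely the statement that the standard WKB construction makes the right-hand side of~\eqref{e:schr-1} small along $\gamma$. The only points requiring care are (i) that the implied constant in~\eqref{e:transport} be taken uniform in $t\in[0,1]$, which is immediate from compactness of $\gamma$ and holomorphy of the $a^\pm_j$ on the fixed neighbourhood $\mathcal U_\gamma$, and (ii) keeping track of the fact that $\alpha$ is complex, so that one works throughout with $\tilde h=h/\alpha$ and only ever invokes $|\tilde h|\asymp h$, which is what hypothesis~\eqref{e:alpha-ass} provides.
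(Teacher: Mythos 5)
Your proposal is correct and follows essentially the same route as the paper: substitute the ansatz, use $(\partial_z\Phi^\pm_N)^2-ih\partial_z^2\Phi^\pm_N-\alpha^2W^2=\mp i\alpha h(2W\partial_zb^\pm_N+\partial_zW)-h^2(\partial_z^2b^\pm_N+(\partial_zb^\pm_N)^2)$ together with the transport equation~\eqref{e:transport}, and read off~\eqref{e:WKB-size} from $|e^{i\Phi^\pm_N/h}|=e^{\mp\psi_{\gamma,\alpha}/h}e^{\Re b^\pm_N}$ with $\Re b^\pm_N$ bounded on the compact curve. Your write-up just makes explicit the uniformity points that the paper leaves implicit.
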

%%%%%%%%%%%%%%%%%%%%%%%%%%%%%%%%%%%%%%%%%%%%%%%%%%%%%%%%%%%%%%%%%%%%%%%%%%%%%%%%
\begin{proof}
The bound~\eqref{e:WKB-size} follows immediately from the definitions of $u^\pm_N$ and
$ \psi_{\gamma, \alpha } $.
To see~\eqref{e:WKB-rhs}, we compute
$$
\begin{aligned}
P(\alpha,h)u^\pm_N(z)=\,&u^\pm_N(z)\big((\partial_z\Phi^\pm_N(z;\alpha,h))^2
-ih\partial_z^2\Phi^\pm_N(z;\alpha,h)
-\alpha^2W(z)^2\big)\\
=\,&u^\pm_N(z)
\big(
\mp i\alpha h(2W(z)\partial_z b^\pm_N(z;h/\alpha)+\partial_zW(z))
\\
&-h^2(\partial_z^2b^\pm_N(z;h/\alpha)+(\partial_z b^\pm_N(z;h/\alpha))^2)
\big).
\end{aligned}
$$
Together with~\eqref{e:transport} this gives~\eqref{e:WKB-rhs}.
\end{proof}
%%%%%%%%%%%%%%%%%%%%%%%%%%%%%%%%%%%%%%%%%%%%%%%%%%%%%%%%%%%%%%%%%%%%%%%%%%%%%%%%
We also have the bound
\begin{equation}
  \label{e:WKB-wron}
\partial_z\big(e^{b^+_N(z;\tilde h)+b^-_N(z;\tilde h)}\big(2W(z)-i\tilde h\partial_z(b^+_N(z;\tilde h)-b^-_N(z;\tilde h))\big)\big)=\mathcal O(\tilde h^{N+1}), 
\end{equation}
where $ \tilde h = h/\alpha $. In fact, we have, 
\[  \begin{split} \Wr ( u_N^+ , u_N^- ) & = e^{ b_N^+  + b_N^-  } 
\left(   
 \partial_z ( -\alpha \Phi   - i h b_N^-  )
 -\partial_z ( \alpha \Phi  - i h b_N^+ ) \right) \\
& = 
\alpha  e^{ b_N^+ + b_N^- }  ( -2 W + i \tilde h \partial_z ( b_N^+ - b_N^- ) ) . \end{split} \]
On the other hand, \eqref{e:Wr-identity}, \eqref{e:WKB-size} and \eqref{e:WKB-rhs} give 
$  \partial_z\Wr ( u_N^+ , u_N^- ) = \mathcal O ( \tilde h^{N+1} ) $.

Next, using the WKB solutions, for a solution $u$ to~\eqref{e:schr-1}
we define the vector of modified Wronskians
\begin{equation}
  \label{e:Wr-mod-def}
{\rm{\bf{Wr}}}_N(z)
=\begin{pmatrix}
u^+_N(z)^{-1}\Wr(u,u^+_N)(z)\\
u^-_N(z)^{-1}\Wr(u,u^-_N)(z)
\end{pmatrix},\quad
z\in\mathcal U_\gamma.
\end{equation}
Then, for all $t\in [0,1]$,
\begin{equation}
  \label{e:Wr-mod-compute}
{\rm{\bf{Wr}}}_N(z)=
B_N(z;\alpha,h)\mathbf u(z)
\end{equation}
where
\begin{equation}
  \label{e:B-N-def}
B_N(z;\alpha,h)=
\begin{pmatrix}
\partial_z\Phi^+_N(z;\alpha,h) & -1 \\
\partial_z\Phi^-_N(z;\alpha,h) & -1
\end{pmatrix} = \begin{pmatrix} \  \ \alpha W ( z ) + \mathcal O ( h ) & -1 \\
- \alpha W ( z )  + \mathcal O ( h ) & -1 \end{pmatrix} .
\end{equation}
Since $ W \neq 0 $,  both $B_N$ and its inverse are bounded in norm uniformly in $h$
for small $h$.

%%%%%%%%%%%%%%%%%%%%%%%%%%%%%%%%%%%%%%%%%%%%%%%%%%%%%%%%%%%%%%%%%%%%%%%%%%%%%%%%
\subsubsection{Approximating the transition matrix}

We now assume that $\gamma$ is a curve as in~\eqref{e:gamma-cond}
and $\gamma,\alpha$ satisfy the additional condition
\begin{equation}
  \label{e:stokes-condition}
\psi_{\gamma,\alpha}'(t)\geq 0\quad\text{for all }t\in [0,1],
\end{equation}
where the function $\psi_{\gamma,\alpha}$ was defined in~\eqref{e:psi-def}.
Define the diagonal matrix
\begin{equation}
  \label{e:S-gamma-def}
S_{\gamma,N}(\alpha,h):=\begin{pmatrix} e^{i(\Phi^+_N(\gamma(0);\alpha,h)-\Phi^+_N(\gamma(1);\alpha,h))/h} & 0 \\
0 & e^{i(\Phi^-_N(\gamma(0);\alpha,h)-\Phi^-_N(\gamma(1);\alpha,h))/h}\end{pmatrix}.
\end{equation}
The transition matrix $\mathcal M_{\gamma(1)\leftarrow\gamma(0)}$ defined in  
\eqref{e:transition-def} is now approximated as follows:
%%%%%%%%%%%%%%%%%%%%%%%%%%%%%%%%%%%%%%%%%%%%%%%%%%%%%%%%%%%%%%%%%%%%%%%%%%%%%%%%
\begin{lemm}
\label{l:transfer-approx}
Under the condition~\eqref{e:stokes-condition}, and with $B_N$
defined in~\eqref{e:B-N-def}, we have
\begin{equation}
  \label{e:transfer-approx}
\begin{aligned}
\mathcal M_{\gamma(1)\leftarrow \gamma(0)}=&\,B_N(\gamma(1);\alpha,h)^{-1}S_{\gamma,N}(\alpha,h)B_N(\gamma(0);\alpha,h)\\
&+
\mathcal O_{\gamma,N}(h^{N+1}e^{(\psi_{\gamma,\alpha}(1)-\psi_{\gamma,\alpha}(0))/h}).
\end{aligned}
\end{equation}
\end{lemm}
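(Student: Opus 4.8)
The plan is to track the vector of modified Wronskians $\mathrm{\bf Wr}_N(z)$ along $\gamma$ and show that it is almost diagonalized by the WKB phases. First I would observe that, by \eqref{e:Wr-identity} applied to the pair $(u,u^\pm_N)$,
\[
hD_z\big(\Wr(u,u^\pm_N)\big)=u\,P(\alpha,h)u^\pm_N-u^\pm_N\,P(\alpha,h)u=u\,P(\alpha,h)u^\pm_N,
\]
since $P(\alpha,h)u=0$. Dividing by $u^\pm_N$ and using $hD_z\log u^\pm_N=\partial_z\Phi^\pm_N$ gives a scalar linear ODE for each component of $\mathrm{\bf Wr}_N$:
\[
hD_z\big(u^\pm_N{}^{-1}\Wr(u,u^\pm_N)\big)=-\big(\partial_z\Phi^\pm_N\big)\,u^\pm_N{}^{-1}\Wr(u,u^\pm_N)+ u^\pm_N{}^{-1} u\,P(\alpha,h)u^\pm_N.
\]
The homogeneous part integrates to the phase factor appearing in $S_{\gamma,N}$, while the inhomogeneity is controlled by \eqref{e:WKB-rhs}. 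The key point is that along $\gamma$ the condition \eqref{e:stokes-condition}, $\psi'_{\gamma,\alpha}\ge 0$, makes the relevant homogeneous solution \emph{nondecreasing} in modulus as $t$ runs from $0$ to~$1$, so that Duhamel's formula does not lose exponential factors: the worst-case growth of the integrating factor from $\gamma(s)$ to $\gamma(1)$ is bounded by $\exp((\psi_{\gamma,\alpha}(1)-\psi_{\gamma,\alpha}(s))/h)$, and combined with the bound $|u(\gamma(s))|\le C_\gamma\exp(h^{-1}\int_0^s|\psi'|)=C_\gamma\exp(h^{-1}(\psi_{\gamma,\alpha}(s)-\psi_{\gamma,\alpha}(0)))$ from Lemma~\ref{l:energy-bd} (again using \eqref{e:stokes-condition} to drop the absolute value) and $|u^\pm_N(\gamma(s))|^{-1}\le C_{\gamma,N}e^{\pm\psi_{\gamma,\alpha}(s)/h}$ from \eqref{e:WKB-size}, all the $\psi_{\gamma,\alpha}(s)$ terms telescope. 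Integrating the Duhamel term and using \eqref{e:WKB-rhs} (which carries the gain $h^{N+2}$, of which one power is spent on the $h^{-1}$ in the exponential ODE) then yields
\[
\mathrm{\bf Wr}_N(\gamma(1))=S_{\gamma,N}(\alpha,h)\,\mathrm{\bf Wr}_N(\gamma(0))+\mathcal O_{\gamma,N}\big(h^{N+1}e^{(\psi_{\gamma,\alpha}(1)-\psi_{\gamma,\alpha}(0))/h}\big)\,|\mathbf u(\gamma(0))|.
\]

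To finish, I would invert the relation \eqref{e:Wr-mod-compute}, $\mathrm{\bf Wr}_N(z)=B_N(z;\alpha,h)\mathbf u(z)$: since $W\neq0$ on $\gamma$, the matrix $B_N$ and its inverse are bounded uniformly in $h$ by the remark after \eqref{e:B-N-def}, so
\[
\mathbf u(\gamma(1))=B_N(\gamma(1);\alpha,h)^{-1}\mathrm{\bf Wr}_N(\gamma(1))
=B_N(\gamma(1))^{-1}S_{\gamma,N}B_N(\gamma(0))\,\mathbf u(\gamma(0))+\mathcal O_{\gamma,N}\big(h^{N+1}e^{(\psi_{\gamma,\alpha}(1)-\psi_{\gamma,\alpha}(0))/h}\big)\,|\mathbf u(\gamma(0))|.
\]
Comparing with the defining relation $\mathbf u(\gamma(1))=\mathcal M_{\gamma(1)\leftarrow\gamma(0)}\mathbf u(\gamma(0))$ and noting that $\mathbf u(\gamma(0))\in\mathbb C^2$ is arbitrary gives \eqref{e:transfer-approx}, where I would absorb the factor $|\mathbf u(\gamma(0))|$ into the operator-norm statement on the $2\times2$ matrices.

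The main obstacle I anticipate is the careful bookkeeping of exponential weights so that the error term comes out exactly as $h^{N+1}e^{(\psi_{\gamma,\alpha}(1)-\psi_{\gamma,\alpha}(0))/h}$ and not with an extra $\int|\psi'|$ in the exponent; this is precisely where the sign condition \eqref{e:stokes-condition} is essential, since it lets one replace $\int_s^1|\psi'_{\gamma,\alpha}|$ by $\psi_{\gamma,\alpha}(1)-\psi_{\gamma,\alpha}(s)$ and have all intermediate contributions cancel against the growth bounds for $u$ and the sizes of $u^\pm_N$. A secondary technical point is to make sure that the two components of $\mathrm{\bf Wr}_N$ genuinely decouple at leading order — this is automatic because $\Wr(u,u^\pm_N)$ only involves $u$ and a \emph{single} WKB solution, so no cross terms between $u^+_N$ and $u^-_N$ enter until one inverts $B_N$, at which stage they are harmless since $B_N^{-1}$ is $O(1)$.
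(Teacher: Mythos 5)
Your proposal is correct and follows essentially the same route as the paper: the paper integrates the plain Wronskians $\Wr(u,u^\pm_N)$ along $\gamma$ using \eqref{e:Wr-identity}, \eqref{e:WKB-rhs} and the energy bound, and only multiplies by the phases $e^{-i\Phi^\pm_N(\gamma(1);\alpha,h)/h}$ at the end, whereas you run Duhamel directly on the modified Wronskians — an equivalent bookkeeping. The exponent accounting you describe (using \eqref{e:stokes-condition} to replace $\int|\psi'_{\gamma,\alpha}|$ by increments of $\psi_{\gamma,\alpha}$, with the $-$ component carrying the worst factor) matches the paper's estimates exactly.
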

%%%%%%%%%%%%%%%%%%%%%%%%%%%%%%%%%%%%%%%%%%%%%%%%%%%%%%%%%%%%%%%%%%%%%%%%%%%%%%%%
\begin{proof}
1. Let $u$ be a solution to~\eqref{e:schr-1}.
By Lemma~\ref{l:energy-bd} and~\eqref{e:stokes-condition}, we have for all $t\in [0,1]$
\begin{equation}
  \label{e:transfer-energy}
|\mathbf u(\gamma(t))|\leq C_{\gamma}e^{(\psi_{\gamma,\alpha}(t)-\psi_{\gamma,\alpha}(0))/h}|\mathbf u(\gamma(0))|.
\end{equation}
Define the Wronskians
$$
\Wr_\pm(t):=\Wr(u,u^\pm_N)(\gamma(t)),\quad
{\rm{\bf{Wr}}}_N(\gamma(t)):=\begin{pmatrix}
e^{-i\Phi^+_N(\gamma(t);\alpha,h)/h}\Wr_+(t)\\
e^{-i\Phi^-_N(\gamma(t);\alpha,h)/h}\Wr_-(t)
\end{pmatrix}.
$$
By~\eqref{e:Wr-identity} and since $P(\alpha,h)u=0$, we have for all $t\in [0,1]$
$$
\begin{aligned}
|\partial_t \Wr_\pm(t)|&\leq C_\gamma h^{-1}|u(\gamma(t))|\cdot |P(\alpha,h)u^\pm_N(\gamma(t)|
\\
&\leq C_{\gamma,N}h^{N+1}e^{(\mp \psi_{\gamma,\alpha}(t)
+\psi_{\gamma,\alpha}(t)-\psi_{\gamma,\alpha}(0))/h}|\mathbf u(\gamma(0))|
\end{aligned}
$$
where the last inequality follows from~\eqref{e:WKB-rhs} and~\eqref{e:transfer-energy}. This implies that
$$
\begin{aligned}
|\partial_t \Wr_+(t)|&\leq  C_{\gamma,N}h^{N+1}e^{-\psi_{\gamma,\alpha}(0)/h}|\mathbf u(\gamma(0))|,\\
|\partial_t \Wr_-(t)|&\leq  C_{\gamma,N}h^{N+1}e^{(2\psi_{\gamma,\alpha}(1)
-\psi_{\gamma,\alpha}(0))/h}|\mathbf u(\gamma(0))|.
\end{aligned}
$$
Integrating in $t\in [0,1]$, we see that
\begin{equation}
  \label{e:transapprox-1}
\begin{aligned}
\Wr_+(1)&=\Wr_+(0)+\mathcal O_{\gamma,N}(h^{N+1}e^{-\psi_{\gamma,\alpha}(0)/h}|\mathbf u(\gamma(0))|),\\
\Wr_-(1)&=\Wr_-(0)+\mathcal O_{\gamma,N}(h^{N+1}e^{(2\psi_{\gamma,\alpha}(1)-\psi_{\gamma,\alpha}(0))/h}|\mathbf u(\gamma(0))|).
\end{aligned}
\end{equation}

\noindent 2. Multiplying both sides of~\eqref{e:transapprox-1}
by~$e^{-i\Phi^\pm_N(\gamma(1);\alpha,h)/h}$, we get
\begin{equation*}
\begin{aligned}
e^{ - i \Phi^+_N ( \gamma ( 1 ) ; \alpha, h )/h } \Wr_+(1)&=
e^{ ( - i \Phi^+_N  ( \gamma ( 1 ) ;  \alpha, h ) + i \Phi^+_N ( \gamma ( 0 ) ; \alpha, h ))/h ) }
e^{ - i \Phi^+_N ( \gamma ( 0 ) ; \alpha , h ) } \Wr_+ ( t ) \\
& \ \ \ \ \ \ +\mathcal O_{\gamma,N}(h^{N+1}e^{
\psi_{\gamma, \alpha } ( 1 ) /h -\psi_{\gamma,\alpha}(0)/h} |\mathbf u(\gamma(0))|),\\
e^{ - i \Phi^-_N ( \gamma ( 1 ) ; \alpha, h )/h } \Wr_-(1)&=
e^{ ( - i \Phi^-_N ( \gamma ( 1 ) ;  \alpha, h ) + i \Phi^-_N ( \gamma ( 0 ) ; \alpha, h ))/h ) }
e^{ - i \Phi^-_N ( \gamma ( 0 ) ; \alpha , h ) } \Wr_+ ( t ) \\
& \ \ \ \ \ \ +\mathcal O_{\gamma,N}(h^{N+1}e^{
\psi_{\gamma, \alpha } ( 1 ) /h -\psi_{\gamma,\alpha}(0)/h} |\mathbf u(\gamma(0))|),
\end{aligned}
\end{equation*}
(we used here the facts that $ \Re ( - i \Phi^\pm_N ( \gamma ( 1 ) ; \alpha, h )) = \pm \psi_{\gamma, \alpha} ( 1 ) +  \mathcal O ( h ) $).
That is, 
$$
{\rm{\bf{Wr}}}_N(\gamma(1))=S_{\gamma,N}(\alpha,h){\rm{\bf{Wr}}}_N(\gamma(0))
+\mathcal O_{\gamma,N}(h^{N+1}e^{(\psi_{\gamma,\alpha}(1)-\psi_{\gamma,\alpha}(0))/h}|\mathbf u(\gamma(0))|).
$$
By~\eqref{e:Wr-mod-compute}
this implies
$$
\begin{aligned}
B_N(\gamma(1);\alpha,h)\mathbf u(\gamma(1))=\,&S_{\gamma,N}(\alpha,h)B_N(\gamma(0);\alpha,h)
\mathbf u(\gamma(0))\\
&+\mathcal O_{\gamma,N}(h^{N+1}e^{(\psi_{\gamma,\alpha}(1)-\psi_{\gamma,\alpha}(0))/h}|\mathbf u(\gamma(0))|).
\end{aligned}
$$
Therefore
$$
\begin{aligned}
\mathbf u(\gamma(1))=\,&B_N(\gamma(1);\alpha,h)^{-1}S_{\gamma,N}(\alpha,h)B_N(\gamma(0);\alpha,h)
\mathbf u(\gamma(0))\\
&+\mathcal O_{\gamma,N}(h^{N+1}e^{(\psi_{\gamma,\alpha}(1)-\psi_{\gamma,\alpha}(0))/h}|\mathbf u(\gamma(0))|).
\end{aligned}
$$
Since this holds for each solution~$u$ to~\eqref{e:schr-1},
we get~\eqref{e:transfer-approx}.
\end{proof}
%%%%%%%%%%%%%%%%%%%%%%%%%%%%%%%%%%%%%%%%%%%%%%%%%%%%%%%%%%%%%%%%%%%%%%%%%%%%%%%%

%%%%%%%%%%%%%%%%%%%%%%%%%%%%%%%%%%%%%%%%%%%%%%%%%%%%%%%%%%%%%%%%%%%%%%%%%%%%%%%%
\subsection{WKB analysis in the periodic case}
\label{s:WKBp} 

From now on, we assume that the potential $W$ is 1-periodic:
$$ W (  z + 1 ) = W ( z ) .$$ 
We also assume that $\gamma$ is a non-self-intersecting curve as in~\eqref{e:gamma-cond} such that
\begin{equation}
\label{eq:gamma_per}
\gamma(1)=\gamma(0)+1,
\end{equation}
and $W$ has winding number~0 along~$\gamma$:
\begin{equation}
  \label{e:W-winding}
\int_\gamma{W'(z)\over W(z)}\,dz=0.
\end{equation}

%%%%%%%%%%%%%%%%%%%%%%%%%%%%%%%%%%%%%%%%%%%%%%%%%%%%%%%%%%%%%%%%%%%%%%%%%%%%%%%%
\subsubsection{Periodicity of WKB solutions}

Let $u^\pm_N(z)=\exp(i\Phi^\pm_N(z;\alpha,h)/h)$, $z\in\mathcal U_\gamma$ (a simply connected neighbourhood of $ \gamma $ which satisfies \eqref{e:gamma-cond} and \eqref{eq:gamma_per})  be the approximate WKB solutions to~\eqref{e:schr-1} constructed in~\S\ref{s:WKB-sol}. Shrinking $\mathcal U_\gamma$
if necessary, we assume that the set $\mathcal U_\gamma\cap (\mathcal U_\gamma-1)$
is connected.

\setcounter{prop}{13}

The solutions $u^\pm_N$ have the following periodicity property:
%%%%%%%%%%%%%%%%%%%%%%%%%%%%%%%%%%%%%%%%%%%%%%%%%%%%%%%%%%%%%%%%%%%%%%%%%%%%%%%%
\begin{lemm}
\label{l:WKB-periodic}
There exist constants $c_0,\dots,c_{N+1}$ such that for all $z\in \mathcal U_\gamma\cap (\mathcal U_\gamma-1)$
\begin{equation}
  \label{e:WKB-periodic}
\begin{gathered} 
\Phi^\pm_N(z+1;\alpha,h)-\Phi^\pm_N(z;\alpha,h)
=\pm \alpha c(h/\alpha),\\
c(\tilde h)=\sum_{j=0}^{N+1} c_j\tilde h^j,\quad
c_0=\int_0^1 W(z)\,dz, \quad c_1 = 0 .
\end{gathered}
\end{equation}
\end{lemm}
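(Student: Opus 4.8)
The plan is to split $\Phi^\pm_N=\pm\alpha\Phi-ihb^\pm_N$ into its two pieces, show that the increment of each under $z\mapsto z+1$ is constant on the connected set $\mathcal U_\gamma\cap(\mathcal U_\gamma-1)$, and then identify those constants. For the principal piece, set $g(z):=\Phi(z+1)-\Phi(z)$; since $\partial_z\Phi=W$ is $1$-periodic, $g'(z)=W(z+1)-W(z)=0$, so $g\equiv g(0)=\Phi(1)-\Phi(0)=\int_0^1W(x)\,dx=:c_0$, which contributes the term $\pm\alpha c_0$.

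For the amplitude piece, I would first prove by induction on $j$ that $\partial_z a^\pm_j$ is $1$-periodic. Expanding~\eqref{e:transport} in powers of $\tilde h$ gives $\partial_z a^\pm_0=-\tfrac12\partial_z W/W$, which is periodic, and for $j\ge1$ it gives $2W\,\partial_z a^\pm_j=\pm i\bigl(\partial_z^2a^\pm_{j-1}+\sum_{k=0}^{j-1}\partial_za^\pm_k\,\partial_za^\pm_{j-1-k}\bigr)$; if $\partial_za^\pm_k$ is periodic for all $k<j$, the right-hand side is periodic, and dividing by the periodic non-vanishing $2W$ gives the claim. Hence $\partial_zb^\pm_N$ is periodic, so $\partial_z\bigl(\Phi^\pm_N(z+1)-\Phi^\pm_N(z)\bigr)=-ih\bigl(\partial_zb^\pm_N(z+1)-\partial_zb^\pm_N(z)\bigr)=0$, and therefore $\Phi^\pm_N(z+1)-\Phi^\pm_N(z)=A^\pm=A^\pm(\alpha,h)$ is constant in $z$ on $\mathcal U_\gamma\cap(\mathcal U_\gamma-1)$. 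Writing $A^\pm=\pm\alpha c_0-ih\sum_{j=0}^N\tilde h^j\bigl(a^\pm_j(z+1)-a^\pm_j(z)\bigr)$ and noting that, by~\eqref{e:a-0-def} and the winding condition~\eqref{e:W-winding} (which makes the chosen branch of $\log W$ genuinely $1$-periodic on $\mathcal U_\gamma\cap(\mathcal U_\gamma-1)$), the $j=0$ term vanishes, the identity $ih\tilde h^j=i\alpha\tilde h^{j+1}$ yields $A^\pm=\pm\alpha c^\pm(h/\alpha)$ for a polynomial $c^\pm$ in $\tilde h$ of degree $\le N+1$ with $c^\pm_0=c_0$ and $c^\pm_1=0$.

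It remains to prove $c^+=c^-$, which is the crux, and I would deduce it from the near-constancy of the Wronskian of the two WKB solutions. Recall from the discussion around~\eqref{e:WKB-wron} that $\Wr(u^+_N,u^-_N)=-\alpha\,G$, where $G:=e^{b^+_N+b^-_N}\bigl(2W-i\tilde h\,\partial_z(b^+_N-b^-_N)\bigr)$ satisfies $\partial_zG=\mathcal O(\tilde h^{N+1})$; since $a^+_0+a^-_0=-\log W$ one has $G=2+\mathcal O(\tilde h)$ uniformly on $\gamma$, so $G$ is bounded below there for $h$ small and $G(\gamma(1))/G(\gamma(0))=1+\mathcal O(\tilde h^{N+1})$. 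On the other hand $u^\pm_N(z+1)=e^{iA^\pm/h}u^\pm_N(z)$ and the Wronskian is bilinear, so $\Wr(u^+_N,u^-_N)(z+1)=e^{i(A^++A^-)/h}\Wr(u^+_N,u^-_N)(z)$; comparing at $z=\gamma(0)$ and using $\gamma(1)=\gamma(0)+1$ gives $e^{i(A^++A^-)/h}=1+\mathcal O(\tilde h^{N+1})$. Since $A^++A^-=\alpha\bigl(c^+(\tilde h)-c^-(\tilde h)\bigr)$ and $c^+-c^-$ vanishes to order $\ge2$ at $\tilde h=0$ (the two share their constant and linear coefficients), $(A^++A^-)/h=(c^+-c^-)(\tilde h)/\tilde h\to0$ as $h\to0$, so the integer multiple of $2\pi$ in the logarithm is $0$ and $(c^+-c^-)(\tilde h)/\tilde h=\mathcal O(\tilde h^{N+1})$. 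But $(c^+-c^-)/\tilde h$ is a polynomial in $\tilde h$ of degree $\le N$, hence identically zero; thus $c^+=c^-=:c$, which is~\eqref{e:WKB-periodic}.

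The main obstacle is precisely this last step, the production of a \emph{single} $c$ valid for both signs. Term by term one finds $a^+_j=(-1)^ja^-_j$, so $c^+$ and $c^-$ automatically agree at odd orders in $\tilde h$ but would differ at even orders unless $a^+_j(z+1)-a^+_j(z)$ vanishes for even $j$; that vanishing is invisible from the transport recursion alone and is delivered by the (near-)constancy of $\Wr(u^+_N,u^-_N)$ --- equivalently, by the Riccati relation $\partial_z\Phi^+_N+\partial_z\Phi^-_N=ih\,\partial_z\log\bigl(\partial_z\Phi^+_N-\partial_z\Phi^-_N\bigr)+\mathcal O(h^{N+1})$, whose right side is a total derivative whose periods over $[\gamma(0),\gamma(1)]$ are forced to vanish by~\eqref{e:W-winding}.
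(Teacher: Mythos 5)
Your proof is correct and follows essentially the same route as the paper: constancy of $\Phi(z+1)-\Phi(z)$, periodicity of $\partial_z a^\pm_j$ from uniqueness in the transport recursion, the winding condition~\eqref{e:W-winding} to kill the $j=0$ increment, and the near-constancy coming from~\eqref{e:WKB-wron} to force $c^+=c^-$ (your Wronskian-multiplier packaging of that last step is equivalent to the paper's direct comparison of $e^{b^+_N+b^-_N}$ at $z$ and $z+1$). The closing observation that $\partial_z a^+_j=(-1)^j\partial_z a^-_j$, so that only the even-order increments are genuinely at stake, is a correct and clarifying addition not made explicit in the paper.
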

%%%%%%%%%%%%%%%%%%%%%%%%%%%%%%%%%%%%%%%%%%%%%%%%%%%%%%%%%%%%%%%%%%%%%%%%%%%%%%%%
\begin{proof}
Since $\partial_z\Phi=W$ and $W$ is 1-periodic, we have $\partial_z(\Phi(z+1)-\Phi(z))=0$. Therefore, $\Phi(z+1)-\Phi(z)$ is constant:
$$
\Phi(z+1)-\Phi(z)= \int_0^1 W ( z ) dz =: c_0 . 
$$
Next, the functions $\partial_z a^\pm_j(z+1;\tilde h)$ and $\partial_z a^\pm_j(z;\tilde h)$ (see 
\eqref{eq:b2a})
satisfy the same equation derived from~\eqref{e:transport}. Since that equation
has a unique solution, we have
\begin{equation}
  \label{e:WKB-per-0}
\partial_z a^\pm_j(z+1)=\partial_z a^\pm_j(z),\quad
z\in \mathcal U_\gamma\cap (\mathcal U_\gamma-1).
\end{equation}
It follows that
\begin{equation}
  \label{e:WKB-per-1}
a^\pm_j(z+1)-a^\pm_j(z)=c^\pm_j,\quad
z\in \mathcal U_\gamma\cap (\mathcal U_\gamma-1)
\end{equation}
for some constants $c^\pm_j$.

Next, by~\eqref{e:WKB-wron} we have for all $z\in \mathcal U_\gamma\cap (\mathcal U_\gamma-1)$
$$
\begin{gathered}
e^{b^+_N(z+1;\tilde h)+b^-_N(z+1;\tilde h)}\big(2W(z+1)-i\tilde h\partial_z(b^+_N(z+1;\tilde h)-b^-_N(z+1;\tilde h))\big)\\
=
e^{b^+_N(z;\tilde h)+b^-_N(z;\tilde h)}\big(2W(z)-i\tilde h\partial_z(b^+_N(z;\tilde h)-b^-_N(z;\tilde h))\big)+
\mathcal O(\tilde h^{N+1}).
\end{gathered}
$$
By~\eqref{e:WKB-per-0} the terms in the parentheses on both sides are the same,
thus
$$
e^{b^+_N(z+1;\tilde h)+b^-_N(z+1;\tilde h)}
=
e^{b^+_N(z;\tilde h)+b^-_N(z;\tilde h)}
+\mathcal O(\tilde h^{N+1}).
$$
In other words,
\begin{equation}
\label{eq:hexp}
\exp\Big(\sum_{j=0}^N \tilde h^j (c_j^++c_j^-)\Big)=1 + \mathcal O(\tilde h^{N+1}).
\end{equation}
By~\eqref{e:a-0-def} 
$ a^\pm_0 ( z)  = \frac12 \log W ( z)  $ and hence $ c_0^\pm = \frac12 \int_\gamma W'/W dz $. 
The winding condition~\eqref{e:W-winding} then shows that
$c^+_0+c^-_0=0$. Expanding the exponential in \eqref{eq:hexp} then gives
$$
c^+_j+c^-_j=0.
$$
Recalling that $ \Phi_N^\pm = \pm \alpha \Phi - i \sum_{j=0}^N \tilde h^j a_j^\pm $, 
gives~\eqref{e:WKB-periodic} with $c_j:=-ic^+_{j-1}$, $ j \geq 1 $. In view of \eqref{e:a-0-def},
$ c_1 =  \frac i 2 \int_\gamma W' ( z ) / W ( z ) dz = 0 $, 
by assumption \eqref{e:W-winding}. \end{proof}
%%%%%%%%%%%%%%%%%%%%%%%%%%%%%%%%%%%%%%%%%%%%%%%%%%%%%%%%%%%%%%%%%%%%%%%%%%%%%%%%

%%%%%%%%%%%%%%%%%%%%%%%%%%%%%%%%%%%%%%%%%%%%%%%%%%%%%%%%%%%%%%%%%%%%%%%%%%%%%%%%
\subsubsection{Stokes loops}
\label{s:stokes-loop}

We now introduce the notion of a Stokes loop.
If the potential $W$ admits a Stokes loop then we are able
to prove a quantization condition for the corresponding operator~$P$.
%%%%%%%%%%%%%%%%%%%%%%%%%%%%%%%%%%%%%%%%%%%%%%%%%%%%%%%%%%%%%%%%%%%%%%%%%%%%%%%%
\begin{defi}
\label{d:stokes-loop}
Assume that $W$ is a 1-periodic entire function on~$\mathbb C$.
Let $\gamma: [0,1]  \to\mathbb C$ be a smooth non-self-intersecting curve. 
We say that $\gamma$ is a \textbf{Stokes loop} for $W$, if { $ \gamma $ extends to a smooth non-self-intersecting curve 
$ \widetilde \gamma : \mathbb R \to \mathbb C $} and  the following conditions hold
for all $t\in\mathbb R$:
\begin{gather}
  \label{e:stokes-loop-1}
\widetilde \gamma(t+1)= \widetilde \gamma(t)+1,\\
  \label{e:stokes-loop-2}
W(\gamma(t))\gamma'(t) > 0  .
\end{gather}
\end{defi}

Another way to formulate the condition on the curve $ \gamma $ is to consider it as a closed non-self-intersecting curve  $ \gamma : \mathbb R / \mathbb Z \mapsto \mathbb C/\mathbb Z $, 
where the action is $ z \mapsto z + 1 $. Indeed, $ \widetilde \gamma $ defines such a curve. That curve is non-self-intersecting: $ \gamma ( [s ] ) = \gamma ([ t ] ) $ ($ t \mapsto [t ] $ is the map $ \mathbb R \to \mathbb R/\mathbb Z $) means that for some $ m $, $ \widetilde \gamma ( s ) = \widetilde \gamma ( t ) + 
m = \widetilde \gamma ( t + m ) $. As $ \widetilde \gamma $ is non-self-intersecting, this implies that $ s = t + m $, that is $ [s] = [t] $.

%%%%%%%%%%%%%%%%%%%%%%%%%%%%%%%%%%%%%%%%%%%%%%%%%%%%%%%%%%%%%%%%%%%%%%%%%%%%%%%%
It is clear that a Stokes loop satisfies~\eqref{e:gamma-cond}. We also have 
\eqref{e:W-winding}: 
\begin{lemm}
\label{l:wind}
Suppose that $ \gamma $ is a Stokes loop in the sense of Definition \ref{d:stokes-loop}. Then 
the winding number condition~\eqref{e:W-winding} holds.
\end{lemm}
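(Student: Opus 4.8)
The plan is to identify $\int_\gamma W'(z)/W(z)\,dz$ with $-2\pi i$ times the turning number of $\gamma$ and then to show that this turning number vanishes because $\gamma$ is an embedded essential loop on the cylinder $\mathbb C/\mathbb Z$. First I would extend $\gamma$ to $\widetilde\gamma:\mathbb R\to\mathbb C$ as in Definition~\ref{d:stokes-loop}. Differentiating \eqref{e:stokes-loop-1} gives $\widetilde\gamma'(t+1)=\widetilde\gamma'(t)$, so $t\mapsto\widetilde\gamma'(t)$ descends to a loop in $\mathbb C\setminus\{0\}$ (it is nonvanishing since a Stokes loop satisfies \eqref{e:gamma-cond}); let $n_\gamma\in\mathbb Z$ denote its winding number about $0$, i.e.\ the turning number of $\gamma$. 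Again using \eqref{e:gamma-cond}, $W\circ\gamma$ is a loop in $\mathbb C\setminus\{0\}$ as well, and the chain rule gives
\[
\int_\gamma\frac{W'(z)}{W(z)}\,dz=\int_0^1\frac{d}{dt}\log W(\widetilde\gamma(t))\,dt .
\]
Choosing continuous branches of all the relevant arguments, $\log|W(\widetilde\gamma(t))|$ is $1$-periodic (by \eqref{e:stokes-loop-1} and the $1$-periodicity of $W$), so the real part of the integral contributes nothing; for the imaginary part, the Stokes identity $\psi'_{\gamma,\alpha}(t)=\operatorname{Im}(\alpha W(\gamma(t))\gamma'(t))$ together with \eqref{e:stokes-loop-2} says that $W(\widetilde\gamma(t))\widetilde\gamma'(t)$ is a positive real number for every $t$, so a continuous choice of its argument is identically $0$; hence a continuous argument of $W\circ\widetilde\gamma$ and one of $\widetilde\gamma'$ sum to a constant, forcing $\Delta_{[0,1]}\arg W(\widetilde\gamma(t))=-\Delta_{[0,1]}\arg\widetilde\gamma'(t)=-2\pi n_\gamma$. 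Thus $\int_\gamma W'/W\,dz=-2\pi i\,n_\gamma$, and it remains to prove $n_\gamma=0$.

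For the second step, $\gamma$ descends, by \eqref{e:stokes-loop-1}, to a smooth non-self-intersecting closed curve in $\mathbb C/\mathbb Z$ lying in the primitive homotopy class of the core circle $c(t)=t$ (this is the reformulation recorded right after Definition~\ref{d:stokes-loop}). The plan is to use the standard fact that any essential simple closed curve on an annulus (here the cylinder $\mathbb C/\mathbb Z$) is smoothly isotopic, within the surface, to the core circle $c$. The turning number, defined as the winding number of the unit tangent relative to the flat trivialization $\partial_x$ of $T(\mathbb C/\mathbb Z)$, is invariant under regular homotopy and in particular under isotopy; since $c$ has constant tangent $1$, its turning number is $0$, whence $n_\gamma=0$. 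Combining with the first step yields $\int_\gamma W'/W\,dz=0$, which is \eqref{e:W-winding}.

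I expect the only non-routine ingredient to be this topological claim that an embedded essential loop on the cylinder has zero turning number; the branch-tracking in the first paragraph and the Stokes identity are immediate. As an alternative to invoking the classification of simple closed curves, one can apply Gauss--Bonnet on the flat half-cylinder cut out by $\gamma$ (a surface with vanishing Gaussian curvature, Euler characteristic $0$, and a cylindrical end contributing nothing), which again gives that the total geodesic curvature of $\gamma$, namely $2\pi n_\gamma$, vanishes; the care there goes into justifying the behaviour at the noncompact end. Either way, the hard part is precisely the vanishing of $n_\gamma$, and everything downstream is formal.
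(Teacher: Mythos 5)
Your argument is correct, and its first half coincides with the paper's: both proofs use the Stokes condition \eqref{e:stokes-loop-2} to show that the winding number of $W\circ\gamma$ about $0$ is the negative of the turning number $n_\gamma$ (the paper phrases this after a unit-speed reparametrisation, where the condition reads $w(s)=\overline{\rho'(s)}$ for the normalised $w=W\circ\rho/|W\circ\rho|$; your argument-tracking computation is the same reduction), so that everything rests on proving $n_\gamma=0$. Where you diverge is in that key topological step. Your primary route invokes the classification of essential simple closed curves on the cylinder (any such curve is smoothly isotopic to the core circle) together with the regular-homotopy invariance of the turning number in the flat trivialisation; this is clean and standard, though it outsources the real content to the isotopy classification. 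The paper instead runs a Gauss--Bonnet argument -- essentially your stated alternative -- but on a \emph{compact} flat region: it takes a horizontal circle $\rho_1(s)=-s+iM$ at large height $M$ disjoint from $\gamma$, uses the Jordan--Sch\"onflies theorem to identify the region bounded by $\gamma$ and $\rho_1$ with a compact annulus, and concludes $\int_\gamma\kappa+\int_{\rho_1}\kappa=0$ with $\kappa_{\rho_1}\equiv 0$. This sidesteps exactly the issue you flag with the noncompact half-cylinder end, at the cost of needing Jordan--Sch\"onflies to control the region. Both routes are valid; the paper's buys a self-contained differential-geometric argument, while yours buys brevity by citing surface topology.
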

\begin{proof}
It is convenient to change the parametrisation of $ \gamma $ to $ \rho : [0, L] \to \mathbb C $, 
$ |\rho' (s ) | = 1$, so that now $ \gamma ( L ) = \gamma ( 0 ) +1 $. Similarly it is useful to define
$ w ( s ) := W ( \rho ( s) ) / | W ( \rho ( s ) ) | $ and to note that the winding numbers around 0  of $ 
s \mapsto W ( \rho ( s) ) $ and of $ s \mapsto w ( s) $ are the same. The condition \eqref{e:stokes-loop-2} becomes
$ w ( s ) \rho' ( s) = 1 $, that is 
\begin{equation}
\label{eq:w2gp} 
w ( s ) = \overline {\rho' ( s ) } . \end{equation}
Condition \eqref{e:stokes-loop-1} shows that 
$ s \mapsto  \rho' ( s ) $ is a smooth closed curve. 
Also, \eqref{eq:w2gp} shows that the winding number around zero of the curve
$ w $ is the negative of the same number for the curve $ \rho' $. 

Hence it suffices to show that the winding number of $ [0, L ] \ni s \mapsto \rho' ( s ) $ is $ 0 $. As remarked after Definition \ref{d:stokes-loop} we can 
consider $ \rho $ as a closed curve on $ \mathbb C/\mathbb Z $. Let $ [ 0 , L ]  \ni s \mapsto \theta ( s ) $ be continuously chosen angle between $ \rho' ( s ) $ and 
$ 1  $ (thought of as vectors in $ \mathbb R^2 \simeq \mathbb C$). The curvature of $ \rho $ is given by $ \kappa_\rho ( s ) = \theta' ( s ) $.  (We recall that to see this
we write $ \rho' ( s ) = ( \cos \theta ( s ) , \sin \theta ( s ) ) $ so that $ \kappa_\rho  ( s ) = 
\rho '' ( s ) \cdot \rho' ( s )^\perp = \theta' ( s ) $). Let $ \rho_1 : [0,1 ] \to \mathbb C/\mathbb Z  $ be given by $ \rho_1 ( s ) = - s + i M \mod 1 $ 
with $ M$ chosen large enough so that $ \rho_1 $ and $ \rho $ are disjoint. Then the curves $ \rho_1 $ and $ \rho $ form a boundary 
of a bounded subset of $ \mathbb C/\mathbb Z $ homeomorphic to $ ( \mathbb R/\mathbb Z)  \times [0,1 ] $ (this follows from the Jordan--Sch\"onflies 
theorem -- see \cite[Theorem 6.7.1]{Riem} --  applied to the cylinder compactified at one of the infinities)  and the Gauss--Bonnet theorem for a flat surface (the cylinder $ \mathbb C /\mathbb Z $;
see for instance \cite[\S 4-5]{doca} for the general case) shows that 
\[  \int_\rho \kappa_\rho ( s ) ds + \int_{\rho_1 } \kappa_{\rho_1}  (s ) ds = 0 . \]
Since $ \kappa_{\rho_1 } \equiv 0 $, it follows that $ \int_{\rho} \kappa_{\rho} (s) ds = 0 $. 
 The calculation of $ \kappa_\rho ( s) $ above also shows that
$ \kappa_\rho ( s) = - i \overline{ \rho' ( s) } \rho'' ( s ) = - i \rho''  ( s ) / \rho' ( s ) $. 
The winding number of $ [ 0 , L ] \ni s \mapsto \rho' ( s ) $ is 
\[ \frac{1}{ 2 \pi i } \int_{0}^L \rho'' ( s ) / \rho' ( s ) ds = \frac{1 } { 2 \pi } \int_\rho \kappa_\rho ( s ) ds = 0 , \]
which proves the claim.
\end{proof}
%%%%%%%%%%%%%%%%%%%%%%%%%%%%%%%%%%%%%%%%%%%%%%%%%%%%%%%%%%%%%%%%%%%%%%%%%%%%%%%%

Moreover, any Stokes loop $\gamma$ satisfies the condition~\eqref{e:stokes-condition}
for $\alpha$ in the upper half-plane:
\begin{equation}
  \label{e:stokes-verified}
\psi_{\gamma,\alpha}'(t)=\Im\big(\alpha W(\gamma(t))\gamma'(t)\big)\geq 0\quad\text{for all }t\in [0,1],\ \Im\alpha\geq 0.
\end{equation}
We also get that the coefficient $c_0$ defined in~\eqref{e:WKB-periodic} is positive:
\begin{equation}
\label{eq:defc0} 
c_0:=\int_\gamma W(z)\,dz=\int_0^1 W(\gamma(t))\gamma'(t)\,dt>0.
\end{equation}
If $W=c_0>0$ is a constant potential, then it has a Stokes loop $\gamma(t)=t$.
The next lemma shows that perturbations of this potential have Stokes loops as well:
%%%%%%%%%%%%%%%%%%%%%%%%%%%%%%%%%%%%%%%%%%%%%%%%%%%%%%%%%%%%%%%%%%%%%%%%%%%%%%%%
\begin{prop}
  \label{p:stokes-perturb}
Assume that $W$ is a 1-periodic entire function and
\begin{gather}
  \label{e:stokes-perturb-1}
\int_0^1 W(z)\,dz=c_0>0,\\
  \label{e:stokes-perturb-2}
W(z)\neq 0\quad\text{and}\quad |\Im W(z)|\leq \Re W(z)\quad\text{when }|\Im z|\leq 1.
\end{gather}
Then $W$ has a Stokes loop.
\end{prop}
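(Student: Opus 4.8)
\medskip
\noindent\textbf{Plan of proof of Proposition \ref{p:stokes-perturb}.}
The plan is to rewrite the Stokes loop condition in terms of an antiderivative $\Phi$ of $W$, observe that a Stokes loop is the same thing as a suitable level curve of $\Im \Phi$, and then produce such a level curve by a global estimate that plays the cone hypothesis \eqref{e:stokes-perturb-2} against the normalization \eqref{e:stokes-perturb-1}. First note that \eqref{e:stokes-perturb-2} forces $\Re W(z)>0$ for $|\Im z|\le 1$: if $\Re W(z)\le 0$ then $|\Im W(z)|\le\Re W(z)\le 0$, hence $W(z)=0$, contradicting \eqref{e:stokes-perturb-2}. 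By compactness there is $\delta\in(0,1)$ with $W$ holomorphic, zero-free and with $\Re W>0$ on $\{|\Im z|\le 1+\delta\}$. Let $\Phi$ be the antiderivative of $W$ there with $\Phi(0)=0$. Since $W$ is $1$-periodic, $\Phi(z+1)-\Phi(z)=\int_0^1 W(x)\,dx=c_0>0$ is a real constant, so $\psi:=\Im\Phi$ is $1$-periodic and descends to the cylinder $\mathcal C:=\{|\Im z|<1+\delta\}/\mathbb Z$. I will use repeatedly that $\partial_y\Im\Phi=\Re W>0$ (so $\psi$ has no critical points and is strictly increasing on each vertical), and that $\int_0^1 W(x+ic)\,dx=c_0$ for $|c|\le 1$ by periodicity and Cauchy's theorem, whence $\int_0^1\Re W(x+ic)\,dx=c_0$ and $\int_0^1\Im W(x+ic)\,dx=0$.

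Next I reformulate. Along any curve $\gamma$ on which $\psi\equiv c$ one has $\tfrac{d}{dt}\Phi(\gamma(t))=W(\gamma(t))\gamma'(t)\in\mathbb R$, and this number is nonzero wherever $\gamma'\neq0$ because $W\neq0$; since $\Re\Phi$ changes by $c_0>0$ around a loop of this kind, a continuity argument (a continuous nonvanishing derivative with positive net change has a fixed sign) gives $W(\gamma)\gamma'>0$ after choosing the orientation. Because $\psi$ is $1$-periodic with $\partial_y\psi>0$, the level set $\{\psi=c\}$ meets each vertical at most once, so when it avoids $\{\Im z=\pm(1+\delta)\}$ it is the graph $y=\phi_c(x)$ of a smooth $1$-periodic function $\phi_c$, i.e.\ a non-self-intersecting loop; and it lies in $\{|\Im z|\le 1\}/\mathbb Z$ exactly when $\psi(x-i)\le c\le \psi(x+i)$ for all $x$. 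Setting $\widetilde\gamma(t)=t+i\phi_c(t)$ one checks $\widetilde\gamma(t+1)=\widetilde\gamma(t)+1$ and $\widetilde\gamma'\neq0$, so such a curve is a Stokes loop in the sense of Definition \ref{d:stokes-loop}. Thus it suffices to produce a constant $c$ with
\[
\sup_x\psi(x-i)\ \le\ c\ \le\ \inf_x\psi(x+i).
\]

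For the global estimate, parametrize the level curve $\{\Re\Phi=\sigma\}$ by $y=\Im z\in[-1,1]$ and put $\beta^\pm(\sigma):=\Im\Phi$ at its point with $\Im z=\pm1$; since $x\mapsto\Re\Phi(x\mp i)$ is surjective onto $\mathbb R$, one has $\sup_x\psi(x-i)=\sup_\sigma\beta^-(\sigma)$ and $\inf_x\psi(x+i)=\inf_\sigma\beta^+(\sigma)$. Short computations along these curves give $\tfrac{d}{d\sigma}\beta^\pm=\Im W/\Re W$ (evaluated on $|\Im z|=1$) and $\tfrac{d}{dy}\Im\Phi=|W|^2/\Re W$ along $\{\Re\Phi=\sigma\}$. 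Hence each $\beta^\pm$ is $c_0$-periodic, and by \eqref{e:stokes-perturb-2} and $\int_0^1\Im W(x\pm i)\,dx=0$,
\[
\mathrm{osc}(\beta^-)+\mathrm{osc}(\beta^+)\ \le\ \tfrac12\!\int_0^1\!\big(|\Im W(x-i)|+|\Im W(x+i)|\big)dx\ \le\ \tfrac12\!\int_0^1\!\big(\Re W(x-i)+\Re W(x+i)\big)dx\ =\ c_0.
\]
On the other hand $\beta^+(\sigma)-\beta^-(\sigma)=\int_{-1}^1\frac{|W|^2}{\Re W}\,dy$ along $\{\Re\Phi=\sigma\}$, and the change of variables $(x,y)\mapsto(\Re\Phi(x+iy),y)$, whose Jacobian is $\Re W$, together with Cauchy--Schwarz and $\int_{-1}^1\!\int_0^1\Re W\,dx\,dy=2c_0$, yields
\[
\frac1{c_0}\int_0^{c_0}\!\big(\beta^+(\sigma)-\beta^-(\sigma)\big)d\sigma\ =\ \frac1{c_0}\int_{-1}^1\!\!\int_0^1\!|W(x+iy)|^2\,dx\,dy\ \ge\ \frac1{2c_0}\Big(\int_{-1}^1\!\!\int_0^1\!\Re W\,dx\,dy\Big)^2\ =\ 2c_0.
\]
If $\sup\beta^->\inf\beta^+$ held, then for every $\sigma$ we would have $\beta^+(\sigma)-\beta^-(\sigma)\le(\inf\beta^+-\sup\beta^-)+\mathrm{osc}(\beta^+)+\mathrm{osc}(\beta^-)<c_0<2c_0$, contradicting the last display. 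Therefore $\sup\beta^-\le\inf\beta^+$, any $c$ in this interval satisfies the condition above, and (using the $\delta$-room to keep the graph inside the strip where $W\neq0$) the level curve $\{\psi=c\}$ is the desired Stokes loop.

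The main obstacle is the last step: recognizing that the quantities to compare are the oscillation of the boundary level values $\beta^\pm$ and the vertical width $\beta^+-\beta^-$ of the image of $\Phi$, and that \eqref{e:stokes-perturb-2} bounds the former by $c_0$ while Cauchy--Schwarz, fed by the periodicity identity $\int_0^1\Re W(x+ic)\,dx=c_0$, bounds the latter below by $2c_0$. The other ingredients---the Jacobian change of variables, the $\delta$-enlargement needed to turn the non-strict inequality into an open condition, and the orientation check ensuring $W(\gamma)\gamma'>0$ rather than merely $\neq0$---are routine.
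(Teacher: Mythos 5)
Your proof is correct, but it takes a genuinely different route from the paper's. The paper works directly with the graph ansatz $\gamma(t)=t+if(t)$: condition \eqref{e:stokes-loop-2} forces $f'(t)=-\Im W(t+if(t))/\Re W(t+if(t))$, the cone hypothesis \eqref{e:stokes-perturb-2} gives $|f'|\leq 1$ so the solution with $f(0)=0$ stays in $|f|\leq 1$ on $[0,1]$, and closure ($f(1)=0$, hence periodicity of $f$) follows from $0=\Im\int_1^{1+if(1)}W\,dz=\int_0^{f(1)}\Re W(1+is)\,ds$ together with $\Re W>0$. You instead pass to the antiderivative $\Phi$, observe that Stokes loops are precisely level curves of $\Im\Phi$ trapped in the strip, and select an admissible level by a global estimate: the oscillations of the boundary traces $\beta^\pm$ sum to at most $c_0$ (cone condition plus the fact that the oscillation of a periodic function is at most half its total variation over a period), while the mean width $\beta^+-\beta^-$ is at least $2c_0$ (Jacobian change of variables plus Cauchy--Schwarz applied to $\int\!\!\int\Re W=2c_0$). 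All the steps check out, including the orientation argument and the $\delta$-enlargement needed for the implicit function theorem at the boundary levels. The paper's argument is shorter and needs only elementary ODE theory; yours is global and yields more, namely the whole nonempty interval $[\sup\beta^-,\inf\beta^+]$ of admissible levels, and the factor-of-two gap between $2c_0$ and $c_0$ indicates that \eqref{e:stokes-perturb-2} has some slack. Note also that the paper's curve is itself the level curve of $\Im\Phi$ through $z=0$ (your construction with $c=\Im\Phi(0)$); the two proofs differ only in how they certify that some level curve closes up inside $\{|\Im z|\leq 1\}$.
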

%%%%%%%%%%%%%%%%%%%%%%%%%%%%%%%%%%%%%%%%%%%%%%%%%%%%%%%%%%%%%%%%%%%%%%%%%%%%%%%%
\begin{proof}
We are looking for $ \gamma ( t ) $ of the form 
\begin{equation}
\label{eq:ga2f} 
\gamma(t):=t+if(t), \ \  f \in C^\infty ( \mathbb R ; \mathbb R) ,\quad
f(t+1)=f(t).\end{equation}
The condition \eqref{e:stokes-loop-2} then means that 
\[  \begin{split}
&  \Re W ( t + i f ( t) ) - f' ( t ) \Im W ( t + i f ( t ) )  > 0 , \\ &   f' (t ) \Re W ( t + i f ( t ) ) + \Im W ( t + i f ( t ) ) = 0. 
\end{split} \]
This means that $ f $ has to solve
$$
f'(t)=-{\Im W(t+if(t))\over \Re W(t+if(t))},
$$
and we choose $ f ( 0 ) = 0 $ as the initial condition. 
By~\eqref{e:stokes-perturb-2} this Cauchy problem has a solution for $ 0 \leq t \leq 1 $ and 
and moreover $|f ( t ) |\leq 1$ there. In the notation of \eqref{eq:ga2f} we have
$$
W(\gamma(t))\gamma'(t)={|W(t+if(t))|^2\over \Re W(t+if(t))}>0,  \ \ \ t \in [ 0, 1 ] . 
$$
In particular,
$$
\int_{0}^{1+if(1)}W(z)\,dz=\int_0^1 W(\gamma(t))\gamma'(t)\,dt>0.
$$
Since $\int_0^1 W(z)\,dz>0$, we see that
$$
0=\Im\int_1^{1+if(1)}W(z)\,dz=\int_0^{f(1)} \Re W(1+is)\,ds.
$$
Since $\Re W(1+is)>0$, we see that $f(1)=0$. But then periodicity of $ W $ shows that
$ f_1 ( t ) := f ( t + 1 ) $ solves the same equation as $ f ( t )$ with the same initial data. 
Hence $ f ( t ) = f ( t + 1 ) $ and   $\gamma(t+1)=\gamma(t)+1$.
Thus $\gamma$ is a Stokes loop.
\end{proof}
%%%%%%%%%%%%%%%%%%%%%%%%%%%%%%%%%%%%%%%%%%%%%%%%%%%%%%%%%%%%%%%%%%%%%%%%%%%%%%%%

\subsection{Mutliplicity in terms of the transition matrix}
\label{s:multipler}

For $k\in\mathbb C$, let $\mathcal V_{k,h}$ be the space of entire functions
satisfying the periodicity condition
\begin{equation}
  \label{e:V-k-h}
u(z+1)=e^{ik/h} u(z),\quad
z\in\mathbb C.
\end{equation}
We study the values of $(\alpha,k)$ for which the operator $P(\alpha,h)$ 
has a kernel on $\mathcal V_{k,h}$. More precisely, for $ \alpha $ satisfying 
\eqref{e:alpha-ass} and a fixed $ k\in\mathbb C$, 
we define the multiplicity in the way similar to
\eqref{eq:mult} (but with the roles of $ \alpha $ and $ k $ reversed).  For that we consider 
\begin{equation}
\label{eq:defPk} 
\begin{gathered} P_k ( \alpha, h ) =P(\alpha,h): H^2_k  \to L^2_k , \ \ \ 
H^s_k  := \{ w \in H^s_{\rm{loc}} ( \mathbb R ) : 
w(x+1)=e^{ik/h} w(x) \},  \end{gathered}  \end{equation}
($ H_k^0 =  L^2_k \simeq L^2 ( \mathbb R / \mathbb Z) $) and then put
\begin{equation}
  \label{e:m-h-def}
m_h(\alpha,k)= \frac{1}{2 \pi i } \tr_{ L^2_k } \oint_{ \alpha} P_k ( \beta, h )^{-1} \partial_\beta P_k ( \beta , h ) d \beta,
\end{equation}
where the integral is over a positively oriented circle including $ \alpha $ as the only possible pole
of $ \beta \mapsto P_k ( \beta, h ) $.

We note that, in the notation of~\eqref{eq:Pzet},
\begin{equation}
\label{e:realtor} \begin{gathered} P_k( \alpha, h ) = h^2 \tau ( k ) P_{k/h} ( \alpha/h ) \tau(k)^{-1} , \\
\tau (k ) w ( x ) := e^{ i k x /h  } w ( x ), \ \ \  \tau ( k ) :  H^s ( \mathbb R/\mathbb Z ) \to H^s_k . 
\end{gathered}
\end{equation}
Hence, the meromorphy of $ \beta \mapsto P_k ( \beta, h ) $ is guaranteed by Lemma \ref{l:W0}, except 
for  $ k/h \in 2 \pi \mathbb Z $ when we can have $ \ker P_k ( \alpha,  h ) \neq \{ 0 \} $ for all $ \alpha $ (see the remark 
after Lemma \ref{l:W0}). In that case, we put $ m_h ( \alpha, k ) = \infty $. 

Since we consider $V(x,y)=W(x)^2$, the Fourier mode decomposition in~$y$ gives,
in the notation of~\eqref{eq:scalar0},
$$
P(\beta,k)=\bigoplus_{m\in\mathbb Z}P_{2\pi mi+k}(\beta).
$$
From~\eqref{e:realtor} we then have
\begin{equation}
\label{eq:mult2}   m ( \alpha/ h , k/h ) = \sum_{m\in\mathbb Z} m_h ( \alpha,2\pi i mh+ k )  , 
\end{equation}
where the right hand side was defined in \eqref{eq:defmk}.

We will now relate the multiplicity defined by \eqref{e:m-h-def} to the multiplicity of zeros of an eigenvalue equation for $ \mathcal M_{ z_0 +1  \leftarrow z_0 } $. We note that for fixed $ h $, the holomorphic dependence on 
the parameter $ \alpha $ shows that 
\[  \mathbb C \ni \alpha \mapsto \mathcal M_{ z_0 +1 \leftarrow z_0  }  ( \alpha, h ) \ \text{ is holomorphic.} \]

%%%%%%%%%%%%%%%%%%%%%%%%%%%%%%%%%%%%%%%%%%%%%%%%%%%%%%%%%%%%%%%%%%%%%%%%%%%%%%%%
\begin{prop}
\label{p:multi}
Fix $ k \in \mathbb C $ and $ h > 0 $. For arbitrary $z_0\in\mathbb C$ 
\[   \alpha \mapsto P_k ( \alpha, h )^{-1}  \text{ is meromorphic }  \Longleftrightarrow \ 
\det ( \mathcal M_{z_0 +1  \leftarrow z_0 } ( \alpha, h ) - e^{ i k /h } ) \not \equiv 0 . \]
If the determinant on the right hand side is not identically $ 0 $, then 
\begin{equation}
\label{eq:m2M}   m_h ( \alpha, k ) = \frac{1}{ 2 \pi i } \tr_{ \mathbb C^2 } \oint_\alpha 
( \mathcal M_{z_0 +1 \leftarrow z_0} ( \beta , h ) -e^{ik/h} )^{-1} \partial_\beta \mathcal M_{z_0 + 1 \leftarrow z_0} ( \beta , h ) d\beta ,
\end{equation}
where the integral is as in \eqref{e:m-h-def}. In other words, the multiplicity $ m_h ( \alpha, k ) $ is
the same as the multiplicity of the zero of the eigenvalue equation
\[  \beta \mapsto  \det (  \mathcal M_{z_0 + 1\leftarrow  z_0 } ( \beta , h ) -e^{ik/h} ) \]  at $ \beta = \alpha $. 
\end{prop}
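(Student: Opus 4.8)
The plan is to reduce the infinite-dimensional holomorphic family $\beta\mapsto P_k(\beta,h)$, near a fixed $\alpha$, to the $2\times 2$ matrix family
$M(\beta):=\mathcal M_{z_0+1\leftarrow z_0}(\beta,h)-e^{ik/h}$ by means of a Grushin problem built from the Cauchy problem for the ODE, and then to transport the logarithmic residue in \eqref{e:m-h-def} down to $\det M$. To start, since $W$ is $1$-periodic one has, for $z_0,z_0'\in\mathbb C$, $\mathcal M_{z_0'+1\leftarrow z_0'}=\mathcal M_{z_0'\leftarrow z_0}\,\mathcal M_{z_0+1\leftarrow z_0}\,\mathcal M_{z_0'\leftarrow z_0}^{-1}$, so the determinant in the statement and the integrand in \eqref{eq:m2M} do not depend on $z_0$; hence it suffices to treat $z_0\in\mathbb R$.

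First I would record the needed ODE facts (see \cite[\S 7.2]{SjBook}): the Cauchy-data map $u\mapsto\mathbf u(z_0)$ is an isomorphism from the solution space of \eqref{e:schr-1} onto $\mathbb C^2$, and variation of parameters produces a bounded, $\beta$-holomorphic operator $\mathcal N(\beta,h):L^2_k\to\mathbb C^2$ such that for every $u\in H^2_k$
\[
\mathbf u(z_0+1)=\mathcal M_{z_0+1\leftarrow z_0}(\beta,h)\,\mathbf u(z_0)+\mathcal N(\beta,h)\big(P(\beta,h)u\big),
\]
where $\mathbf u(z_0):=(u(z_0),hD_xu(z_0))$, in agreement with \eqref{e:vec-u-def} on solutions. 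Because the transition matrices $\mathcal M_{z_0+1\leftarrow s}(\beta,h)$, $s\in[z_0,z_0+1]$, are invertible, $\mathcal N(\alpha,h)$ is onto $\mathbb C^2$; I would fix a bounded $R_-:\mathbb C^2\to L^2_k$ with $\mathcal N(\alpha,h)R_-=I$, so that $\beta\mapsto\mathcal N(\beta,h)R_-$ is invertible for $\beta$ near $\alpha$.

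Next I would consider, for $\beta$ near $\alpha$, the Grushin problem
\[
\mathcal P(\beta):=\begin{pmatrix}P_k(\beta,h)&R_-\\ R_+&0\end{pmatrix}:H^2_k\oplus\mathbb C^2\to L^2_k\oplus\mathbb C^2,\qquad R_+u:=\mathbf u(z_0).
\]
Given $(f,v_+)$, let $u$ be the unique $H^2_{\mathrm{loc}}(\mathbb R)$ solution of $P(\beta,h)u=f-R_-v_-$ with $\mathbf u(z_0)=v_+$; by uniqueness of the Cauchy problem $u$ lies in $H^2_k$ exactly when $\mathbf u(z_0+1)=e^{ik/h}v_+$, and by the displayed relation this reads $\mathcal N(\beta,h)R_-\,v_-=M(\beta)v_++\mathcal N(\beta,h)f$, which determines $v_-$ uniquely for $\beta$ near $\alpha$. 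Thus $\mathcal P(\beta)$ is invertible there, and reading off the $(2,2)$-block of the inverse shows that the effective Hamiltonian is $E_{-+}(\beta)=\big(\mathcal N(\beta,h)R_-\big)^{-1}M(\beta)$. Running the same construction near an arbitrary point, together with analytic Fredholm theory (note $P_k(\beta,h)=(hD_x)^2-\beta^2W(x)^2$ is a holomorphic family of index-$0$ Fredholm operators, being $(hD_x)^2$ plus a compact perturbation), yields the first assertion: $\beta\mapsto P_k(\beta,h)^{-1}$ is meromorphic $\iff$ $P_k$ is invertible somewhere $\iff$ $E_{-+}$ is $\iff$ $\det M(\beta)\not\equiv 0$.

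Finally, when $\det M\not\equiv 0$, \cite[Proposition 2.12]{notes} (equivalently Gohberg--Sigal, \cite[Theorem C.11]{res}) identifies $m_h(\alpha,k)$ with $\frac1{2\pi i}\tr_{\mathbb C^2}\oint_\alpha E_{-+}(\beta)^{-1}\partial_\beta E_{-+}(\beta)\,d\beta$; by Jacobi's formula this equals $\frac1{2\pi i}\oint_\alpha\partial_\beta\log\det E_{-+}(\beta)\,d\beta$, i.e.\ the order of vanishing of $\det E_{-+}$ at $\alpha$. Since $\det E_{-+}(\beta)=\det\big(\mathcal N(\beta,h)R_-\big)^{-1}\det M(\beta)$ with the first factor holomorphic and nonvanishing near $\alpha$, that order equals the order of vanishing of $\det M(\beta)=\det\big(\mathcal M_{z_0+1\leftarrow z_0}(\beta,h)-e^{ik/h}\big)$ at $\alpha$, which — rewritten via Jacobi's formula — is exactly the contour integral on the right of \eqref{eq:m2M}. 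I expect the only genuinely delicate point to be the correct choice of $R_\pm$ and the resulting identification of $E_{-+}$ as $(\mathcal N(\beta,h)R_-)^{-1}M(\beta)$; once this is in place, the passage from the logarithmic residue of $P_k$ to the order of vanishing of $\det(\mathcal M_{z_0+1\leftarrow z_0}(\beta,h)-e^{ik/h})$ is routine Gohberg--Sigal bookkeeping.
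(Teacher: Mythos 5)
Your argument is correct, and at the level of strategy it is the same as the paper's: reduce the holomorphic family $\beta\mapsto P_k(\beta,h)$ to a $2\times2$ effective Hamiltonian built from the monodromy matrix via a Grushin problem, then apply Gohberg--Sigal bookkeeping (\cite[Proposition 4.1]{ela}, \cite[Theorem C.11]{res}). The concrete realization differs. The paper first passes to the periodic setting and to the first-order system $D_k(\alpha,h)$ of \eqref{eq:defDka} on $L^2(\mathbb R/\mathbb Z;\mathbb C^2)$, and builds $R_\pm$ from a cutoff $\chi$ and the forward/backward fundamental solutions (see \eqref{eq:defRpm}), which has the advantage that the effective Hamiltonian comes out \emph{exactly} as $\mathcal M^k_{0\leftarrow 1}-I$ in \eqref{eq:defEP}, so no further normalization is needed. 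You instead work directly with the second-order operator on $H^2_k$, take $R_+$ to be the Cauchy-data map at $z_0$ and $R_-$ a right inverse of the Duhamel map $\mathcal N(\alpha,h)$; your effective Hamiltonian is $(\mathcal N(\beta,h)R_-)^{-1}\bigl(\mathcal M_{z_0+1\leftarrow z_0}(\beta,h)-e^{ik/h}\bigr)$, i.e.\ the monodromy expression only up to a holomorphic nonvanishing factor, which you then correctly strip off with Jacobi's formula before converting the order of vanishing of the determinant back into the contour integral \eqref{eq:m2M}. Both routes are sound; yours avoids the reduction to a first-order system and the cutoff construction at the cost of (i) the small surjectivity argument for $\mathcal N(\alpha,h)$ (which needs slightly more than invertibility of the transition matrices --- one should note that the vectors $\mathcal M_{z_0+1\leftarrow s}(0,1)^t$ cannot all be parallel, as that would force a nontrivial solution of \eqref{e:schr-1} with identically vanishing first component) and (ii) the extra Jacobi-formula normalization. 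Your preliminary remark that the ``integrand'' in \eqref{eq:m2M} is independent of $z_0$ should really say that the \emph{value of the integral} is independent of $z_0$ (it equals the order of vanishing of the $z_0$-independent determinant); the pointwise integrand is only conjugation-covariant. Also note that your $R_\pm$ are frozen at $\alpha$ and hence $\beta$-independent, which is exactly what lets you quote the clean version of the trace identity.
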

%%%%%%%%%%%%%%%%%%%%%%%%%%%%%%%%%%%%%%%%%%%%%%%%%%%%%%%%%%%%%%%%%%%%%%%%%%%%%%%%
\begin{proof}
1. We start by observing that since $ W ( z + 1 ) = W ( z ) $, $ \mathcal M_{ z_0  \leftarrow z } = 
\mathcal M_{ z_0 + 1\leftarrow z + 1 } $. Hence 
\begin{equation}
\label{eq:z02z} 
\mathcal M_{ z+ 1 \leftarrow z  } = \mathcal M_{ z + 1  \leftarrow z_0 +1  } \mathcal M_{ z_0 +1 \leftarrow  z_0  } \mathcal M_{ z_0 \leftarrow  z } =  \mathcal M_{ z  \leftarrow z_0   } \mathcal M_{ z_0 +1 \leftarrow  z_0  } \mathcal M_{ z \leftarrow  z_0 }^{-1}
\end{equation} 
and the statements for a fixed $ z_0 $ are equivalent to statements for all $ z $. 

The obstruction to the meromorphy of $ P_k ( \alpha, h )^{-1} $ is having a non-trivial solution $ w ( \alpha ) \in H_k^2 $
to $ P_k ( \alpha, h ) w ( \alpha ) = 0 $ for {\em all} $ \alpha $. As discussed in \S \ref{s:toy-basic},  $x \mapsto w ( \alpha, x ) $ extends to an entire function which then satisfies \eqref{e:V-k-h} and 
$ P ( \alpha, h ) w = 0 $. Definition of 
$ \mathcal M_{ z + 1 \leftarrow z  } ( \alpha, h ) $ in \S \ref{s:sfo}, then shows that $ e^{ ik/h  } $ is an eigenvalue
of $ \mathcal M_{ z + 1 \leftarrow z  } $ for all $ z $. That gives $ \Leftarrow $ in the first statement. To see
$ \Rightarrow $ we note that meromorphy of $ P_k ( \alpha, h )^{-1} $ implies existence of $ \alpha_0 $
for which $ \ker P_k ( \alpha_0 , h ) = \{ 0 \} $. If $ e^{ i k/h } $ were an eigenvalue of $ \mathcal M_{ z_0 + 1 \leftarrow  z_0  } ( \alpha_0 , h ) $ then, by \eqref{eq:z02z} it would be an eigenvalue of $ \mathcal M_{ z +1  \leftarrow  z } ( \alpha_0, h ) $ for all $ z $ 
and we would have have an entire solution to $ P ( \alpha_0 , h ) u = 0 $ satisfying \eqref{e:V-k-h}. But that 
means that $ \ker P_k ( \alpha, h ) $ is nontrivial.

The formula \eqref{eq:m2M} makes the above argument quantitative and it is easier to obtain 
in the setting of operators acting on periodic function:
\[  \widetilde P_k (\alpha, h ) := e^{ - i x k/h } P_k ( \alpha, h ) e^{ i x k /h } : 
H^2 ( \mathbb R /  \mathbb Z ) \to L^2 ( \mathbb R/\mathbb Z) .\]
The related operator acting on holomorphic functions is given by 
$ \widetilde P_k ( \alpha, h ) = ( h D_z + k  )^2 - \alpha^2 W^2 $. The corresponding transition matrix,  $  \mathcal M^k_{ z_1 \leftarrow z_0 } $, is defined as in \eqref{e:transition-def} but with \eqref{e:vec-u-def} replaced by
$ \mathbf u(z)=(u(z),( hD_z +  k ) u(z)) $. It 
is related to the original transition matrix
as follows:
\begin{equation}
\label{eq:M2Mt}   \mathcal M^k_{z_1 \leftarrow z_0 } ( \alpha, h ) = e^{  i k( z_0 - z_1)/h  }  \mathcal M_{ z_1 \leftarrow  z_0 } 
( \alpha, h ) . \end{equation}
The final reduction is to a system:
\begin{equation}
\label{eq:defDka}  D_k ( \alpha , h ) := \begin{pmatrix} h D_z + k & - 1 \\
- \alpha^2 W(z)^2 & h D_z +  k \end{pmatrix}, \ \ \ 
D_k ( \alpha , h ) \mathcal M^k_{ z \leftarrow  z_0  } ( \alpha, h ) = 0 ,  
\end{equation}
and $ \mathcal M^k_{z_0 \leftarrow z_0 } ( \alpha, h )  = I_{\mathbb C^2}$. 

Proceeding as in \eqref{eq:P2D2} we see that 
in the definition of $ m_h ( \alpha , k ) $ we can replace $ P_k ( \alpha, h ) $ by 
$D_k ( \alpha, h ) $ (now acting on periodic $ \mathbb C^2 $-valued functions) and in view of
\eqref{eq:z02z}  and \eqref{eq:M2Mt}, the equality \eqref{eq:m2M}
is equivalent to 
\begin{equation}
\label{eq:eqm}
\begin{split}
&  \frac{1}{2 \pi i } \tr_{ L^2 ( \mathbb R/\mathbb Z ; \mathbb C^2 ) } \oint_{ \alpha} D_k ( \beta, h )^{-1} \partial_\beta D_k ( \beta , h ) d \beta =  \\
& \ \ \ \ \ \frac{1}{ 2 \pi i } \tr_{ \mathbb C^2 } \oint_\alpha 
( \mathcal M^k_{1\leftarrow 0} ( \beta , h ) - I )^{-1} \partial_\beta \mathcal M^k_{1\leftarrow 0} ( \beta , h ) d\beta .
\end{split}
\end{equation}

\noindent 2. A systematic derivation of the formula~\eqref{eq:eqm} follows for setting up a Grushin problem (see
\cite[\S C.1]{res}) 
for $ D_k ( \alpha, h ) $ following \cite[(3.12)]{ela}. 
To pose it we introduce 
\[ \chi \in C^\infty( \mathbb R/\mathbb Z ; [ 0 , 1 ] ) , \ \ \
 \mathbb R/\mathbb Z \simeq ( - \tfrac12, \tfrac12 ] , 
 \ \ \ \supp \chi = [ - \tfrac14, \tfrac12 ] , \ \ \
\chi |  _{ [ -\frac18, \frac14 ] } = 1. \]
We write $ \chi' = \chi'_+ + \chi_-' $ where $ \chi_+' $ is supported near $ - \frac14 $ and 
$ \chi_-' $ is supported near $ \pm \frac12 $. 
The following operators are also well defined:
\begin{equation}
\label{eq:defRpm} 
\begin{gathered}   R_- : \mathbb C^2 \to L^2 ( \mathbb R/\mathbb Z;\mathbb C^2 ) , \ \ \ ( R_- u_- ) ( x ) := \chi'_- \mathcal M^k_{ x \leftarrow 0  }u_- , \\
R_+ : H^1 ( \mathbb R/\mathbb Z ;\mathbb C^2)  \to \mathbb C^2, \ \ \  R_+ \mathbf u := \mathbf u ( 0 ) . 
\end{gathered}
\end{equation}
We then claim that 
\begin{equation}
\label{eq:GrDk}   \mathscr P := \begin{pmatrix} (i/h ) D_k ( \alpha, h )  & R_- \\
R_+ & 0 \end{pmatrix} : H^1 ( \mathbb R/\mathbb Z ; \mathbb C^2 ) \oplus \mathbb C^2 \to 
L^2 ( \mathbb R /\mathbb Z ; \mathbb C^2 ) \oplus \mathbb C^2 , \end{equation}
is invertible with the inverse given by 
\begin{equation}
\label{eq:defEP} 
\mathscr P^{-1} = \mathscr E := \begin{pmatrix} E & E_- \\
E_+ & E_{-+} \end{pmatrix} , \ \ \  E_{-+} =  \mathcal M^k_{ 0 \leftarrow 1}-I_{\mathbb C^2 }  ( \alpha, h ) . 
\end{equation}
The conclusion \eqref{eq:eqm}, and hence \eqref{eq:m2M}, then follow from 
\cite[Proposition 4.1]{ela}.

\noindent 3. To show invertibility of \eqref{eq:defRpm} it is enough to construct an inverse. The operator
$ D_k ( \alpha, h ) $ is a Fredholm operator and as 
\[  \mathscr P^t_k ( \alpha, h )   := \begin{pmatrix} (i/h ) D_k ( \alpha, h )  & (1-t )R_- \\
(1-t ) R_+ & t I_{\mathbb C^2}  \end{pmatrix} : H^1 ( \mathbb R/\mathbb Z ; \mathbb C^2 ) \oplus \mathbb C^2 \to 
L^2 ( \mathbb R /\mathbb Z ; \mathbb C^2 ) \oplus \mathbb C^2 , \]
are finite rank perturbations of a Fredholm operator (with $ D_k ( \alpha, h ) $ as the only non-zero entry), they are also Fredholm operators (see for instance 
\cite[\S 2.5]{HNotes} for basic properties of those operators). Since $ D_k ( 0, k ) $ is invertible for 
$ k/h \notin 2 \pi  \mathbb Z $, it follows that $ \mathscr P^k_1 $ is invertible. Hence continuity of the index
shows that the index of $ \mathscr P $ is $ 0 $ and the right inverse is automatically the left inverse.

We construct the right inverse explicitly as follows. We first consider the problem
\begin{equation} 
\label{eq:Gr1}   ( i/h) D_k  \mathbf u ( x)  + \chi'_- ( x)  \mathcal M^k _{ x\leftarrow 0 }  u_- = 0, \ \ \  \mathbf u ( 0 ) = v_+ \in \mathbb C^2 . \end{equation} 
We can define multivalued matrix valued functions on $ \mathbb R / \mathbb Z $ as follows:
\begin{equation} 
\label{eq:defIpm}   I_\pm  ( [x ] ) =  \mathcal M^k_{ x \leftarrow 0  } , \ \  \pm x \geq - \tfrac14 , \end{equation}
where $ x \mapsto [ x ] $ is the map
$ \mathbb R \to \mathbb R/\mathbb Z $. We can think of $ I_+ ( x )  $ as solving the equation forward from  $ 0 $ and
$ I_- ( x ) $  backward from $ 0$. The function  
$ \chi ( x ) I_+ ( x )  v_+ , ( 1 - \chi ( x ) )  I_- ( x ) v_+ $ are now well defined as functions on the circle 
(we drop $ [ x ] $ notation when there is no ambiguity). We can then put
\[ \mathbf u ( x ) = \chi ( x ) I_+ ( x )  v_+  + ( 1 - \chi ( x ) )  I_- ( x ) v_+ \in H^1 ( \mathbb R/\mathbb Z; \mathbb C^2 ) , \]
so that
\[  \begin{split} (i /h ) D_k \mathbf u & = (i/h) [ D_k , \chi ] I_+v_+ - (i/h ) [ D_k, \chi ] I_-  v = 
\chi' I_+  v_+ - \chi' I_-  v_+ \\
& = \chi'_- ( x )  ( I_+ ( x ) v_+  - I_- ( x ) v_+ ), \end{split} \]
where we used the fact that $ I_+ ( x ) v_+ = I_- ( x ) v_+ $ on the support of $ \chi_+' $ (near $ 0 $). On the other
hand for $ [ x ] $ on the support of $ \chi_-' $, say for $ x $ near $ [ \frac12 ] $, we have 
\[  \begin{split} \chi'_- ( [ x ] )  I_+ ( [ x ]  )  & = \chi'_- ( x ) \mathcal M^k_{ x \leftarrow 0 } = 
\chi_-' ( x) \mathcal M^k_{ x - 1 \leftarrow -1 }  = \chi_- ' ( x ) \mathcal M^k_{ x - 1 \leftarrow 0 } \mathcal 
M^k_{ 0 \leftarrow -1 } \\ & = \chi_- ' ( x ) I_- ( [ x ] ) \mathcal M^k_{ 1 \leftarrow 0 } .
\end{split}  \]
 Hence \eqref{eq:Gr1} is solved by putting
\begin{equation}
\label{eq:Ep}
\mathbf u = E_+ v_+ := ( \chi I_+ + ( 1- \chi ) I_- ) v_+, \ \ \ u_- = E_{-+} v_+ := ( \mathcal M^k_{ 0 \leftarrow 1}-I_{\mathbb C^2 }  ) v_+ .
\end{equation}
It remains to solve 
\[  ( i/h) D_k  \mathbf u ( x)  + \chi'_- ( x)  \mathcal M^k _{ x \leftarrow 0 }  u_- = v \in L^2 ( \mathbb 
R/ \mathbb Z; \mathbb C^2 ) , \ \ \  \mathbf u ( 0 ) = 0 , \]
and that is done similarly using the forward and backward solutions to the non-homogeneous problem. Since all that matters for us is the formula for $ E_{-+} $ in \eqref{eq:Ep}, we leave the details to the reader.  
This completes the proof of \eqref{eq:defEP} and as explained there of \eqref{eq:m2M}. 
\end{proof}
%%%%%%%%%%%%%%%%%%%%%%%%%%%%%%%%%%%%%%%%%%%%%%%%%%%%%%%%%%%%%%%%%%%%%%%%%%%%%%%%

%%%%%%%%%%%%%%%%%%%%%%%%%%%%%%%%%%%%%%%%%%%%%%%%%%%%%%%%%%%%%%%%%%%%%%%%%%%%%%%%

\subsection{Quantisation condition}
\label{s:qc}

Now, we assume that $\gamma$ is a Stokes loop and $ \Im \alpha \geq 0 $. We note that this assumption can be relaxed to $ \Im \alpha \geq - h^M $ for some (large) $ M $.

We will then
put  $z_0:=\gamma(0)$ in Proposition \ref{p:multi}, so that $\mathcal M_{z_0 + 1 \leftarrow z_0}=\mathcal M_{\gamma(1)\leftarrow  \gamma(0)}$.
Putting
\[
Z_h(\alpha): = \tfrac12 \tr\mathcal M_{\gamma(1)\leftarrow  \gamma(0)}( \alpha, h ) ,
\]
and using~\eqref{e:transition-det}, the eigenvalue equation can be written as 
\begin{equation}
  \label{e:eigeq}
0 = \tfrac12 e^{ - i k/h } \det (  \mathcal M_{z_0 + 1 \leftarrow z_0} ( \alpha , h ) -e^{ik/h} )  = \cos ( k/h )  -  Z_h(\alpha),
\end{equation}
and we want to find $ \alpha$'s for which  the right hand side vanishes.

Recalling~\eqref{e:psi-def}, we note that
\[
\psi_{\gamma,\alpha}(1)-\psi_{\gamma,\alpha}(0)=c_0\Im\alpha\quad\text{where }
c_0:=\int_0^1 W(z)\,dz>0.
\]
Now, by Lemma~\ref{l:transfer-approx} (recalling~\eqref{e:stokes-verified})
\[
\mathcal M_{\gamma(1)\leftarrow  \gamma(0)}=B_N(\gamma(1);\alpha,h)^{-1}
S_{\gamma,N}(\alpha,h)B_N(\gamma(0);\alpha,h)+\mathcal O_{\gamma,N}(h^{N+1}e^{c_0\Im\alpha/h}).
\]
By Lemma~\ref{l:WKB-periodic} we have
\[
B_N(\gamma(1);\alpha,h)=B_N(\gamma(0);\alpha,h),\quad
S_{\gamma,N}(\alpha,h)
=\begin{pmatrix}
e^{-i\alpha h^{-1} c(h/\alpha)} & 0 \\
0 & e^{i\alpha h^{-1}c(h/\alpha)}
\end{pmatrix}.
\]
Therefore, 
\begin{equation}
\label{eq:Zh2cos}
Z_h(\alpha)=\cos(\alpha h^{-1}c(h/\alpha))+R_h(\alpha),\quad
R_h(\alpha)=\mathcal O_{\gamma,N}(h^{N+1}e^{c_0\Im\alpha/h}).
\end{equation}
One observation is that for the eigenvalue equation to be solvable, we need
\begin{equation}
\label{eq:imagk}
|\Im k|\leq c_0|\Im\alpha| +C h.
\end{equation}
We can also assume that $ |\Re k | \leq \pi h $ since the spaces
$\mathcal V_{k+2\pi h,h}$ and $\mathcal V_{k,h}$ defined in~\eqref{e:V-k-h}
are the same.

We first consider the simplified equation obtained from \eqref{e:eigeq} and \eqref{eq:Zh2cos}:
\begin{equation}
\label{eq:cos2cos}  F_0 ( \alpha, k , h ) :=   \cos ( c ( h/\alpha )  \alpha / h) - \cos ( k/h ) = 0  \end{equation}
which holds when  
\[  c (h /\alpha )  \alpha = 2 \pi n h \pm k  , \ \ \ \  n \in \mathbb Z , \]
 noting that the roots are double
when $ k/h \in \pi  \mathbb Z $. The solutions 
satisfying \eqref{e:alpha-ass} 
have asymptotic expansions (where we use the fact that
$ c_1 = 0 $ in \eqref{e:WKB-periodic}) 
\begin{equation}
\label{eq:alpm}  \begin{split}  \alpha_\pm ( n , k , h ) & = c_0^{-1} ( 2 \pi n h \pm k ) + d_2 ( 2 \pi n h \pm k )^{-1} h^2  \\ 
& \ \ \ \ + d_3 
( 2 \pi n h \pm k )^{-2 } 
h^3 + \cdots  , \ \ \ d_2 = - c_2, \cdots  \end{split} 
\end{equation}

We can now compare these solutions to the solutions of the equation \eqref{e:eigeq},
for  $ \Im \alpha \geq - h^M $, using \eqref{eq:Zh2cos} and Rouch\'e's theorem
 applied to a contour consisting of $h^N$-sized disks centered at $\alpha_\pm(n,k,h)$. The symmetry $ \alpha \to 
- \alpha $ then gives the result for $ \Im \alpha \leq 0 $ as well. 
To be consistent with this section, in comparison with Theorem \ref{t:3} we assume that $ \int_\gamma W ( z ) dz > 0$, that is $ \arg W_0 = 0 $.
Rescaling, $ \alpha $ this can be removed.

\begin{theo}
\label{t:4}
Under the assumptions of Theorem \ref{t:3} and when $ \arg W_0 = 0 $,
there exist a contant $ C_0 $ and maps
\[  \widetilde \alpha_\pm  :  \mathbb Z \times \mathbb C    \to \mathbb C , \]
such that with $ \alpha_\pm ( n, k ) := \alpha_\pm ( n, k ,1 ) $ (see \eqref{eq:alpm}),  
\[   \widetilde \alpha_\pm  ( n , k ) - \alpha_\pm ( n , k  ) = \mathcal O (  | 2 \pi n \pm k |^{-\infty } ) , \]
and for $ | \alpha |  \geq C_0 $, in the notation of
\eqref{eq:defmk}, 
\[ m ( \alpha, k ) = | \{ (n,  \varepsilon  ) : \widetilde \alpha_{ \varepsilon } ( n, k ) =  \alpha , \ \varepsilon = \pm 1  \} |. \] 
\end{theo}

\medskip\noindent {\sc Acknowledgements.}
We would like to thank Zhongkai Tao for help with the material in~\S \ref{s:rank2}, John Lott for the reference~\cite{daly}, and 
Zhenhao Li and Mengxuan Yang for comments on the first version of this paper. We are also grateful to the anonymous referee for many constructive suggestions. 

The authors were partially supported by the 
Simons Targeted Grant Award No. 896630 ``Moir\'e Materials Magic", while 
 the first named author was partially supported by NSF grant DMS-2400090 and a Simons Fellowship.
 Part of this work was completed while the first named author was in residence at the Mathematical Sciences Research Institute in Berkeley, California in Fall 2024, supported by NSF grant  DMS-1928930.

\end{document}